\newcommand{\im}{\mathrm{i}}
\newcommand{\R}{\mathbb{R}}
\newcommand{\C}{\mathbb{C}}
\newcommand{\defeq}{:=}
\newcommand{\cH}{\mathcal{H}}
\newcommand{\rH}{\mathrm{H}}
\newcommand{\rL}{\mathrm{L}}
\newcommand{\cL}{\mathcal{L}}
\newcommand{\xd}{\mathrm{d}}
\newcommand{\contb}{\mathrm{C_b}}
\newcommand{\contvi}{\mathrm{C_0}}
\newcommand{\toi}{\hookrightarrow}
\newcommand{\id}{\mathrm{id}}
\newcommand{\hr}{\mathrm{H}}
\newcommand{\sr}{\mathrm{S}}
\newcommand{\hrc}{\mathrm{H,c}}
\newcommand{\src}{\mathrm{S,c}}
\newcommand{\hrp}{\mathrm{H,p}}
\newcommand{\srp}{\mathrm{S,p}}
\newcommand{\hrcp}{\mathrm{H,cp}}
\newcommand{\hacp}{\mathrm{\hat{H},cp}}
\newcommand{\srcp}{\mathrm{S,cp}}
\newcommand{\st}{\mathrm{\tilde{S}}}
\newcommand{\stc}{\mathrm{\tilde{S},c}}
\newcommand{\stcp}{\mathrm{\tilde{S},cp}}
\newcommand{\bsf}{B}
\newcommand{\bst}{\mathcal{B}}
\newcommand{\ha}{\mathrm{\hat{H}}}
\newcommand{\hac}{\mathrm{\hat{H}},c}
\newcommand{\coh}{K}
\newcommand{\cohn}{\tilde{K}}
\newcommand{\cohh}{K^{\hr}}
\newcommand{\cohs}{K^{\sr}}
\newcommand{\cohnh}{\tilde{K}^{\hr}}
\newcommand{\cohns}{\tilde{K}^{\sr}}
\newcommand{\cohnsr}{k^{\sr}}
\newcommand{\coha}{\hat{K}}
\newcommand{\cohah}{\hat{K}^{\ha}}
\newcommand{\cohas}{\hat{K}^{\sr}}
\newcommand{\bsfa}{\hat{B}}
\newcommand{\bsta}{\hat{\mathcal{B}}}
\newcommand{\lino}{\mathrm{lin}}
\theoremstyle{definition}
\newtheorem{dfn}{Definition}[section]
\theoremstyle{plain}
\newtheorem{lem}[dfn]{Lemma}
\newtheorem{prop}[dfn]{Proposition}
\newtheorem{thm}[dfn]{Theorem}
\begin{document}

\begin{titlepage}
\title{\textbf{The Schrödinger representation\\ and its relation to\\ the holomorphic representation\\ in linear and affine field theory}}
\author{Robert Oeckl\footnote{email: robert@matmor.unam.mx}\\ \\
Centro de Ciencias Matemáticas,\\
Universidad Nacional Autónoma de México,\\
Campus Morelia, C.P.~58190, Morelia, Michoacán, Mexico}
\date{UNAM-CCM-2011-2\\ 23 September 2011\\ 10 September 2012 (v2)}

\maketitle

\vspace{\stretch{1}}

\begin{abstract}
We establish a precise isomorphism between the Schrödinger representation and the holomorphic representation in linear and affine field theory. In the linear case this isomorphism is induced by a one-to-one correspondence between complex structures and Schrödinger vacua. In the affine case we obtain similar results, with the role of the vacuum now taken by a whole family of coherent states. In order to establish these results we exhibit a rigorous construction of the Schrödinger representation and use a suitable generalization of the Segal-Bargmann transform. Our construction is based on geometric quantization and applies to any real polarization and its pairing with any Kähler polarization.

\end{abstract}

\vspace{\stretch{1}}
\end{titlepage}

\section{Introduction}

The Schrödinger representation and the holomorphic representation both play important roles in quantum field theory and quantization schemes leading to such. While the former naturally combines with the Feynman path integral, the latter has better analytic properties and is more closely related to the Fock-space point of view, preferred in operator approaches. In order to be able to use the advantages of both, it is important to understand precisely the relationship between them. The primary aim of the present paper is to contribute to this understanding.

In order to relate the two types of representation, we use geometric quantization. This allows to view both as special cases of a more general framework. Roughly speaking, the Hilbert spaces of both representations arise as subspaces of the same ``prequantum'' Hilbert space. ``Pairing'' via the inner product in this larger Hilbert space then allows to extract an isomorphism between the two representations. This works reasonably well for theories with finitely many degrees of freedom \cite{Woo:geomquant} and is particularly clear for linear theories. Then, holomorphic representations are parametrized by certain complex structures while there is just one Schrödinger representation. Moreover, for any such complex structure pairing leads to an isomorphism between the associated holomorphic representation and the Schrödinger representation. This isomorphism is a version of the Segal-Bargmann transform \cite{Bar:hilbanalytic}.

In field theory, i.e., when the space of solutions of the equations of motions (or phase space) is infinite-dimensional, the situation is considerably more complicated. Again, the theories best understood are linear theories. While the holomorphic representation is not much affected by the transition to infinitely many degrees of freedom, the Schrödinger representation shows qualitatively new features. The first is that the very definition of the Hilbert space of Schrödinger wave function now depends on an additional datum, the \emph{vacuum} (see e.g.\ \cite{Jac:schroedinger}). The second is that wave functions can no longer be regarded as functions on the natural configuration space, but need to be defined as functions on a larger space. Although it is known how this latter problem can be addressed in a universal fashion \cite{BaSeZh:algconstqft}, most treatments in the literature that do take it into account resolve it in a manner dependent on the structure of a specific (class of) field theories and underlying spacetimes, see e.g., \cite{GlJa:quantumphysics}. A rigorous (and universal) definition of the Schrödinger representation is thus an important aspect of our treatment.

Our principal interest is throughout the paper in the functional nature of the Schrödinger and holomorphic representation spaces and their relation arising from geometric quantization. (We use the customary terminology ``representation'' throughout, even when no action of an algebra on the space is considered.) Since the holomorphic Hilbert space is a reproducing kernel Hilbert space, it is naturally equipped with a family of coherent states, including a vacuum state. As already mentioned, the pairing in the finite-dimensional case is realized as a kind of Segal-Bargmann transform. (This connection is made precise in Appendix~\ref{sec:bst}.) After reviewing basics of geometric quantization in Section~\ref{sec:shgquant} and introducing the relevant structures for linear field theory in Section~\ref{sec:linftstruc} we present this in a suitable coordinate-free language in Section~\ref{sec:pisom}. In particular, we obtain a corresponding family of coherent states in the Schrödinger Hilbert space, which depends on the complex structure used to define the holomorphic representation. This leads to a correspondence between compatible complex structures and certain bilinear forms characterizing the Schrödinger vacuum (Section~\ref{sec:fdcorr}).

Generalizing the Schrödinger representation to the infinite-dimensional case in Section~\ref{sec:infinis} includes a generalization of the coherent states. These can now be used to \emph{define} the isomorphism to the corresponding holomorphic Hilbert space for which we refer to the treatment in \cite{Oe:holomorphic}. This is done in Section~\ref{sec:relholom}, including the generalization to the infinite-dimensional case of the exact correspondence between compatible complex structures and bilinear forms parametrizing Schrödinger vacua. We proceed in Section~\ref{sec:linobs} to consider the action of (mainly) linear observables as prescribed by geometric quantization and show that the constructed isomorphism is indeed one of representations (with the usual functional analytic caveats).

A considerable part of the paper (Section~\ref{sec:affine}) is then devoted to generalizing all obtained results from linear to affine field theory, i.e, where the space of solutions naturally is an affine space. The main reference for the holomorphic representation is here \cite{Oe:affine}. In Section~\ref{sec:abasic}
the basic structures are defined in the affine case, Section~\ref{sec:pisoma} discusses the pairing in the finite-dimensional case. The Schrödinger representation of affine field theory is constructed in Section~\ref{sec:infinisa} and related to the holomorphic representation in Section~\ref{sec:relholoma}. A key difference to the linear theory arises from the absence of any naturally preferred vacuum state. Its role may be seen to be taken instead by the full family of coherent states. Nevertheless, a natural isomorphism between the Schrödinger and the holomorphic representation emerges, based on the same correspondence between complex structures and bilinear forms as for the linear theory. Finally, the representation of affine observables is considered in Section~\ref{sec:aobs}, leading to an isomorphism of representations.

The present paper is set up in such a way that it may alternatively be read as aimed primarily at a constructive definition of the Schrödinger representation, based solely on the holomorphic representation with its coherent states as well as the pairing, and without any representation theoretic input. For this reason we also include an explicit proof of the denseness of the coherent states in the Schrödinger representation which is detailed in Appendix~\ref{sec:proof}.

The applicability of the paper's results to any real polarization in place of the Schrödinger representation is explained in Section~\ref{sec:realpol}. Section~\ref{sec:minimizing} explains how the ingredients to define the Schrödinger representation can be ``economized'' from an intrinsic point of view. In Section~\ref{sec:CCQ} we explore the relation to previous results that have been obtained concerning the explicit correspondence between complex structures and Schrödinger vacua. Indeed, concentrating on a particular class of linear theories in certain spacetimes, the authors in \cite{CCQ:schroefock,CCQ:schroecurv} derive the Schrödinger vacuum corresponding to a given complex structure as well as aspects of the representation theory of linear observables. We show how our approach reproduces and further illuminates their results.

A brief outlook is presented in Section~\ref{sec:outlook}.

\section{Aspects of geometric quantization}
\label{sec:shgquant}

Both the Schrödinger representation and the holomorphic representation arise as special cases in the framework of geometric quantization. The latter is thus well suited to study the relation between the two. We shall recall in the following how the two representations arise in geometric quantization in the context of Lagrangian field theory. Note that the following account is imprecise and inaccurate in various respects, but plays a motivational role for the remainder of this paper. For a much more complete exposition of geometric quantization, see \cite{Woo:geomquant}. We largely follow conventions and notations as in \cite{Oe:holomorphic,Oe:affine}.

\subsection{Ingredients from Lagrangian field theory}

Suppose a classical field theory is defined on a smooth spacetime manifold $T$ of dimension $d$ and determined by a first order Lagrangian density $\Lambda(\varphi,\partial\varphi,x)$ with values in $d$-forms on $T$. Here $x\in T$ denotes a point in spacetime, $\varphi$ a field configuration at a point and $\partial\varphi$ the spacetime derivative at a point of a field configuration. We shall assume that the configurations are sections of a trivial vector bundle over $T$. We shall also assume in the following that all fields decay sufficiently rapidly at infinity where required (i.e., where regions or hypersurfaces are non-compact). Given a spacetime region $M$ and a field configuration $\phi$ in $M$ its \emph{action} is given by
\begin{equation}
 S_M(\phi)\defeq \int_M \Lambda(\phi(\cdot),\partial\phi(\cdot),\cdot) .
\end{equation}
Given a hypersurface $\Sigma$ we denote by $A_\Sigma$ the space of (germs of) solutions of the Euler-Lagrange equations in a neighborhood of $\Sigma$. The \emph{symplectic potential} is then the one-form on $A_\Sigma$ defined as\footnote{The negative sign in definition (\ref{eq:sympot}) might seem unusual. In most quantization schemes it is in fact irrelevant as the orientation of $\Sigma$ can be chosen at will. The sign is relevant, however, in quantizations targeting the general boundary formulation of quantum theory \cite{Oe:gbqft}. We put it here to ensure compatibility with \cite{Oe:affine}.}
\begin{equation}
 (\theta_{\Sigma})_{\phi}(X)\defeq -\int_\Sigma X^a \left.\partial_\mu\lrcorner\frac{\delta \Lambda}{\delta\, \partial_\mu\varphi^a}\right|_\phi .
\label{eq:sympot}
\end{equation}
Here $\phi\in A_\Sigma$ while $X$ is a tangent vector to $\phi$, i.e., an element of the space $T_\phi A_\Sigma$ of solutions linearized around $\phi$.
Also associated with a hypersurface is the \emph{symplectic form}, the two-form on $A_\Sigma$ given by the exterior derivative of the symplectic potential, 
\begin{equation}
(\omega_\Sigma)_\phi(X,Y)  \defeq(\xd\theta_\Sigma)_\phi(X,Y) .
\label{eq:sympl}
\end{equation}

In geometric quantization (as in most quantization schemes), the symplectic structure (\ref{eq:sympl}) is an essential ingredient. It serves in particular to define a correspondence principle between classical and quantum observables, relating the Poisson bracket (arising from the symplectic structure) between the former to commutators between the latter. Moreover, in the flavors of geometric quantization we will consider, also the symplectic potential (\ref{eq:sympot}) plays a role. A Hilbert space and operators on it are constructed based on the data $(A_\Sigma,\theta_\Sigma,\omega_\Sigma)$, as well as additional data to be specified. Since the mentioned data are associated to a hypersurface $\Sigma$, so will be the constructed Hilbert space and operators.
This may indeed be desirable if quantization is to target the general boundary formulation of quantum theory, where a Hilbert space is associated to each hypersurface \cite{Oe:gbqft}. On the other hand, in more conventional schemes the aim of quantization is to construct a single Hilbert space for the quantum system. The usual way to deal with the hypersurface dependence in field theory is then roughly as follows. In a globally hyperbolic spacetime all the spaces $A_\Sigma$ of solutions for spacelike Cauchy hypersurfaces $\Sigma$ can be identified with a global space of solutions $A$. Moreover, the associated symplectic forms $\omega_\Sigma$ all give rise to the same symplectic form $\omega$ on this global space. (This is not true for the symplectic potential though.)

Whatever the precise context, we shall drop hypersurface indices in the following as we will be interested in the quantization problem for a single triple $(A,\theta,\omega)$ of space of solutions, symplectic potential and symplectic form.

\subsection{Ingredients from geometric quantization}

Geometric quantization of a classical phase space $A$ with symplectic two-form $\omega$ proceeds in two steps: A hermitian line bundle $B$, the \emph{prequantum bundle} is constructed over $A$, equipped with a connection $\nabla$ whose curvature is given by the symplectic form $\omega$. The \emph{prequantized Hilbert space} $H$ is then given by square-integrable sections of $B$ with respect to a measure $\mu$ that is invariant under symplectic transformations. The inner product between sections $v',v$ is thus,
\begin{equation}
\langle v',v\rangle=\int (v'(\eta),v(\eta))_\eta\,\xd\mu(\eta),
\label{eq:pqip}
\end{equation}
where $(\cdot,\cdot)_\eta$ denotes the hermitian inner product on the fiber over $\eta\in A$. Note that a symplectic potential, i.e., a one-form $\theta$ over $A$ such that $\omega=\xd\theta$ gives rise to a trivialization of the bundle $B$ through the choice of a special section $s:A\to B$ that satisfies
\begin{equation}
 \nabla_X s=-\im\,\theta(X)\cdot s
\label{eq:trivsec}
\end{equation}
for all vector fields $X$ on $A$. Any other section of $B$ can then be obtained as $\psi s$, where $\psi$ is a complex valued function on $A$. We then have
\begin{equation}
 \nabla_X (\psi s)=(-\im\, \theta(X)\cdot\psi +\xd\psi(X))\, s .
\label{eq:trivid}
\end{equation}
 Moreover, by adjusting the overall normalization of $s$ if necessary we can arrange
\begin{equation}
 (s(\eta),s(\eta))_\eta=1 \qquad \forall \eta\in A .
\label{eq:normsec}
\end{equation}
The inner product (\ref{eq:pqip}) may then be written as,
\begin{equation}
\langle \psi' s, \psi s\rangle=\int \overline{\psi'(\eta)} \psi(\eta)\,\xd\mu(\eta) .
\label{eq:trivip}
\end{equation}

To a classical observable $F:A\to\R$ is associated its Hamiltonian vector field $X_F$ on $A$ determined by the equation
\begin{equation}
 X_F \lrcorner\, \omega=-\xd F.
\label{eq:hvectobs}
\end{equation}
Geometric quantization assigns to $F$ the operator $\check{F}:H\to H$ on the prequantum Hilbert space given by,
\begin{equation}
 \check{F} v \defeq -\im\, \nabla_{X_F} v + F v .
\label{eq:gquantobs}
\end{equation}
The quantum observables constructed in this way can then be seen to satisfy the Dirac quantization conditions. In particular, this means that if $F$ is the constant function with value $1$, then $\check{F}$ is the identity operator. Moreover, the commutator of quantum observables is related to the quantization of the Poisson bracket of the corresponding classical observables. Concretely, for two classical observables $F,G$ we get,
\begin{equation}
 \check{F}\check{G}-\check{G}\check{F}= -2\im(\omega(X_F,X_G))\check\, .
\label{eq:crobs}
\end{equation}
We also remark that if there is a symplectic potential $\theta$ with associated special section $s$ satisfying (\ref{eq:trivsec}) we can rewrite (\ref{eq:gquantobs}) for functions $\psi$ on $A$ as,
\begin{equation}
 \check{F} (\psi s) = (-\theta(X_F)\cdot \psi -\im\, \xd\psi(X_F) + F \psi) s .
\label{eq:gquantobstriv}
\end{equation}

While the prequantum Hilbert space $H$ represents only an intermediate construction, the actual Hilbert space of states $\cH$ is obtained by a suitable restriction of $H$ trough a \emph{polarization}. This is the second step. A polarization consists roughly of a choice of Lagrangian subspace $P_\eta$ of the complexified tangent space $T_\eta A^\C$ for each point $\eta\in A$. One then defines \emph{polarized sections} of $B$ to be those $v:A\to B$ satisfying
\begin{equation}
\nabla_{\overline{X}} v=0,
\label{eq:poleq}
\end{equation}
where $X$ is a complex vector field valued at each point $\eta\in A$ in the polarized subspace $P_\eta\subseteq T_\eta A^\C$. Here $\overline{X}$ denotes the complex conjugation of $X$. A possibly complex one-form $\theta$ on $A$ such that $\xd\theta=\omega$ and such that
\begin{equation}
\theta(\overline{X})=0
\label{eq:adaspot}
\end{equation}
for all complex vector fields $X$ of this type is called a symplectic potential \emph{adapted to the polarization}. Given such an adapted symplectic potential and an associated section $s:A\to B$ satisfying (\ref{eq:trivsec}) the condition (\ref{eq:poleq}) can be rewritten for sections $v=\psi s$ as a condition on the admissible functions $\psi:A\to\C$,
\begin{equation}
\xd \psi(\overline{X})=0 .
\label{eq:polfunc}
\end{equation}
Note that in the case where $\theta$ is complex $s$ cannot in general be chosen to satisfy the normalization condition (\ref{eq:normsec}) as well.

The restriction of $H$ to the polarized sections yields the Hilbert space $\cH$. An immediate problem that arises is that not all quantum observables defined via (\ref{eq:gquantobs}) leave the subspace $\cH\subset H$ invariant. Addressing this then requires further refinements of the geometric quantization scheme. However, for the purposes of the present paper we may ignore this problem since we shall be interested only in a limited class of observables which do leave $\cH$ invariant.

\subsection{Polarizations and pairings}
\label{sec:polpa}

We shall be interested in two types of polarizations: Real polarizations and Kähler polarizations. In the case of a real polarization the Lagrangian subspaces $P_\eta$ of the complexified tangent spaces $T_\eta A^\C$ for $\eta\in A$ arise simply as complexifications of Lagrangian subspaces of the real tangent spaces $T_\eta A$. The vector fields $X$ valued at each point in the polarized subspace and appearing in equations (\ref{eq:poleq}) and (\ref{eq:adaspot}) can then be taken to be real without loss of generality. Moreover, an adapted symplectic potential will be real. The real polarization of particular interest in the present paper is the Schrödinger polarization. This polarization is defined for a point $\eta\in A$ by the subspace $M_\eta\subset T_\eta A$, which is spanned by the ``momentum'' directions. In a field theory context this means roughly the directions spanned by the derivative of the field perpendicular to the hypersurface. The symplectic potential adapted to the Schrödinger polarization is precisely the symplectic potential exhibited in formula (\ref{eq:sympot}). As is easily seen from the formula, this symplectic potential depends indeed only on the vector directions $X^a$ representing field values and not on derivatives, thus satisfying condition (\ref{eq:adaspot}).

In the holomorphic case, the polarization is induced by a complex structure $J_\eta$ on each tangent space $T_\eta A$, which is at the same time a symplectic transformation. Then, $\phi\mapsto \frac{1}{2}(\phi-\im J_\eta \phi)$ projects onto the polarized subspace $P_\eta\subseteq T_\eta A^\C$. At least locally, there exists then a \emph{Kähler potential} $K:A\to\R$ and an adapted complex symplectic potential $\Theta$ such that
\begin{equation}
 \Theta=-\im\sum_i \frac{\partial K}{\partial z_i}\xd z_i,
\label{eq:ksympot}
\end{equation}
where $\{z_i\}$ are local holomorphic coordinates with respect to the complex structure $J$. We can choose a (local) section $u$ of $B$ satisfying (\ref{eq:trivsec}) with respect to the complex one-form $\Theta$ to trivialize $B$. Then, general sections of $B$ can be obtained as $\psi u$ with $\psi$ a complex valued function on $A$. Moreover, equation (\ref{eq:polfunc}) translates to the condition that $\psi$ is a \emph{holomorphic} function on $A$. Since $\Theta$ is complex, the section $u$ cannot be normalized in analogy to (\ref{eq:normsec}). However, it can be related to the section $s$ that satisfies (\ref{eq:trivsec}) with respect to a given real symplectic potential $\theta$ as well as (\ref{eq:normsec}). Indeed, let $\alpha$ be the complex function on $A$ such that $u=\alpha s$. Then, we can use (\ref{eq:trivip}) to write the inner product on $\cH$ as follows,
\begin{equation}
 \langle \psi' u, \psi u\rangle=\int \overline{\psi'(\eta)} \psi(\eta)\, |\alpha(\eta)|^2\,\xd\mu(\eta) .
\label{eq:holip}
\end{equation}

Given two different polarizations, the fact that both polarized Hilbert spaces $\cH$, $\cH'$ arise as subspaces of the same prequantum Hilbert space $H$ gives us a means to ``compare'' them, by taking the inner product between a section in $\cH$ and a section in $\cH'$. If this gives rise to a non-degenerate pairing, we can use it to construct an isomorphism between $\cH$ and $\cH'$. This is precisely the device that we shall use to relate the Schrödinger and the holomorphic representations. The former is given by the Schrödinger polarization and the latter by a Kähler polarization. Concretely, suppose that $\theta$ is the symplectic potential adapted to the Schrödinger polarization and $\Theta$ is a symplectic potential adapted to a Kähler polarization. We define sections $s$ and $u$ of the prequantum bundle and a function $\alpha$ as explained above. Then for $\psi,\psi'$ complex valued functions on $A$ we get the pairing
\begin{equation}
 \langle \psi' s, \psi u \rangle=\int \overline{\psi'(\eta)} \psi(\eta) \alpha(\eta)\,\xd\mu(\eta) .
\label{eq:pairing}
\end{equation}

\section{Linear field theory}

\subsection{Basic structures}
\label{sec:linftstruc}

We specialize in this Section to the first type of field theory of principal interest in the present paper: \emph{linear field theory}. At the same time we make precise in this context some of the corresponding notions introduced in Section~\ref{sec:shgquant}.

Firstly, we suppose that the space of solutions, which we shall denote in the linear case by $L$ rather than by $A$, is a real vector space. This allows to identify canonically all the tangent spaces $T_\xi L$ with $L$ itself. The symplectic potential may then be seen as a map $[\cdot,\cdot]:L\times L\to\R$, linear in the second argument, where we use the notation $[\xi,\tau]\defeq \theta_\xi(\tau)$. Our second key assumption is that the symplectic potential is equivariant with respect to vector addition, i.e., $[\cdot,\cdot]$ is linear also in the first argument. This implies in turn that the symplectic form is independent of the base point and may be viewed as an anti-symmetric bilinear map $\omega:L\times L\to\R$ given in terms of the symplectic potential as follows,
\begin{equation}
 \omega(\xi,\xi')=\frac{1}{2}[\xi,\xi']-\frac{1}{2}[\xi',\xi] \qquad\forall\xi,\xi'\in L .
\label{eq:sptosf}
\end{equation}
We shall assume moreover that the symplectic form is non-degenerate.

To consider the Schrödinger representation it is convenient to define the subspaces $M$ and $N$ of $L$ as follows:
\begin{equation}
 M\defeq\{\tau\in L: [\xi,\tau]=0\; \forall \xi\in L\}\qquad
 N\defeq\{\tau\in L: [\tau,\xi]=0\; \forall \xi\in L\} .
\label{eq:defmn}
\end{equation}
It follows directly from the definitions that $M$ and $N$ are isotropic subspaces of $L$. Also, using the non-degeneracy of the symplectic form, it follows that $M\cap N=\{0\}$. We shall make the additional assumption that $M$ and $N$ together generate $L$ as a vector space, so we even have $L=M\oplus N$. It follows then, that $M$ and $N$ are also coisotropic and thus Lagrangian subspaces of $L$.
Supposing that $[\cdot,\cdot]$ is the symplectic potential adapted to the Schrödinger polarization, i.e.\ given by (\ref{eq:sympot}) in field theory, then $M$ is precisely the real subspace of $L$ defining the Schrödinger polarization. Recall that there is a special section $s$ of the prequantum bundle $B$ satisfying (\ref{eq:trivsec}) with respect to $[\cdot,\cdot]$ as well as (\ref{eq:normsec}). From here onwards we fix this choice of $s$. The sections defining the Schrödinger polarized Hilbert space then take the form $\psi s$, where $\psi$ is a complex function on the quotient space of ``field configurations on the hypersurface''
\begin{equation}
 Q\defeq L/M.
\label{eq:defq}
\end{equation}
We denote the quotient map $L\to Q$ by $q$. We also remark that due to the definition of $M$, the symplectic potential $[\cdot,\cdot]$ may be viewed alternatively as a map $L\times Q\to\R$. We shall occasionally make use of this fact without making the distinction explicit in the notation. Furthermore, restricting $[\cdot,\cdot]$ to a map $M\times Q\to\R$ makes it non-degenerate. (To see this, identify $Q$ and $N$, note that $[\cdot,\cdot]$ coincides with $2\omega$ on $M\times N$, and use that $M$ and $N$ are both coisotropic subspaces of $L$.)

For the holomorphic representation, i.e., a Kähler polarization, we need as an additional ingredient apart from the classical data already described a \emph{complex structure} on the tangent spaces of $L$. Since these tangent spaces are all canonically identified with $L$ itself and the symplectic form is independent of the base point it will suffice to consider a single complex structure on $L$. Thus, the complex structure is a linear map $J:L\to L$ satisfying $J^2=-\id$ and $\omega(J(\cdot), J(\cdot))=\omega(\cdot,\cdot)$. This gives rise to the symmetric bilinear form $g:L\times L\to\R$ by
\begin{equation}
 g(\tau,\xi)\defeq 2\omega(\tau,J \xi) \qquad\forall\tau,\xi\in L .
\label{eq:kmetric}
\end{equation}
We shall assume that this form is positive definite and makes $L$ into a real separable Hilbert space. It is then true that the sesquilinear form
\begin{equation}
\{\tau,\xi\}\defeq g(\tau,\xi)+2\im\omega(\tau,\xi) \qquad\forall\tau,\xi\in L
\label{eq:kcip}
\end{equation}
makes $L$ into a complex separable Hilbert space, where multiplication with $\im$ is given by applying $J$. Note that by construction $J$ is continuous in the topology defined by (\ref{eq:kmetric}) or equivalently by (\ref{eq:kcip}). Moreover, by combining $J$ with the Riesz representation theorem, the continuous real-linear maps $L\to\R$ are in one-to-one correspondence with elements $\xi\in L$ via
\begin{equation}
 \tau\mapsto \omega(\xi,\tau) .
\end{equation}
We also require that the symplectic potential $[\cdot,\cdot]$ is continuous in the topology defined on $L$.

As discussed above, the complex structure $J$ defines a polarization and implies the existence of a Kähler potential. The Kähler potential is not unique, but a natural choice is given by,
\begin{equation}
 K(\xi)\defeq\frac{1}{2}g(\xi,\xi) .
\end{equation}
The adapted symplectic potential $\Theta:L\times L\to\C$ from (\ref{eq:ksympot}) is then,
\begin{equation}
 \Theta(\tau,\xi)=-\frac{\im}{2}\{\tau,\xi\} .
\label{eq:kpotlin}
\end{equation}
Define the complex function $\alpha$ on $L$ by
\begin{equation}
\alpha(\xi)\defeq\exp\left(\frac{\im}{2} [\xi,\xi]-\frac{1}{4} g(\xi,\xi)\right) .
\label{eq:alphagq}
\end{equation}
As is easily verified this satisfies
\begin{equation}
 \xd\alpha=-\im\alpha(\Theta-\theta) .
\end{equation}
Thus, it follows from (\ref{eq:trivid}) that the section $u\defeq \alpha s$ of $B$ satisfies (\ref{eq:trivsec}) with respect to $\Theta$.

In order to compare the two types of polarizations it is useful to introduce a few further structures. Since $M$ is a Lagrangian subspace of $L$, $M\oplus J M$ is an orthogonal decomposition of $L$ as a real Hilbert space. In particular, $J M$ is another complement of $M$ in $L$, which does not necessarily coincide with $N$, compare (\ref{eq:defmn}). Moreover, $J M$ is also Lagrangian subspace of $L$. We equip the quotient space $Q$ defined in (\ref{eq:defq}) with the quotient norm, making it into a real Hilbert space. Let $j$ be the unique linear map $Q\to L$ such that $q\circ j=\id_Q$ and such that $j(Q)\subseteq J M$. Then, $j$ is in fact an isometric isomorphism of real Hilbert spaces and $j(Q)=J M$.

\subsection{Pairing and isomorphism in finite dimensions}
\label{sec:pisom}

In the case where the space of solutions is linear and finite-dimensional 
it is known \cite{Woo:geomquant} how to implement the remaining ingredients of the geometric quantization program, notably the inner product formulas (\ref{eq:trivip}) and (\ref{eq:holip}) as well as the pairing (\ref{eq:pairing}). We shall consider this here from a point of view where the holomorphic representation is the primary object and properties of the Schrödinger representation are derived. We shall follow the setup of the holomorphic representation as presented in \cite{Oe:holomorphic}.

For the Schrödinger representation, the polarized ``wave functions'' $\psi$ multiplying the section $s$ depend on the quotient space $Q$ rather than the full space of solutions $L$. Thus, in order for the inner product (\ref{eq:trivip}) to be normalizable the integral has to be performed over $Q$ rather than over $L$. Then, $\mu$ is a Lebesgue measure on $Q$ which we shall denote by $\mu_Q$. It will turn out to be convenient to normalize $\mu_Q$ such that 
\begin{equation}
\int \exp\left(-g(j(\phi),j(\phi))\right)\,\xd\mu_Q(\phi) =1 .
\label{eq:muq}
\end{equation}
In the following, we shall denote by $\cH^{\sr}$ the complex Hilbert space $\rL^2(Q,\mu_Q)$ of complex square-integrable functions on $Q$ with respect to the measure $\mu_Q$. We denote its inner product by $\langle\cdot,\cdot\rangle^{\sr}$.

For the holomorphic representation, the inner product (\ref{eq:holip}) does make immediate sense with $\mu$ a Lebesgue measure that we shall denote by $\mu_L$. To see this, note the factor $|\alpha(\xi)|^2$ appearing in the integral is a Gaussian given by
\begin{equation}
|\alpha(\xi)|^2=\exp\left(-\frac{1}{2} g(\xi,\xi)\right) .
\end{equation}
We choose the normalization of $\mu_L$ such that
\begin{equation}
\int |\alpha(\xi)|^2\,\xd\mu_L(\xi) =1 .
\end{equation}
It is convenient to define a new measure $\nu_L$ on $L$ given by
\begin{equation}
 \xd\nu_L\defeq \exp\left(-\frac{1}{2} g(\xi,\xi)\right)\xd\mu_L .
\label{eq:fdhmeasure}
\end{equation}
In the following we shall denote by $\cH^{\hr}$ the complex Hilbert space $\rH^2(L,\nu_L)$ of holomorphic square-integrable functions on $L$ with respect to the measure $\nu_L$. We denote its inner product by $\langle\cdot,\cdot\rangle^{\hr}$.

Following Section~\ref{sec:polpa} the idea is now to relate the two Hilbert spaces $\cH^{\sr}$ and $\cH^{\hr}$ using the pairing (\ref{eq:pairing}) and extract from it an isomorphism. However, it is not quite straightforward to define the integral in (\ref{eq:pairing}) for general elements of $\cH^{\sr}$ and $\cH^{\hr}$. To circumvent this difficulty we restrict to a dense subspace where the definition of the integral presents no problem. It turns out to be convenient to choose the subspace spanned by the coherent states to this end. Following the conventions in \cite{Oe:holomorphic}, we recall that the standard coherent state $\coh_\tau$ associated to a solution $\tau\in L$ is represented by the wave function $\cohh_\tau$ in $\cH^{\hr}$ given as,
\begin{equation}
 \cohh_{\tau}(\xi)=\exp\left(\frac{1}{2} \{\tau,\xi\}\right) .
\label{eq:hcoh}
\end{equation}
Crucially, these coherent states satisfy the reproducing property
\begin{equation}
\langle\cohh_{\tau},\psi\rangle^{\hr}=\psi(\tau)
\label{eq:cohrp}
\end{equation}
for all $\psi\in\cH^{\hr}$. At a later point, it will be convenient to consider also the normalized versions $\cohn_\tau$ of the coherent states with wave functions given by
\begin{equation}
 \cohnh_{\tau}(\xi)=\exp\left(\frac{1}{2} \{\tau,\xi\} - \frac{1}{4} g(\tau,\tau)\right) .
\label{eq:hncoh}
\end{equation}
We recall the inner product between coherent states,
\begin{equation}
 \langle \cohh_\tau,\cohh_{\tau'}\rangle^{\hr}=\exp\left(\frac{1}{2}\{\tau',\tau\}\right) .
\label{eq:ipcoh}
\end{equation}
We shall denote the subspace of $\cH^{\hr}$ spanned by coherent states by $\cH^{\hrc}$. Recall that $\cH^{\hrc}$ is dense in $\cH^{\hr}$. (In \cite{Oe:holomorphic} this is Proposition~3.12.)

In order to describe the isomorphism between the Hilbert spaces of the different representations it is convenient to introduce the continuous function $\bsf:L\times Q\to\C$ given as follows,
\begin{multline}
 \bsf(\xi,\phi)\defeq \exp\left(\{j(\phi),\xi\}-\frac{\im}{2} [j(\phi),j(\phi)]-\frac{1}{2} g(j(\phi),j(\phi)) \right. \\
\left.
+\frac{1}{4} g(\xi,\xi) -\frac{1}{2} \{j\circ q(\xi),\xi\}\right) .
\label{eq:sbt}
\end{multline}
Note that $\bsf$ is holomorphic in its first argument. The following Proposition provides a precise implementation of the pairing (\ref{eq:pairing}) and a description of the induced isomorphism. Compare in particular the right hand sides of (\ref{eq:ppropf}) and (\ref{eq:ppropfinv}) with (\ref{eq:pairing}).

\begin{prop}
\label{prop:bstfc}
There is a subspace $\cH^{\src}\subseteq \cH^{\sr}$ and an isometric isomorphism $\bst:\cH^{\src}\to \cH^{\hrc}$ with the following properties:
\begin{align}
 (\bst\psi)(\xi) & = \int  \psi(\phi) \bsf(\xi,\phi)\,\xd\mu_Q(\phi) && \forall \psi\in\cH^{\src} , \label{eq:deff} \\
 (\bst^{-1}\psi)(\phi) & = \int \psi(\xi) \overline{\bsf(\xi,\phi)}\,\xd\nu_L(\xi) &&
\forall \psi\in \cH^{\hrc}, \label{eq:deffinv} \\
 \langle \psi', \bst\psi \rangle^{\hr} & = \int \overline{\psi'(\xi)}\, \psi(q(\xi))\, \overline{\alpha(\xi)} \,\xd\mu_L(\xi) && \forall \psi'\in \cH^{\hrc}\, \forall \psi\in\cH^{\src}, \label{eq:ppropf} \\
 \langle \psi', \bst^{-1}\psi \rangle^{\sr} & = \int \overline{\psi'(q(\xi))}\, \psi(\xi)\, \alpha(\xi) \,\xd\mu_L(\xi) && \forall \psi'\in \cH^{\src}\, \forall \psi\in\cH^{\hrc} . \label{eq:ppropfinv}
\end{align}
Also, the holomorphic coherent state wave function $\cohh_\tau$ associated with $\tau\in L$ is mapped under $\bst^{-1}$ to the following wave function:
\begin{equation}
 \cohs_{\tau}(\phi)\defeq
 (\bst^{-1} \cohh_{\tau})(\phi)  = \overline{B(\tau,\phi)} .
\label{eq:defcohs}
\end{equation}
\end{prop}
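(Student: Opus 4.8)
The plan is to treat the coherent states as the fundamental generators and to \emph{define} $\bst$ by its action on them, reading (\ref{eq:defcohs}) as the defining prescription rather than a derived identity. Concretely, I would set $\cohs_\tau\defeq\overline{\bsf(\tau,\cdot)}$ for each $\tau\in L$, first checking that $\cohs_\tau\in\cH^{\sr}$: since $\bsf$ is holomorphic in its first argument and, as a function of $\phi$, the real part of the exponent in (\ref{eq:sbt}) equals $\mathrm{Re}\,\{j(\phi),\tau\}-\tfrac12 g(j(\phi),j(\phi))+\mathrm{const}$, which is a negative-definite quadratic plus a linear term, $|\cohs_\tau|$ is Gaussian-decaying and hence square-integrable against $\mu_Q$. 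I would then let $\cH^{\src}$ be the linear span of the $\cohs_\tau$ inside $\cH^{\sr}$ and declare $\bst\cohs_\tau\defeq\cohh_\tau$, extended linearly. With this definition (\ref{eq:defcohs}) holds by construction, and everything else is a verification.

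The crux is to show that $\bst$ is a well-defined isometry onto $\cH^{\hrc}$. Since both subspaces are spanned by families indexed by $L$ and $\bst$ matches these indices, it suffices to prove that the two Gram kernels coincide, i.e.
\[
 \langle\cohs_\xi,\cohs_\tau\rangle^{\sr}=\langle\cohh_\xi,\cohh_\tau\rangle^{\hr}=\exp\left(\tfrac12 \{\tau,\xi\}\right),
\]
the last equality being (\ref{eq:ipcoh}). This reduces the whole construction to a single Gaussian integral over $Q$,
\[
 \langle\cohs_\xi,\cohs_\tau\rangle^{\sr}=\int \bsf(\xi,\phi)\,\overline{\bsf(\tau,\phi)}\,\xd\mu_Q(\phi).
\]
To evaluate it I would expand the exponents from (\ref{eq:sbt}), use the orthogonal decomposition $L=M\oplus JM$ together with the facts that $j$ is an isometric isomorphism of $Q$ onto $JM$ and that $j\circ q$ is the projection onto $JM$ along $M$, and complete the square in the variable $j(\phi)$. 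The purely quadratic part of the integrand is exactly $\exp(-g(j(\phi),j(\phi)))$, so the normalization (\ref{eq:muq}) of $\mu_Q$ makes the Gaussian integrate to $1$ and leaves precisely the residual factor $\exp(\tfrac12 \{\tau,\xi\})$ coming from the linear and constant terms. Once the kernels agree, well-definedness and injectivity follow because a linear combination of the $\cohs_\tau$ and the corresponding combination of the $\cohh_\tau$ have equal norms (both governed by the same positive-semidefinite Gram form), so one vanishes exactly when the other does; surjectivity onto $\cH^{\hrc}$ is built in, and isometry is the equality of kernels.

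For the integral representations I would verify each on coherent states and extend by linearity. Applied to $\psi=\cohs_\tau$, (\ref{eq:deff}) reads $(\bst\cohs_\tau)(\xi)=\int\overline{\bsf(\tau,\phi)}\,\bsf(\xi,\phi)\,\xd\mu_Q(\phi)=\langle\cohs_\xi,\cohs_\tau\rangle^{\sr}$, which equals $\cohh_\tau(\xi)$ precisely by the Gram computation above, so (\ref{eq:deff}) is that computation re-read. For (\ref{eq:deffinv}) I would observe that, for fixed $\phi$, the function $\bsf(\cdot,\phi)$ is a holomorphic element of $\cH^{\hr}$, so the right-hand side is simply $\langle\bsf(\cdot,\phi),\psi\rangle^{\hr}$; taking $\psi=\cohh_\tau$ and applying the reproducing property (\ref{eq:cohrp}) to $\bsf(\cdot,\phi)$ gives $\langle\bsf(\cdot,\phi),\cohh_\tau\rangle^{\hr}=\overline{\bsf(\tau,\phi)}=\cohs_\tau(\phi)$, as required. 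The pairing identities (\ref{eq:ppropf}) and (\ref{eq:ppropfinv}) I would again test on coherent states, where the left-hand sides are already known from the matching of the Gram kernels; the right-hand sides are Gaussian integrals over $L$ which factor along $L=M\oplus JM$, because $g(\xi,\xi)$ splits orthogonally and the dependence on $q(\xi)$ sees only the $JM$-component. The $M$-integral contributes a constant fixed by the $\mu_L$-normalization, while the $JM$-integral reduces, via $j$, to the same $Q$-Gaussian as before.

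The main obstacle is the explicit Gaussian integral underlying the Gram computation over $Q$: organizing the many terms of the exponent in (\ref{eq:sbt}) so that the quadratic part collapses to $\exp(-g(j(\phi),j(\phi)))$ and the remainder assembles into $\tfrac12 \{\tau,\xi\}$. This is where the particular combination of terms defining $\bsf$ — in effect reverse-engineered so that the transform is isometric — must be used in full, together with the compatibility relations among $g$, $\omega$, $\{\cdot,\cdot\}$, the projection $j\circ q$ and the splitting $M\oplus JM$. The factorization of the $L$-integrals in (\ref{eq:ppropf})--(\ref{eq:ppropfinv}) is a secondary technical point relying on the same orthogonality. Convergence and the interchange of finite sums with integrals are unproblematic throughout, precisely because we restrict to the coherent-state spans $\cH^{\src}$ and $\cH^{\hrc}$, where all integrands are genuine Gaussians.
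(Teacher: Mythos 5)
Your proposal is essentially the paper's own proof: both restrict everything to the coherent-state spans, read off $\cohs_\tau=\overline{\bsf(\tau,\cdot)}$ from the (extended) reproducing property, verify the isometry by matching Gram kernels via an explicit Gaussian integral, and check the four integral identities on coherent states only. The one extra point you supply --- that well-definedness and injectivity of the linear extension follow from equality of the positive-semidefinite Gram forms --- is a detail the paper glosses over and is worth having.

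One imprecision: for fixed $\phi$ the function $\xi\mapsto\bsf(\xi,\phi)$ is \emph{not} an element of $\cH^{\hr}$. Splitting $\xi$ along $M\oplus JM$, one finds $|\bsf(\xi,\phi)|^2\,\exp(-\tfrac12 g(\xi,\xi))$ decays as a Gaussian only in the $JM$-directions and grows like $\exp$ of a linear functional in the $M$-directions, so it is not $\nu_L$-integrable (this is why the paper calls the pairing a ``slightly extended'' inner product and why the inverse transform does not extend to all of $\cH^{\hr}$). Your conclusion for $\psi=\cohh_\tau$ is nevertheless correct, since the product $\cohh_\tau\cdot\overline{\bsf(\cdot,\phi)}$ \emph{is} $\nu_L$-integrable and the reproducing identity can be verified for it by direct Gaussian computation; you should phrase the step that way rather than by invoking membership of $\bsf(\cdot,\phi)$ in $\cH^{\hr}$.
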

\begin{proof}
We shall define $\bst^{-1}$ by the integral (\ref{eq:deffinv}). Setting $\psi$ to a coherent state clearly makes the integrand integrable and explicit calculation yields (\ref{eq:defcohs}). Alternatively, we may view the integral as the (slightly extended) inner product in $\cH^{\hr}$. Taking the complex conjugate of the reproducing property (\ref{eq:cohrp}) yields then immediately (\ref{eq:defcohs}). We define the subspace $\cH^{\src}\subseteq\cH^{\sr}$ to be that of linear combinations of wave functions $\cohs_\tau$ for $\tau\in L$. One may then verify (\ref{eq:deff}) for coherent states. We leave the explicit calculation to the reader. The comparison of the inner products in $\cH^{\src}$ and $\cH^{\hrc}$,i.e.,
\begin{equation}
 \langle \bst \cohs_{\tau}, \bst \cohs_{\tau'}\rangle^{\hr}
=\langle \cohs_{\tau}, \cohs_{\tau'}\rangle^{\sr}
\end{equation}
is also straightforward. Finally, it is sufficient to check (\ref{eq:ppropf}) and (\ref{eq:ppropfinv}) for coherent states.
\end{proof}

We thus have also derived the explicit form (\ref{eq:defcohs}) of the wave function of the standard coherent states in the Schrödinger representation. In particular, the wave function of the vacuum state $\coh_0$ is,
\begin{equation}
 \cohs_{0}(\phi)=\\ \exp\left(-\frac{1}{2} g(j(\phi),j(\phi)) +\frac{\im}{2} [j(\phi),j(\phi)] \right) .
\label{eq:swvac}
\end{equation}
As is customary in the Schrödinger representation, we may write wave functions $\psi\in\cH^{\sr}$ in a factorized form,
\begin{equation}
 \psi(\phi)=\widetilde{\psi}(\phi) \cohs_0(\phi) .
\label{eq:factswv}
\end{equation}
Defining on $Q$ the probability measure $\nu_Q$ via
\begin{equation}
\xd\nu_Q(\phi)\defeq \exp(-g(j(\phi),j(\phi)))\,\xd\mu_Q(\phi),
\label{eq:nuq}
\end{equation}
we have $\psi\in\cH^{\sr}$ if and only if $\widetilde{\psi}\in \rL^2(Q,\nu_Q)$. The inner product in $\cH^{\sr}$ may then be expressed as
\begin{equation}
\langle \psi', \psi\rangle^{\sr}= \int \overline{\widetilde{\psi}'(\phi)} \widetilde{\psi}(\phi)\, \xd\nu_Q(\phi) .
\end{equation}
It is now convenient to switch to the normalized coherent states. For these we shall write the factorization (\ref{eq:factswv}) as
\begin{align}
 \cohns_\tau(\phi) & =\cohnsr_\tau(\phi) \cohs_0(\phi),\quad\text{with}, \\
 \cohnsr_\tau(\phi) & =\exp\left(\{\tau,j(\phi)\}-\frac{1}{2} \{\tau,j\circ q(\tau)\}\right) . \label{eq:swvcoh}
\end{align}

Also in the Schrödinger representation, the space spanned by coherent states is dense in the Hilbert space of all states:
\begin{prop}
\label{prop:scohdensefd}
The subspace $\cH^{\src}$ is dense in $\cH^{\sr}$.
\end{prop}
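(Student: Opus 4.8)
The plan is to reduce the statement to the classical completeness of a Gaussian exponential family and then to invoke uniqueness of the Fourier transform. First I would use the factorization (\ref{eq:factswv}): multiplication by the nowhere-vanishing function $\cohs_0$ is an isometric isomorphism $\rL^2(Q,\nu_Q)\to\cH^{\sr}$, since $|\cohs_0|^2=\exp(-g(j(\cdot),j(\cdot)))$ is exactly the Radon--Nikodym density $\xd\nu_Q/\xd\mu_Q$ appearing in (\ref{eq:nuq}). Under this isomorphism the normalized coherent state $\cohns_\tau$ corresponds, by (\ref{eq:swvcoh}), to the function $\cohnsr_\tau\in\rL^2(Q,\nu_Q)$, and $\cohns_\tau$ spans the same subspace as $\cohs_\tau$. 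Hence $\cH^{\src}$ is dense in $\cH^{\sr}$ if and only if the linear span of $\{\cohnsr_\tau:\tau\in L\}$ is dense in $\rL^2(Q,\nu_Q)$, and for this it suffices to show that no nonzero $f\in\rL^2(Q,\nu_Q)$ is orthogonal to every $\cohnsr_\tau$.

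The key step is to recognize the relevant coherent states as Fourier characters, and here it is enough to restrict attention to $\tau\in M$. For such $\tau$ the $g$-orthogonality of $M$ and $J M$ forces $g(\tau,j(\phi))=0$, while $2\omega(\tau,j(\phi))=[\tau,j(\phi)]$ because $\tau$ lies in the right kernel of $[\cdot,\cdot]$; thus by (\ref{eq:swvcoh}) the function $\cohnsr_\tau(\phi)$ equals $\exp(\im\,[\tau,j(\phi)])$ up to a nonzero $\tau$-dependent constant. Since $\nu_Q$ is a centered Gaussian, each such function lies in $\rL^2(Q,\nu_Q)$. Crucially, the restriction of $[\cdot,\cdot]$ to $M\times Q$ is non-degenerate (as noted after (\ref{eq:defq})), so as $\tau$ ranges over $M$ the functional $\phi\mapsto[\tau,j(\phi)]$ ranges over \emph{all} continuous real-linear functionals on the finite-dimensional space $Q$.

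Therefore orthogonality of $f$ to all $\cohns_\tau$ with $\tau\in M$ amounts to $\int_Q \exp(-\im\,[\tau,j(\phi)])\,f(\phi)\,\xd\nu_Q(\phi)=0$ for every linear functional $[\tau,j(\cdot)]$ on $Q$. Because $\nu_Q$ is a finite measure and $f\in\rL^2(Q,\nu_Q)\subseteq\rL^1(Q,\nu_Q)$, the complex measure $f\,\xd\nu_Q$ is finite, and these identities say precisely that its Fourier transform vanishes identically on $Q\cong\R^n$. By uniqueness of the Fourier transform of finite measures this forces $f\,\xd\nu_Q=0$; since $\nu_Q$ has a strictly positive Gaussian density, we conclude $f=0$ almost everywhere, hence in $\rL^2(Q,\nu_Q)$, which proves the claimed density.

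I expect the only delicate point to be the measure-theoretic bookkeeping of the final paragraph: verifying that $f\,\xd\nu_Q$ is genuinely a finite complex measure so that Fourier uniqueness applies, and that positivity of the Gaussian density lets one pass from vanishing of the measure to vanishing of $f$ in $\rL^2$. Everything else is elementary finite-dimensional linear algebra together with Gaussian integrability; the complex structure enters only through the orthogonality and isotropy relations among $M$ and $J M$ recorded in Section~\ref{sec:linftstruc}. (The analogous infinite-dimensional statement, deferred to Appendix~\ref{sec:proof}, is genuinely harder precisely because Lebesgue measure and the Fourier-uniqueness argument are no longer available and must be replaced.)
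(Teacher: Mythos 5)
Your proof is correct, but it takes a genuinely different route from the paper's. You reduce density to the vanishing of the orthogonal complement and then recognize the coherent states with $\tau\in M$ as Fourier characters: for $\tau\in M$ one has $g(\tau,j(\phi))=0$, $[j(\phi),\tau]=0$ and $q(\tau)=0$, so (\ref{eq:swvcoh}) gives exactly $\cohnsr_\tau(\phi)=\exp(\im[\tau,\phi])$ (the ``$\tau$-dependent constant'' is in fact $1$); orthogonality of $f$ to all of these says the Fourier transform of the finite complex measure $f\,\xd\nu_Q$ vanishes on all of $Q^\star\cong\R^n$ (using non-degeneracy of $[\cdot,\cdot]$ on $M\times Q$), and Fourier uniqueness forces $f=0$. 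The paper works with the same subfamily $\{\cohnsr_\tau:\tau\in M\}$ but proceeds differently: it notes that their span is a conjugation-closed, point-separating, nowhere-vanishing algebra of bounded continuous functions, first approximates a general element of $\rL^2(Q,\nu_Q)$ by continuous compactly supported functions, and then invokes a weighted Stone--Weierstrass theorem (Theorem~\ref{thm:stonewei}) to pass from uniform to $\rL^2$ approximation. Your argument is shorter and rests on a single classical fact (injectivity of the Fourier transform on finite measures); the paper's is longer but self-contained modulo Stone--Weierstrass and avoids Fourier analysis. Both are intrinsically finite-dimensional. One small correction to your closing parenthesis: Appendix~\ref{sec:proof} proves precisely this finite-dimensional statement, while the infinite-dimensional Proposition~\ref{prop:scohdenseid} is obtained in the body of the paper by cylindrical approximation from the finite-dimensional case --- so your Fourier argument would serve equally well there, applied on the finite-dimensional quotients $Q_\alpha$.
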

While this fact is well known through other approaches, it is not obvious from the present perspective. Since our aim is to provide a rigorous and reasonably self-contained treatment, a proof is given in Appendix~\ref{sec:proof}.
The statement implies that the isometric isomorphism of pre-Hilbert spaces $\bst:\cH^{\src}\to \cH^{\hrc}$ extends to an isometric isomorphism of Hilbert spaces $\cH^{\sr}\to \cH^{\hr}$ which we shall continue to denote by $\bst$. In fact, the integral representation of $\bst$ given by expression (\ref{eq:deff}) extends to the whole Hilbert space $\cH^{\sr}$. On the other hand, the integral representation (\ref{eq:deffinv}) for the inverse transformation does not immediately extend to the whole Hilbert space $\cH^{\hr}$. This is not surprising, since the elements of $\cH^{\sr}$ are really equivalence classes of functions on $Q$ that do not actually have a well defined value at a point $\phi\in Q$. Coherent states belong to those special states that can be represented by a continuous function. It is this preferred representation that is computed by (\ref{eq:deffinv}).

The transform $\bst$ is a coordinate free version of the Segal-Bargmann transform \cite{Bar:hilbanalytic}. This is explained in detail in Appendix~\ref{sec:bst}.

\subsection{Correspondence in finite dimensions}
\label{sec:fdcorr}

The coherent states and in particular the vacuum state may be viewed as objects intrinsic to the Schrödinger representation rather than as induced from the pairing with another representation. To this end we may eliminate the explicit appearance of the complex structure in (\ref{eq:defcohs}) in favor of a more natural structure. Define the symmetric bilinear form $\Omega:Q\times Q\to\C$ as follows,
\begin{equation}
\Omega(\phi,\phi')\defeq g(j(\phi),j(\phi')) -\im [j(\phi),\phi'] .
\label{eq:jtoomega}
\end{equation}
We note that the real part of $\Omega$ is precisely the inner product on $Q$. The measures $\mu_Q$ as well as $\nu_Q$ may now be defined in terms of (the real part of) $\Omega$, without direct reference to $J$ or $g$. The same is true for the wave functions of the vacuum and indeed all coherent states. Recalling (\ref{eq:swvac}) and (\ref{eq:swvcoh}) we get
\begin{align}
 \cohs_0(\phi) & =\exp\left(-\frac{1}{2}\Omega(\phi,\phi)\right) ,
 \label{eq:swvaci} \\
 \cohnsr_\tau(\phi) & =\exp\left(\Omega(q(\tau),\phi)+\im [\tau,\phi]
  -\frac{1}{2}\Omega(q(\tau),q(\tau))-\frac{\im}{2}[\tau,\tau]\right) .
 \label{eq:swcohi}
\end{align}

We have thus arrived at the usual characterization \cite{Jac:schroedinger} of the vacuum wave function (\ref{eq:swvaci}) in the Schrödinger representation in terms of a symmetric bilinear form $\Omega:Q\times Q\to \C$ with positive definite real part. The obvious questions are now whether all such forms arise from complex structures in the way described above and whether the complex structure corresponding to such a structure is unique. The answer to both questions is affirmative, i.e., there is a one-to-one correspondence between the two structures. While the expression of $\Omega$ in terms of $J$ is given by formula (\ref{eq:jtoomega}), we shall now be interested in obtaining $J$ from $\Omega$.

Thus, suppose we are given a symmetric bilinear form $\Omega$ with positive definite real part. As a first step we define the subspace $X\subseteq L$ given by
\begin{equation}
 X\defeq\{\xi\in L: \Im\Omega(q(\xi),\phi)+[\xi,\phi]=0\;\forall \phi\in Q\} .
\label{eq:defxio}
\end{equation}
We first note that $X\cap M=\{0\}$. Indeed, suppose $\xi\in X\cap M$. Then $q(\xi)=0$ and consequently $\Im\Omega(q(\xi),\phi)=0$ for all $\phi\in Q$. Thus, $[\xi,\phi]= 0$ for all $\phi\in Q$. But the non-degeneracy of $[\cdot,\cdot]$ as a map $M\times Q\to\R$ implies then $\xi=0$.
On the other hand, the same non-degeneracy of $[\cdot,\cdot]$ implies that any linear map $Q\to\R$ may be written as $\phi\mapsto [\xi,\phi]$ for some $\xi\in M$. In particular, for any $\xi\in L$ there is an element $\gamma(\xi)\in M$ such that
\begin{equation}
\Im\Omega(q(\xi),\phi)+[\gamma(\xi),\phi]=0\quad\forall \phi\in Q .
\end{equation}
Now let $\xi\in N$. The above shows that $\xi+\gamma(\xi)\in X$. Thus, we can write any element of $N$ as a linear combination of an element of $X$ and an element of $M$. But $N$ together with $M$ generate $L$ so $X$ together with $M$ also generate $L$. Therefore $X$ is a complement of $M$ in $L$ and $L=M\oplus X$.

Let $\xi,\tau\in X$. Then,
\begin{equation}
\omega(\xi,\tau)=\frac{1}{2}[\xi,\tau]-\frac{1}{2}[\tau,\xi]
 = - \frac{1}{2}\Im\Omega(q(\xi),q(\tau)) + \frac{1}{2}\Im\Omega(q(\tau),q(\xi)) = 0 .
\end{equation}
So $X$ is isotropic in $L$. On the other hand suppose there is $\tau\in L$ such that $\omega(\xi,\tau)=0$ for all $\xi\in X$. But, for all $\xi\in X$,
\begin{equation}
0=\omega(\xi,\tau)=\frac{1}{2}[\xi,\tau]-\frac{1}{2}[\tau,\xi]
 = - \frac{1}{2}\Im\Omega(q(\xi),q(\tau)) -\frac{1}{2}[\tau,\xi] .
\end{equation}
Since $X$ is a complement of $M$, $q$ restricted to $X$ is a vector space isomorphism $X\to Q$, and so the above implies $\tau\in X$. Hence, $X$ is also coisotropic and thus Lagrangian.

By a similar reasoning as above based on the non-degeneracy of $[\cdot,\cdot]$ there is a linear map $\beta:X\to M$ such that
\begin{equation}
\Re\Omega(q(\xi),\phi)+[\beta(\xi),\phi]=0\quad\forall \xi\in X, \forall \phi\in Q .
\label{eq:defbeta}
\end{equation}
Moreover, since $\Re\Omega$ is non-degenerate, and since $q$ restricted to $X$ is an isomorphism, $\beta$ is invertible. We claim that $J\defeq\beta\oplus (-\beta^{-1}):L\to L$ is a complex structure.
To see this we decompose elements $\xi\in L$ as $\xi=\xi^M+\xi^X$, where $\xi^X\in X$ and $\xi^M\in M$. Then, for all $\xi\in L$,
\begin{equation}
J(J(\xi))=J(\beta(\xi^X)-\beta^{-1}(\xi^M))=-\beta^{-1}(\beta(\xi^X))-\beta(\beta^{-1}(\xi^M))=-\xi .
\end{equation}
As for the other key property of a complex structure, let $\xi,\tau\in L$ and observe
\begin{align}
 \omega(J\xi,J\tau)
 & = \omega(\beta(\xi^X)-\beta^{-1}(\xi^M), \beta(\tau^X)-\beta^{-1}(\tau^M)) \\
 & = -\omega(\beta(\xi^X),\beta^{-1}(\tau^M))
   -\omega(\beta^{-1}(\xi^M),\beta(\tau^X)) \\
 & = -\frac{1}{2} [\beta(\xi^X),\beta^{-1}(\tau^M)]
     + \frac{1}{2} [\beta(\tau^X),\beta^{-1}(\xi^M)] \\
 & = \frac{1}{2}\Re\Omega(q(\xi^X),q(\beta^{-1}(\tau^M)))
     - \frac{1}{2}\Re\Omega(q(\tau^X),q(\beta^{-1}(\xi^M))) \\
 & = -\frac{1}{2} [\tau^M,\xi^X] + \frac{1}{2} [\xi^M,\tau^X] \\
 & = \omega(\xi^X,\tau^M) + \omega(\xi^M,\tau^X) \\
 & = \omega(\xi,\tau) .
\end{align}
It remains to verify that the bilinear form $g:L\times L\to\R$ defined by $J$ according to (\ref{eq:kmetric}) is positive definite. Let $\xi\in L$. Then,
\begin{align}
 g(\xi,\xi)
 & = 2 \omega(\xi,J\xi) \label{eq:ipomega1} \\
 & = 2 \omega(\xi^X,J\xi^X)+2\omega(\xi^M,J\xi^M) \\
 & = -[J\xi^X,\xi^X] + [\xi^M, J\xi^M] \\
 & = -[\beta(\xi^X),\xi^X] -[\xi^M,\beta^{-1}(\xi^M)] \\
 & = \Re\Omega(q(\xi^X),q(\xi^X))
  +  \Re\Omega(q(\beta^{-1}(\xi^M)),q(\beta^{-1}(\xi^M))) \label{eq:ipomega2}
\end{align}
Positive definiteness of $\Re\Omega$ then implies positive definiteness of $g$.

We proceed to verify that the above constructed maps from complex structures $J$ to forms $\Omega$ and from forms $\Omega$ to complex structures $J$ are mutually inverse. Suppose we are given a complex structure $J$ and define $\Omega$ according to equation (\ref{eq:jtoomega}). Define now $X$ as in (\ref{eq:defxio}). Suppose that $\xi\in J M$. Then, for all $\phi\in Q$,
\begin{equation}
\Im\Omega(q(\xi),\phi)=-[j(q(\xi))),\phi]=-[\xi,\phi] .
\end{equation}
Thus $\xi\in X$. That is, $J M\subseteq X$. But since $J M$ and $X$ are both complements of $M$ in $L$ we must have $X=J M$. To see that $\beta\oplus (-\beta^{-1})$ coincides with $J$ it is sufficient to show this on the subspace $J M\subseteq L$. Thus, let $\xi\in J M$. Then, for all $\phi\in Q$ we have
\begin{multline}
 [\beta(\xi),\phi]=-\Re\Omega(q(\xi),\phi)=-g(j\circ q(\xi),j(\phi))
 =-g(\xi,j(\phi))\\ =2\omega(J\xi,j(\phi))=[J\xi,\phi] .
\end{multline}
Non-degeneracy of $[\cdot,\cdot]$ implies then $\beta=J|_{J M}$.

Conversely, suppose we are given a symmetric bilinear form $\Omega:Q\times Q\to\C$ with positive definite real part. We work out $J$ as described above and verify that formula (\ref{eq:jtoomega}) really recovers $\Omega$. For the real part observe, for all $\phi,\phi'\in Q$,
\begin{equation}
 g(j(\phi),j(\phi'))=2\omega(j(\phi),\beta\circ j(\phi'))
 =-[\beta\circ j(\phi'),\phi]=\Re\Omega(\phi',\phi) .
\end{equation}
For the imaginary part observe, for all $\phi,\phi'\in Q$,
\begin{equation}
 -[j(\phi),\phi']=\Im\Omega(q\circ j(\phi),\phi')=\Im\Omega(\phi,\phi') .
\end{equation}
We have thus proven the following proposition.
\begin{prop}
\label{prop:fdcorr}
There is a one-to-one correspondence between complex structures $J:L\to L$ and symmetric bilinear forms $\Omega:Q\times Q\to\C$ with positive definite real part.
\end{prop}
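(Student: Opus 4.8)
The plan is to exhibit the assignment $J\mapsto\Omega$ from (\ref{eq:jtoomega}) as a bijection by constructing an explicit inverse and verifying that the two maps compose to the identity in both orders. First I would check that the forward map is well defined, i.e.\ that $\Omega$ as given in (\ref{eq:jtoomega}) is indeed symmetric with positive definite real part. The real part is $g(j(\phi),j(\phi'))$, which is symmetric and positive definite because $g$ is and because $j$ is an isometric isomorphism onto $JM$. For the imaginary part $-[j(\phi),\phi']$, symmetry amounts to $[j(\phi),\phi']=[j(\phi'),\phi]$; writing $\phi'=q(j(\phi'))$ and using that $JM$ is Lagrangian, so that $\omega$ vanishes on the pair $j(\phi),j(\phi')$ and $[\cdot,\cdot]$ is therefore symmetric there, gives this at once.

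Next I would construct the inverse map $\Omega\mapsto J$. Starting from a symmetric form $\Omega$ with positive definite real part, define the subspace $X$ by (\ref{eq:defxio}). The decisive structural input is the non-degeneracy of $[\cdot,\cdot]$ as a map $M\times Q\to\R$ established earlier: it immediately gives $X\cap M=\{0\}$, and since every linear functional on $Q$ is of the form $[\xi,\cdot]$ with $\xi\in M$, it lets me correct any element of $N$ into $X$, yielding $L=M\oplus X$. Using the symmetry of $\Im\Omega$ one checks $X$ is isotropic, and the fact that $q$ restricts to an isomorphism $X\to Q$ together with non-degeneracy of $[\cdot,\cdot]$ shows $X$ is coisotropic, hence Lagrangian. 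The same non-degeneracy, applied to $\Re\Omega$, produces a linear map $\beta:X\to M$ via (\ref{eq:defbeta}), invertible because $\Re\Omega$ is non-degenerate and $q|_X$ is an isomorphism. I then set $J\defeq\beta\oplus(-\beta^{-1})$ relative to $L=M\oplus X$ and verify in turn that $J^2=-\id$ by a direct block computation, that $\omega(J\cdot,J\cdot)=\omega$ by using that both $M$ and $X$ are Lagrangian to reduce to the off-diagonal terms and then invoking the defining relation of $\beta$ and the symmetry of $\Re\Omega$, and that $g=2\omega(\cdot,J\cdot)$ from (\ref{eq:kmetric}) is positive definite by rewriting $g(\xi,\xi)$ in terms of $\Re\Omega$ evaluated on the $X$- and $M$-components of $\xi$.

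Finally I would check that the two composites are the identity. For $\Omega\mapsto J\mapsto\Omega$, one computes (\ref{eq:jtoomega}) for the constructed $J$ and recovers the original $\Omega$ by checking its real and imaginary parts separately, each via the defining relations of $\beta$ and $X$. For $J\mapsto\Omega\mapsto J$, the key observation is that the subspace $X$ built from $\Omega$ coincides with $JM$: any $\xi\in JM$ satisfies $\Im\Omega(q(\xi),\phi)=-[\xi,\phi]$, so $\xi\in X$, and since both $X$ and $JM$ are complements of $M$ they agree; a short computation then identifies $\beta$ with $J|_{JM}$, so the reconstructed complex structure is $J$ itself.

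I expect the main obstacle to be the construction and Lagrangian property of $X$ together with the invertibility of $\beta$: everything downstream, namely the complex-structure axioms, positive definiteness, and both inverse identities, rests on having $L=M\oplus X$ with $X$ Lagrangian and $\beta$ an isomorphism, and each of these hinges delicately on the non-degeneracy of $[\cdot,\cdot]$ on $M\times Q$ and on that of $\Re\Omega$. Among the remaining verifications the identity $\omega(J\cdot,J\cdot)=\omega$ is the most computation-heavy, since it is exactly where the symmetry of $\Re\Omega$ must be combined with the Lagrangian conditions on $M$ and $X$.
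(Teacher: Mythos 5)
Your proposal follows the paper's own argument essentially verbatim: the same subspace $X$ from (\ref{eq:defxio}), the same correction of elements of $N$ into $X$ via the non-degeneracy of $[\cdot,\cdot]$ on $M\times Q$, the same map $\beta$ from (\ref{eq:defbeta}) with $J=\beta\oplus(-\beta^{-1})$, and the same two composite-identity checks (including the key observation $X=JM$). The only addition is your explicit verification that the forward map produces a \emph{symmetric} form, a small point the paper leaves tacit, and your argument for it (isotropy of $JM$ forces $[\cdot,\cdot]$ to be symmetric there) is correct.
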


There is another concise way to express the correspondence between complex structures and bilinear forms. Consider the projector $P^+:L^{\C}\to L^{\C}$ onto the polarized subspace of the complexification $L^{\C}$ of $L$. Recall from Section~\ref{sec:polpa} that $P^{+}(\xi)=\frac{1}{2}(\xi-\im J\xi)$. Extending the maps $[\cdot,\cdot]$, $\Omega$ and $q$ to complexified spaces, we have
\begin{equation}
 \Omega(q(P^+(\xi)),\phi)=-\im [P^+(\xi),\phi]\qquad\forall\xi\in L,\forall \phi\in Q .
\end{equation}
This equation is indeed sufficient to uniquely determine the complex structure $J$ from the bilinear form $\Omega$ or vice versa. Note in particular, that $q\circ P^+$ restricted to $L$ yields an isomorphism of real vector spaces between $L$ and $Q^\C$.

We summarize the picture in the finite-dimensional case as follows. For each complex structure $J$ there is a distinct Kähler polarization leading to a distinct Hilbert space $\cH^\hr$ of polarized holomorphic wave functions. On the other hand, there is a unique Schrödinger polarization yielding a unique Hilbert space $\cH^\sr$. Through the pairing there is, for each complex structure $J$, an induced isometric isomorphism $\bst:\cH^\sr\to\cH^\hr$ of Hilbert spaces. Moreover, for each complex structure $J$, the coherent states of $\cH^\hr$ are mapped under $\bst^{-1}$ to a distinct family of coherent states in $\cH^\sr$. In particular, for each distinct $J$, the vacuum in $\cH^\hr$ is mapped to a state in $\cH^\sr$ that is characterized by a distinct symmetric bilinear form $\Omega:Q\times Q\to\C$ with positive definite real part. Moreover, any such $\Omega$ comes from some $J$.

\subsection{Schrödinger representation in infinite dimensions}
\label{sec:infinis}

After having derived in the finite-dimensional case the key elements of the Schrödinger representation we proceed here to generalize them to the case where the space $L$ of solutions is infinite-dimensional. For the case of the holomorphic representation a detailed account of this was given in \cite{Oe:holomorphic}, including an adapted exposition of the basic mathematical ingredients. We shall use much of the same ingredients here and thus refer the reader to that paper for more details in this respect.

Specifically, we suppose we are given the structures introduced in Section~\ref{sec:linftstruc}, except for a complex structure and structures derived from it. Moreover, we suppose we are given a symmetric bilinear form $\Omega:Q\times Q\to\C$ with positive definite real part. In contrast to the finite-dimensional case, we also need to add compatibility conditions concerning the topologies generated by the different algebraic structures. Specifically we require $Q$ to be complete and separable, i.e., to be a real separable Hilbert space with respect to the inner product given by the real part of $\Omega$. Also, we shall assume the imaginary part of $\Omega$ to be continuous with respect to the so defined topology on $Q$. Further, we shall assume that for any $\xi\in L$ the map $\phi\mapsto [\xi,\phi]$ is continuous with respect to this topology. Conversely, we assume moreover that any continuous linear map $Q\to\R$ can be obtained as $\phi\mapsto [\xi,\phi]$ for some $\xi\in L$. The latter two conditions may be summarized as saying that the weak topology on the real Hilbert space $Q$ coincides with the topology induced by the maps $\phi\mapsto [\xi,\phi]$. We shall call $\Omega$ \emph{admissible} if and only if these additional assumptions are satisfied.

As is well known, there is no analogue of the Lebesgue measure on infinite-dimensional vector spaces. Thus the Hilbert space of Schrödinger wave functions $\cH^{\sr}$ cannot be defined as a space of square-integrable functions with respect to such a measure. On the other hand, decomposing Schrödinger wave functions $\psi$ by separating a vacuum part as in (\ref{eq:factswv}) lead in Section~\ref{sec:pisom} to an equivalent definition of $\cH^{\sr}$ in terms of a Gaussian measure $\nu_Q$ on $Q$. Recall the definition (\ref{eq:nuq}) of $\nu_Q$ which we rewrite using $\Omega$,
\begin{equation}
\xd\nu_Q(\phi)=\exp(-\Re(\Omega(\phi,\phi)))\,\xd\mu_Q(\phi),
\label{eq:smeasure}
\end{equation}
and where $\mu_Q$ is the Lebesgue measure on $Q$ normalized such that $\nu_Q$ is a probability measure. Now, $\psi\in\cH^{\sr}$ iff $\widetilde{\psi}\in\rL^2(Q,\nu_Q)$, where $\psi=\widetilde{\psi} \cohs_0$. An analogue of the Gaussian measure $\nu_Q$ does not exist either if $Q$ is infinite-dimensional. However, such a measure does exist if we suitably enlarge the space $Q$, see e.g.\ \cite{GeVi:genf4}. Concretely, a suitable extension $\hat{Q}$ of $Q$ may be constructed roughly as follows \cite{Oe:holomorphic}. We consider finite-dimensional quotients $Q_\alpha$ of $Q$, where a suitable Gaussian measure $\nu_\alpha$ is well defined. These spaces turn out to form a projective system of vector spaces with compatible measures. The projective limit $\hat{Q}\defeq\varprojlim Q_\bullet$ then does possess a Gaussian measure, which we shall continue to denote by $\nu_Q$. $\hat{Q}$ is very simple to characterize: It is the algebraic dual space of the topological dual space of $Q$. This characterization also furnishes a canonical inclusion $Q\toi \hat{Q}$.

We define the Hilbert space $\cH^{\st}$ of the Schrödinger representation therefore as the space $\rL^2(\hat{Q},\nu_Q)$ of (equivalence classes of) square-integrable complex functions on $\hat{Q}$ with respect to the measure $\nu_Q$. Note that in this definition the elements of $\cH^{\st}$ are not the wave functions in the usual sense, but their \emph{reduced} versions where the vacuum has been factored out, hence the difference in notation to $\cH^\sr$. Even though these reduced wave functions are functions on $\hat{Q}$, for an important subclass of these it is possible to continue to regard them as functions merely on $Q$. More precisely, a function on $Q$ that is \emph{cylindrical}, i.e., that is induced from a function on a finite-dimensional quotient space $Q_\alpha$, extends canonically to a function on $\hat{Q}$, since $Q_\alpha$ is also a quotient of $\hat{Q}$. In other words, the cylindrical functions are the functions on $Q$ that are \emph{almost translation invariant}, i.e., that are translation invariant with respect to a closed subspace of $Q$ that has finite codimension. The subspace of cylindrical functions in $\rL^2(\hat{Q},\nu_Q)$ is precisely the inductive limit $\varinjlim \rL^2(Q_\bullet,\nu_\bullet)$ of the spaces $\rL^2(Q_\alpha,\nu_\alpha)$.

Since the vacuum wave function (\ref{eq:swvac}) is not cylindrical, it does not extend in a straightforward way to a function on $\hat{Q}$. This means in particular, that we cannot reverse in the infinite-dimensional case the step of going from ``full'' wave functions to reduced wave functions. On the other hand, the reduced wave functions (\ref{eq:swcohi}) of coherent states are cylindrical and thus extend to functions on $\hat{Q}$. These serve thus also in the infinite-dimensional case to define coherent states.
Note that continuous functions on $Q$ that are cylindrical extend to continuous functions on $\hat{Q}$ with respect to its natural topology which is the initial (or projective) topology. In particular, this applies to the coherent state reduced wave functions. Indeed, the topological compatibility conditions we have imposed on $\Omega$ in relation to the other basic structural ingredients of the Schrödinger representation ensure precisely that the reduced wave function $\cohnsr_\tau$ given by (\ref{eq:swcohi}) is continuous on $Q$. Since $\nu_Q$ is a Borel measure on $\hat{Q}$ this implies in particular that $\cohnsr_\tau$ is measurable.
In analogy to the finite-dimensional case we denote the vector subspace of $\cH^{\st}$ generated by the coherent states by $\cH^{\stc}$. Proposition~\ref{prop:scohdensefd} implies that $\cH^{\stc}$ is dense in $\varinjlim \rL^2(Q_\bullet,\nu_\bullet)$ which in turn is dense in $\cH^{\st}$ (compare Proposition~3.1 in \cite{Oe:holomorphic}).

\begin{prop}
\label{prop:scohdenseid}
The subspace $\cH^{\stc}$ is dense in $\cH^{\st}$.
\end{prop}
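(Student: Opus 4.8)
The plan is to prove density in two stages, both prepared for by the projective/inductive limit structure underlying the definition $\cH^{\st}=\rL^2(\hat{Q},\nu_Q)$. Recall that $\hat{Q}=\varprojlim Q_\bullet$ is assembled from the finite-dimensional quotients $Q_\alpha$ of $Q$ with their compatible Gaussian measures $\nu_\alpha$, and that the cylindrical functions — those factoring through some single $Q_\alpha$ — constitute exactly the inductive limit $\varinjlim\rL^2(Q_\bullet,\nu_\bullet)$. Since each reduced coherent state $\cohnsr_\tau$ is cylindrical by (\ref{eq:swcohi}), we have the chain of subspaces $\cH^{\stc}\subseteq\varinjlim\rL^2(Q_\bullet,\nu_\bullet)\subseteq\cH^{\st}$, and it suffices to show that $\cH^{\stc}$ is dense in the inductive limit and that the inductive limit is dense in $\cH^{\st}$.

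I would dispose of the second density first, as it requires nothing new: the density of the cylindrical functions in the $\rL^2$ space of a projective-limit Gaussian measure is exactly the content of Proposition~3.1 of \cite{Oe:holomorphic}, applied here to $\nu_Q$.

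For the first density I would argue level by level. Every element of the inductive limit lies in a single $\rL^2(Q_\alpha,\nu_\alpha)$, so it is enough to approximate within each level. Each such level is itself a finite-dimensional Schrödinger space of the kind treated in Section~\ref{sec:pisom}, obtained by restricting $\Omega$ and the symplectic potential to the quotient $Q_\alpha$; by Proposition~\ref{prop:scohdensefd}, transported from the Lebesgue picture to the Gaussian one through the factorization (\ref{eq:factswv}) by the nowhere-vanishing vacuum $\cohs_0$, the reduced coherent states are dense in $\rL^2(Q_\alpha,\nu_\alpha)$. The coherent states $\cohnsr_\tau$ with $\tau\in L$ that factor through $Q_\alpha$ restrict precisely to these finite-dimensional coherent states, so $\cH^{\stc}$ surjects onto a dense subspace at every level. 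Taking the union over $\alpha$ yields density of $\cH^{\stc}$ in the inductive limit, which combined with the previous paragraph proves the claim.

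The main obstacle I anticipate is the compatibility bookkeeping in this level-by-level step, not any analytic estimate. One must check that each finite-dimensional quotient $Q_\alpha$ genuinely inherits a finite-dimensional Schrödinger system — in particular that the restricted symplectic potential remains non-degenerate on the relevant $M_\alpha\times Q_\alpha$ and that the induced projection $L\to L_\alpha$ is surjective, so that every finite-dimensional coherent state really arises as the restriction of some $\cohnsr_\tau$. Only with this surjectivity in hand may Proposition~\ref{prop:scohdensefd} be applied at each level; securing it amounts to tracking the admissibility hypotheses on $\Omega$ through the construction of $\hat{Q}$ recalled from \cite{Oe:holomorphic}.
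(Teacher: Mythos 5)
Your proposal is correct and follows essentially the same route as the paper, which disposes of this proposition in one sentence: Proposition~\ref{prop:scohdensefd} applied on each finite-dimensional quotient gives density of $\cH^{\stc}$ in $\varinjlim \rL^2(Q_\bullet,\nu_\bullet)$, and the cylindrical functions are dense in $\cH^{\st}$ by Proposition~3.1 of \cite{Oe:holomorphic}. The compatibility bookkeeping you flag in your last paragraph is real but is exactly what the admissibility conditions on $\Omega$ are designed to secure, and the paper leaves it implicit.
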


This completes the definition and basic characterization of the Schrödinger representation in the general case. Let us emphasize again, at this point, the crucial difference between the case of a finite-dimensional and an infinite-dimensional configuration space $Q$. While the Schrödinger representation in the former case is intrinsically defined, its very definition depends in the latter case on an additional datum: The vacuum, encoded in the bilinear form $\Omega$. In this respect the Schrödinger representation becomes then to resemble the holomorphic representation, which both in finite-dimensional and infinite-dimensional cases depends on the complex structure $J$ as additional datum.

\subsection{Relation to the holomorphic representation}
\label{sec:relholom}

The correspondence between complex structures $J:L\to L$ and admissible bilinear forms $\Omega:Q\times Q\to\C$ with positive definite real part extends to the infinite-dimensional case. Moreover, also in the infinite-dimensional case this correspondence is linked to an isomorphism of Hilbert spaces $\bst:\cH^{\st}\to\cH^{\hr}$. We explain this in the present section.

We start with the generalization of Proposition~\ref{prop:fdcorr}.
\begin{prop}
\label{prop:idcorr}
There is a one-to-one correspondence between complex structures $J:L\to L$ and admissible symmetric bilinear forms $\Omega:Q\times Q\to\C$ with positive definite real part.
\end{prop}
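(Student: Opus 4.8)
The plan is to leverage the finite-dimensional Proposition~\ref{prop:fdcorr}, whose entire algebraic content survives the passage to infinite dimensions, so that only a layer of continuity and completeness statements must be added. Concretely, the definition of the subspace $X$ in (\ref{eq:defxio}), of the auxiliary maps $\gamma$ and $\beta$ (the latter through (\ref{eq:defbeta})), and of $J\defeq\beta\oplus(-\beta^{-1})$, together with the verifications that $J^2=-\id$, that $\omega(J\cdot,J\cdot)=\omega(\cdot,\cdot)$, that the form $g$ of (\ref{eq:kmetric}) is positive definite via the computation (\ref{eq:ipomega1})--(\ref{eq:ipomega2}), and that the assignments $J\mapsto\Omega$ and $\Omega\mapsto J$ are mutually inverse, are all pointwise identities that use neither finite-dimensionality nor any topology. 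First I would record this reduction, leaving as the genuine task the matching of the analytic hypotheses on the two sides.

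For the direction from an admissible $\Omega$ to a complex structure $J$, the role of admissibility is to make the finite-dimensional definitions meaningful. The functionals $\phi\mapsto-\Im\Omega(q(\xi),\phi)$ and $\phi\mapsto-\Re\Omega(q(\xi),\phi)$ are continuous on $Q$ (the first by the assumed continuity of $\Im\Omega$, the second because $\Re\Omega$ is the inner product), so by the representability clause of admissibility together with non-degeneracy of $[\cdot,\cdot]\colon M\times Q\to\R$ they are represented by well-defined elements $\gamma(\xi),\beta(\xi)\in M$. With $J$ and $g$ then defined, I would not attempt to bound $J$ directly but instead show that $(L,g)$ is a separable real Hilbert space: the splitting $L=M\oplus X$ is $g$-orthogonal by (\ref{eq:ipomega1})--(\ref{eq:ipomega2}), and each summand is identified isometrically with $(Q,\Re\Omega)$, namely $X$ through $q|_{X}$ and $M$ through $q\circ\beta^{-1}$, so that completeness and separability descend from those of $Q$. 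Boundedness of $J$ then comes for free, since $J$ is a $g$-isometry; continuity of $[\cdot,\cdot]$ on $(L,g)$ and the Riesz-type correspondence required in Section~\ref{sec:linftstruc} follow from the same identifications.

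For the converse I would start from a complex structure $J$ meeting the requirements of Section~\ref{sec:linftstruc}, define $\Omega$ by (\ref{eq:jtoomega}), and check admissibility directly: $\Re\Omega=g(j(\cdot),j(\cdot))$ makes $Q$ a separable real Hilbert space isometric to the closed subspace $JM$; $\Im\Omega(\phi,\phi')=-[j(\phi),\phi']$ is continuous because $[\cdot,\cdot]$ and $j$ are; and the two representability conditions for $[\cdot,\cdot]$ are precisely the continuity and Riesz assumptions already placed on $L$. The mutual-inverse statement then follows from the finite-dimensional computation, now legitimate because the bounded, $g$-orthogonal $J$ maps the closed subspace $M$ to the closed subspace $JM$, so that the identity $X=JM$ persists. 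I expect the main obstacle to be not a single hard estimate but the consistent topological bookkeeping: establishing that the $g$ produced from $\Omega$ is complete and separable rather than merely positive definite, and that $[\cdot,\cdot]$ and $J$ are continuous for this norm. The device that makes this manageable is to transport $M$ and $X$ onto the single Hilbert space $Q$, via $M\cong Q^{*}\cong Q$ and $q|_{X}\colon X\xrightarrow{\sim}Q$, reducing every continuity claim to the boundedness of an operator assembled from $\Re\Omega$ and $\Im\Omega$.
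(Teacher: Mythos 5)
Your proposal is correct and follows essentially the same route as the paper: reduce to the algebraic content of Proposition~\ref{prop:fdcorr}, use the representability clause of admissibility to define $\gamma$ and $\beta$, obtain completeness and separability of $(L,g)$ from the $g$-orthogonal splitting $L=M\oplus X$ with both summands identified isometrically with $(Q,\Re\Omega)$ via $q|_X$ and $q\circ\beta^{-1}$, deduce continuity of $J$ and of $[\cdot,\cdot]$ from these identifications, and verify admissibility of $\Omega$ directly in the converse direction via the Riesz theorem on $Q$. The only cosmetic difference is that you phrase the continuity of $J$ as automatic for a $g$-isometry, where the paper leaves it implicit in the construction; no gap.
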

The proof of this proposition is substantially the same as that of its counterpart in the finite-dimensional case. The difference is that at various points additional functional analytic considerations enter. We proceed to discuss this in some detail.
Thus suppose we are given a complex structure $J:L\to L$. Now define a symmetric bilinear form $\Omega:Q\times Q\to\C$ as in equation (\ref{eq:jtoomega}). Note that $Q$ inherits a topology from the separable Hilbert space $L$ with inner product $g$ as a quotient space. Indeed, $\Re\Omega$ is precisely the induced inner product on $Q$, making it into a real separable Hilbert space with this topology. Since $[\cdot,\cdot]:L\times L\to\R$ is continuous by assumption on $J$, so is then $[\cdot,\cdot]:L\times Q\to\R$. Moreover, $j:Q\to L$ is continuous so that $\Im\Omega$ is also continuous. Moreover, let $\lambda:Q\to\R$ be linear and continuous. By the Riesz Representation Theorem on $Q$ there exists $\xi\in Q$ such that for all $\phi\in Q$, $\lambda(\phi)=g(j(\phi),j(\xi))=2\omega(j(\phi),J j(\xi))=[-J j(\xi),\phi]$. Thus, all admissibility conditions are satisfied by $\Omega$.

Conversely, suppose we are given an admissible bilinear form $\Omega:Q\times Q\to\C$ with positive definite real part. As in Section~\ref{sec:fdcorr} we define the subspace $X\subseteq L$ according to expression (\ref{eq:defxio}). One of the admissibility conditions is that any continuous linear map $Q\to\R$ can be written as $\phi\mapsto [\xi,\phi]$ for some $\xi\in L$. This can be sharpened to require $\xi\in M$ since $M$ is complement of $N$ in $L$. This then ensures precisely that the argument given there and showing that $X$ is a complement of $M$ in $L$ generalizes to the infinite-dimensional case. The following argument, showing that $X$ is a Lagrangian subspace of $L$ is valid in the infinite-dimensional case without any modification.

Continuing with the steps laid out in Section~\ref{sec:fdcorr}, we define a linear map $\beta:X\to M$ satisfying condition (\ref{eq:defbeta}). Its existence is assured by the same admissibility condition just used previously. The invertibility of $\beta$ follows as in Section~\ref{sec:fdcorr}. As there we proceed to define $J\defeq \beta \oplus \beta^{-1}$. The proof that $J^2(\xi)=-\xi$ and that $\omega(J\xi,J\tau)=\omega(\xi,\tau)$ goes through unchanged in the infinite-dimensional case. This is also true for the proof that the real inner product $g$ induced by $J$ is positive-definite. Also note from the equality of (\ref{eq:ipomega1}) with (\ref{eq:ipomega2}) that $X$ and $M$ are orthogonal subspaces in $L$. Moreover, $q|_X$ and $q\circ\beta^{-1}$ identify $X$ and $M$ with $Q$ as inner product spaces. Since $Q$ is complete and separable this implies that $X$ and $M$ are complete and separable and so is their orthogonal direct sum $L$. To see that $[\cdot,\cdot]:L\times L\to\R$ is continuous we consider its restrictions to $M\times L$ and to $X\times L$. The first restriction coincides with the same restriction of $2\omega$, which is continuous since it may be obtained as a composition of $g$ and $J$, which are continuous by construction. The second restriction coincides due to the definition (\ref{eq:defxio}) with the corresponding restriction of $-\Im\Omega\circ(q\times q)$, which is continuous due to the admissibility conditions.

Finally, the proof that the constructed maps from complex structures to admissible bilinear forms and vice versa are inverse is exactly as in the finite-dimensional case. This completes the proof of Proposition~\ref{prop:idcorr}.

Before proceeding to explain how the relation between the Schrödinger and holomorphic representations extend to the infinite-dimensional case we recall some aspects of the latter \cite{Oe:holomorphic}. In order to define polarized wave functions on $L$ we need a suitable measure on it. For finite-dimensional $L$ this was the Gaussian measure $\nu_L$ given by (\ref{eq:fdhmeasure}). This can be extended to the infinite-dimensional case in the same way as explained for the Schrödinger representation in Section~\ref{sec:infinis}: We consider a projective system of finite-dimensional quotient spaces of $L$ with compatible Gaussian measures and take the projective limit. This projective limit $\hat{L}$ is actually larger than $L$ and may be described as the algebraic dual of the topological dual of $L$. Polarized wave functions are then roughly speaking square-integrable \emph{holomorphic} functions on $\hat{L}$. It turns out, however, that these functions are completely characterized by their values on $L$ alone, rather than on all of $\hat{L}$. (In \cite{Oe:holomorphic} this is Theorem~3.18.) So the resulting Hilbert space $\cH^{\hr}$ is really a space of holomorphic functions on $L$. This is in marked contrast to the Schrödinger representation (Section~\ref{sec:infinis}) where the restriction of wave functions, defined on $\hat{Q}$, to the subspace $Q$ is insufficient to characterize them.

The standard coherent state associated to $\tau\in L$ is represented by the wave function $\cohh_\tau$ in $\cH^{\hr}$ given in (\ref{eq:hcoh}) also in the infinite-dimensional case \cite{Oe:holomorphic}. As in the finite-dimensional case we denote by $\cH^{\hrc}$ the dense subspace of $\cH^{\hr}$ spanned by the coherent states. For both, Schrödinger and the holomorphic representation, the inner product (\ref{eq:ipcoh}) between coherent states takes the same form in the infinite-dimensional case as in the finite-dimensional one. Indeed, for any two concrete coherent states it may be computed using coherent state wave functions on suitable finite-dimensional quotient spaces of $L$ or $Q$.

Given an admissible bilinear form $\Omega$ and a corresponding complex structure $J$ define $\bst:\cH^{\stc}\to\cH^{\hrc}$ as mapping the reduced wave function $\cohnsr_\tau$ determined by (\ref{eq:swcohi}) to the wave function $\cohnh_\tau$ given by (\ref{eq:hncoh}). It is then clear that this map is a bijection preserving the inner product. Since $\cH^{\stc}$ and $\cH^{\hrc}$ are dense subspaces the completion yields an isometric isomorphism $\bst:\cH^{\st}\to\cH^{\hr}$ of Hilbert spaces.
\begin{thm}
\label{thm:liisom}
There is an isometric isomorphism of Hilbert spaces $\bst:\cH^\st\to\cH^\hr$ that maps the reduced Schrödinger wave function $\cohnsr_\tau$ to the holomorphic wave function $\cohnh_\tau$ for all $\tau\in L$.
\end{thm}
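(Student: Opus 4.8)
The plan is to construct $\bst$ first on the dense coherent-state subspaces and then extend by continuity, so the content of the theorem is concentrated in a single identity. Concretely, I would define a linear map on $\cH^{\stc}$ by declaring it to send each generator $\cohnsr_\tau$ (the reduced Schrödinger wave function of (\ref{eq:swcohi}), viewed as an element of $\rL^2(\hat{Q},\nu_Q)$) to the holomorphic wave function $\cohnh_\tau$ of (\ref{eq:hncoh}). The decisive observation, which simultaneously settles well-definedness, injectivity, and the isometry property, is that the Gram matrices of the coherent states agree in the two representations,
\begin{equation}
\langle \cohnsr_\tau, \cohnsr_{\tau'}\rangle^{\st} = \langle \cohnh_\tau, \cohnh_{\tau'}\rangle^{\hr}
\qquad \forall\, \tau,\tau'\in L.
\end{equation}
Granting this, any relation $\sum_i c_i \cohnsr_{\tau_i}=0$ in $\cH^{\st}$ is carried to a vector of vanishing $\cH^{\hr}$-norm, hence to $0$, so the assignment descends to a well-defined linear isometry between the two coherent-state subspaces.

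Second, I would establish the equality of inner products. The right-hand side is fixed by (\ref{eq:ipcoh}) together with the normalization built into (\ref{eq:hncoh}), and this value carries over verbatim to the infinite-dimensional holomorphic representation \cite{Oe:holomorphic}. For the left-hand side the point is that the reduced coherent-state wave functions are \emph{cylindrical}: by (\ref{eq:swcohi}) each $\cohnsr_\tau$ depends on $\phi$ only through the continuous linear functionals $\Omega(q(\tau),\cdot)$ and $[\tau,\cdot]$, so for any fixed pair $\tau,\tau'$ both $\cohnsr_\tau$ and $\cohnsr_{\tau'}$ factor through a common finite-dimensional quotient $Q_\alpha$ of $Q$. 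The Schrödinger inner product then reduces to the integral of $\overline{\cohnsr_\tau}\,\cohnsr_{\tau'}$ against the finite-dimensional Gaussian $\nu_\alpha$, which is precisely the finite-dimensional situation already analyzed. Invoking the finite-dimensional isometry of Proposition~\ref{prop:bstfc} on $Q_\alpha$, with the Kähler polarization induced by the complex structure $J$ corresponding to $\Omega$ under Proposition~\ref{prop:idcorr}, yields the claimed equality; the admissibility conditions on $\Omega$ are exactly what guarantee measurability of the integrand and legitimacy of the reduction to $Q_\alpha$.

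Finally, I would pass to the completion. By Proposition~\ref{prop:scohdenseid} the subspace $\cH^{\stc}$ is dense in $\cH^{\st}$, while $\cH^{\hrc}$ is dense in $\cH^{\hr}$ \cite{Oe:holomorphic}; since $\bst$ is an inner-product preserving bijection between these dense subspaces it extends uniquely to an isometric isomorphism $\bst:\cH^{\st}\to\cH^{\hr}$, mapping $\cohnsr_\tau$ to $\cohnh_\tau$ by construction. The hard part, and the step deserving the most care, is the reduction of the infinite-dimensional Schrödinger inner product to a finite-dimensional computation: one must check that the quotient $Q_\alpha$ can be chosen so that the restriction of $\Omega$ and of the induced $J$ to the finite-dimensional data genuinely satisfies the hypotheses of Proposition~\ref{prop:bstfc}, so that the finite-dimensional Gram matrix it produces really coincides with the holomorphic one. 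Everything else is a routine density and linear-algebra argument.
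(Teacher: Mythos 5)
Your proposal is correct and follows essentially the same route as the paper: define $\bst$ on the coherent-state generators by $\cohnsr_\tau\mapsto\cohnh_\tau$, verify that the Gram matrices agree (the paper likewise reduces this to a finite-dimensional Gaussian computation on suitable quotients, exploiting that the reduced coherent-state wave functions are cylindrical), and extend by density using Proposition~\ref{prop:scohdenseid} and the density of $\cH^{\hrc}$ in $\cH^{\hr}$. Your explicit remarks on well-definedness via the Gram matrix and on the legitimacy of the reduction to $Q_\alpha$ merely spell out details the paper leaves implicit.
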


We have thus arrived at an infinite-dimensional version of the Segal-Bargmann transform $\bst$ indirectly, i.e., by first working out the finite-dimensional version, applying it to coherent states and generalizing those. In a more direct approach, some of the results of Proposition~\ref{prop:bstfc} generalize, with suitable modifications, to the infinite-dimensional case. However, we shall not detail this approach here.

\subsection{Representation of linear observables}
\label{sec:linobs}

A classical observable in the present setting is a function $F:L\to\R$. Its quantization according to (\ref{eq:gquantobs}) is simplest if the associated Hamiltonian vector field $X_F$ is the complex conjugate of a polarized vector field. Then, given an adapted symplectic potential $\theta$, equations (\ref{eq:adaspot}) and (\ref{eq:polfunc}) ensure that the expression (\ref{eq:gquantobstriv}) simplifies to a multiplication operator,
\begin{equation}
 \check{F} (\psi s) = F \psi s .
\end{equation}

In the Schrödinger polarization, which is real, the condition that $X_F$ be polarized means that at each point $\xi\in L$, the vector $(X_F)_\xi$ be in the subspace $M$ of $L$. But since $M$ is Lagrangian this is equivalent to the derivative of $F$ at $\xi$ vanishing in all directions spanned by $M$. In other words, this is equivalent to $F$ depending only on the quotient space $Q$ of $L$. So the quantization of observables $F:Q\to\R$ that depend only on ``configurations'' is for wave functions $\psi\in\cH^{\sr}$ given by the multiplication rule,
\begin{equation}
 (\check{F} \psi)(\phi) = F(\phi) \psi(\phi) .
\label{eq:squantobsconf}
\end{equation}
Indeed, this is the well known Schrödinger quantization rule for configuration space observables. However, this applies directly only in the case where $Q$ (or equivalently $L$) is finite-dimensional. In the infinite-dimensional case we have to consider instead of $\cH^{\sr}$ the space $\cH^{\st}$ of reduced wave functions. Replacing full wave functions with reduced wave functions has no effect on expression (\ref{eq:squantobsconf}). However, the elements of $\psi\in\cH^{\st}$ are functions on $\hat{Q}$ rather than functions on $Q$. If $Q$ is infinite-dimensional, $\hat{Q}$ is strictly larger than $Q$. So, for (\ref{eq:squantobsconf}) to make sense the observable $F$ needs to be extended to a function on $\hat{Q}$. This is canonically possible in particular, if $F$ is a continuous and almost translation invariant function on $Q$, recall the relevant remarks in Section~\ref{sec:infinis}.

As already mentioned, the formula (\ref{eq:gquantobs}) on its own cannot be used for arbitrary observables, but just the ones that preserve the polarization. In the following we shall limit the discussion to linear continuous observables. These, in particular, preserve the polarization. In that case the Hamiltonian vector field $X_F$ of an observables becomes translation-invariant and we can identify it with an element of $L$ itself. Its defining property (\ref{eq:hvectobs}) may then be written as
\begin{equation}
 2\omega(\xi,X_F)=F(\xi)\qquad\forall\xi\in L .
\label{eq:hamiltobs}
\end{equation}
The non-degeneracy of $\omega$ and (in the infinite-dimensional case) its compatibility with the topology of $L$ ensures that there really is a one-to-one correspondence between continuous linear maps $F:L\to\R$ and elements $X_F\in L$.

In the finite-dimensional case expression (\ref{eq:gquantobstriv}) takes on wave functions $\psi\in\cH^{\sr}$ the form
\begin{align}
 (\check{F} \psi)(\phi) = - [X_F,\phi]\, \psi(\phi) -\im\, (D_{X_F} \psi)(\phi)
\label{eq:squantlinobs}
\end{align}
for $\phi\in Q$. (Note that we have used both (\ref{eq:hamiltobs}) and (\ref{eq:sptosf}) to obtain this.) Here, $D_{\xi} \psi$ denotes the derivative of $\psi$ in the direction $\xi$, i.e., for $\phi\in Q$ and $\xi\in L$,
\begin{equation}
 (D_\xi\psi)(\phi) \defeq \lim_{t\to 0}\frac{\psi(\phi+t q(\xi))-\psi(\phi)}{t} .
\label{eq:sderiv}
\end{equation}

It is interesting to separate the two terms on the right hand side of (\ref{eq:squantlinobs}). The second term is absent for all wave functions iff $X_F\in M$, since precisely then $q(X_F)=0$ which translates to a vanishing of the derivative (\ref{eq:sderiv}). Since $M$ is a Lagrangian subspace of $L$ this is true if and only if $F$ vanishes on $M$, i.e., may be viewed as a function on the quotient space $Q$. In that case we have, moreover, $-[X_F,\xi]=2\omega(\xi,X_F)=F(\xi)$, so we recover the action (\ref{eq:squantobsconf}), as we should.
On the other hand, the first term on the right hand side of (\ref{eq:squantlinobs}) vanishes for all $\phi\in Q$ precisely if $X_F\in N$. Since $N$ is also a Lagrangian subspace of $L$, this is equivalent to $F$ vanishing on $N$. Since $N$ is a complement of the ``space of momenta'' $M$, we might view such a function $F$ as ``depending on momenta only''. Thus, we recover the usual Schrödinger quantization rule that momentum space observables are represented as derivative operators.

Since not all wave functions $\psi\in \cH^{\sr}$ are differentiable (more precisely: have a differentiable representative) the expression (\ref{eq:squantlinobs}) is only well defined on a subspace of $\cH^{\sr}$. Moreover, even if for a wave function $\psi\in\cH^{\sr}$ the expression (\ref{eq:squantlinobs}) can be evaluated, the resulting function $\check{F} \psi$ may not be an element of the Hilbert space $\cH^{\sr}$.
The action of the operator $\check{F}$ given by (\ref{eq:squantlinobs}) is well defined, however, on the dense subspace $\cH^{\src}\subseteq\cH^{\sr}$ spanned by the coherent states. Moreover, it leads there to elements of the Hilbert space $\cH^{\sr}$. Indeed, it is easy to explicitly evaluate the action on a coherent state,
\begin{equation}
 (\check{F} \cohs_\tau)(\phi) = \left( -[X_F,\phi]
 +[\tau,X_F]+\im\Omega(q(X_F),\phi-q(\tau))\right) \cohs_\tau(\phi) .
\label{eq:slocoh}
\end{equation}

Replacing full wave functions with reduced wave functions $\widetilde{\psi}\in\cH^{\st}$, the quantization rule (\ref{eq:squantlinobs}) is easily seen to take the form
\begin{equation}
 (\check{F} \widetilde{\psi})(\phi) = \left(- [X_F,\phi] +\im\Omega(q(X_F),\phi)\right)\,\widetilde{\psi}(\phi)
 -\im\, (D_{X_F} \widetilde{\psi})(\phi) .
\label{eq:srquantlinobs}
\end{equation}
If $Q$ (or $L$) is infinite-dimensional, however, the reduced wave functions are functions on the larger space $\hat{Q}$ rather than on $Q$. In particular, this means that for (\ref{eq:srquantlinobs}) to make sense the function $\phi\mapsto- [X_F,\phi]+\im \Omega(q(X_F),\phi)$ needs to be extended to $\hat{Q}$. Being continuous and linear, the function in question is almost translation invariant in the sense of Section~\ref{sec:infinis} and thus does extend canonically to a continuous linear function $\hat{Q}\to\C$. Observe that in the special case $X_F\in M$, corresponding to $F$ depending on the quotient space $Q$ only, action by multiplication is correctly recovered.

Also in the infinite-dimensional case the action of the quantum observable $\check{F}$ is well defined in particular on the subspace $\cH^{\stc}$ spanned by coherent states. Indeed, the explicit expression can be read off directly from (\ref{eq:slocoh}),
\begin{equation}
 (\check{F} \cohnsr_\tau)(\phi) = \left( -[X_F,\phi]
 +[\tau,X_F]+\im\Omega(q(X_F),\phi-q(\tau))\right) \cohnsr_\tau(\phi) .
\label{eq:silocoh}
\end{equation}
Again, the term in the brackets on the right hand side extends canonically to a continuous linear map $\hat{Q}\to\C$.

In the holomorphic polarization we continue to restrict attention to continuous linear observables $F:L\to\R$. It is useful to decompose $F$ into its holomorphic part $F^+$ and anti-holomorphic part $F^-$. Explicitly,
\begin{equation}
 F(\xi)=F^+(\xi)+F^-(\xi),\quad\text{where}\quad F^{\pm}(\xi)=\frac{1}{2}(F(\xi)\mp \im F(J\xi)) .
\end{equation}
Note that $F^+$ and $F^-$ are necessarily complex valued. We proceed to evaluate (\ref{eq:gquantobstriv}) with respect to the adapted symplectic potential (\ref{eq:kpotlin}) and the polarized section $u$.
For the Hamiltonian vector field $X_{F^+}$ of the holomorphic part $F^+$ equations (\ref{eq:adaspot}) and (\ref{eq:polfunc}) apply. For the Hamiltonian vector field $X_{F^-}$ of the anti-holomorphic part $F^-$ we obtain $\Theta(\xi,X_{F^-})=F^-(\xi)$. Thus, the action of $\check{F}$ on wave functions $\psi\in\cH^{\hr}$ reads,
\begin{align}
 (\check{F} \psi)(\xi) = F^{+}(\xi)\, \psi(\xi) -\im\, (D_{X_{F^-}} \psi)(\xi) .
\label{eq:hquantlinobs}
\end{align}
Again, this is not well defined for all elements of $\cH^{\hr}$. However, it is well defined on the subspace $\cH^{\hrc}$ spanned by the coherent states and leads there to elements in $\cH^{\hr}$. Indeed, the explicit action on a coherent state wave function is,
\begin{align}
 (\check{F} \cohh_\tau)(\xi) & = \left( F^{+}(\xi) + F^{-}(\tau)\right) \cohh_\tau(\xi) \label{eq:hqlocoh} \\
& = \left(F(\xi)+\frac{\im}{2}\{\xi-\tau,X_F\}\right) \cohh_\tau(\xi) \label{eq:hqlocoh1} \\
& = \left(F(\tau)+\frac{\im}{2}\{X_F,\xi-\tau\}\right) \cohh_\tau(\xi) . \label{eq:hqlocoh2} 
\end{align}

In the holomorphic polarization, creation and annihilation operators take a particularly simple form. Indeed, they correspond respectively to the first and the second term on the right hand side of (\ref{eq:hquantlinobs}). That is, a creation operator arises from a holomorphic linear observable and acts by multiplication with the wave function. An annihilation operator arises from an anti-holomorphic linear observable and acts by derivation on the wave function. Moreover, coherent states are eigen-states of annihilation operators, as explicit in (\ref{eq:hqlocoh}). On the other hand, coherent states are often presented as the states that arise from the action of exponentiated creation operators on the vacuum. Indeed, let $\check{F}^+$ be the creation operator corresponding to the linear observables $F$. That is, $\check{F}^+$ is the creation operator part in (\ref{eq:hqlocoh}) or, equivalently, the quantization of $F^+$. Then,
\begin{equation}
 \exp\left(\check{F}^+\right) \cohh_0= \cohh_\tau,
\end{equation}
where $\tau=-J X_F$.

In the finite-dimensional case, the isomorphism $\bst:\cH^{\sr}\to\cH^{\hr}$ is not only an isomorphism of Hilbert spaces, it is for linear observables also an isomorphism of representations. To make this statement precise, we have to remember that not all linear observables are represented on $\cH^{\sr}$ or $\cH^{\hr}$. It is thus appropriate to identify dense subspaces of $\cH^{\sr}$ or $\cH^{\hr}$ where they are represented. Moreover, these dense subspaces should be closed under the action of these observables. It is customary to use to this end the spaces that are generated by the action of (products of) linear observables (or equivalently, creation operators) on the respective vacuum state. We shall denote these by $\cH^{\srp}$ and $\cH^{\hrp}$, respectively. Since we have put much emphasis on coherent states, however, we prefer to consider the larger spaces generated by the action of (products of) linear observables on general coherent states. We shall denote these spaces by $\cH^{\srcp}$ and $\cH^{\hrcp}$, respectively. The relevant statement is then the following.

\begin{prop}
\label{prop:floisom}
$\bst$ restricts to an isomorphism $\cH^{\srcp}\to\cH^{\hrcp}$ of representations of linear observables. That is, let $F:L\to\R$ be linear and $\psi\in \cH^{\srcp}$. Then, $\bst\psi\in\cH^{\hrcp}$ and $\bst \check{F} \psi=\check{F}\bst \psi$.
\end{prop}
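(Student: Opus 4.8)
The plan is to prove the intertwining first on the coherent-state generators of $\cH^{\srcp}$ and then extend by linearity and induction. By definition $\cH^{\srcp}$ is spanned by the vectors $\check{F}_1\cdots\check{F}_n\cohs_\tau$ with the $F_i$ linear and $\tau\in L$, and likewise $\cH^{\hrcp}$ by the $\check{F}_1\cdots\check{F}_n\cohh_\tau$. Since $\bst$ is linear and $\bst\cohs_\tau=\cohh_\tau$ by Proposition~\ref{prop:bstfc} (resp.\ Theorem~\ref{thm:liisom}), it suffices to establish the single identity $\bst\,\check{F}\cohs_\tau=\check{F}\,\bst\cohs_\tau$ for every linear $F$ and every $\tau$; an induction on the number of factors then yields $\bst\,\check{F}_1\cdots\check{F}_n\cohs_\tau=\check{F}_1\cdots\check{F}_n\cohh_\tau$, which simultaneously shows $\bst(\cH^{\srcp})\subseteq\cH^{\hrcp}$ (and, applied to $\bst^{-1}$, equality) and the relation $\bst\check{F}=\check{F}\bst$ on $\cH^{\srcp}$.

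The key observation is that, on coherent states, $\check{F}$ acts in both representations as the \emph{same} first-order differential operator in the coherent-state label $\tau$. Write $\left.\frac{\xd}{\xd t}\right|_{t=0}\cohs_{\tau+t\eta}$ for the norm-derivative of the coherent-state family in the direction $\eta\in L$; this is not to be confused with the configuration-space derivative $D_\eta$ of (\ref{eq:sderiv}). Differentiating the closed form (\ref{eq:swcohi}) of $\cohs_\tau$ in $\tau$ along $X_F$ reproduces exactly the $\phi$-dependent prefactor of (\ref{eq:slocoh}) up to a factor $\im$, so that a direct computation gives
\begin{equation}
 \check{F}\cohs_\tau=\im\left.\frac{\xd}{\xd t}\right|_{t=0}\cohs_{\tau+t X_F}+a(\tau)\,\cohs_\tau,
\end{equation}
where, using the symmetry of $\Omega$ and $g$ together with (\ref{eq:sptosf}) and (\ref{eq:hamiltobs}), the remaining scalar collapses to $a(\tau)=\omega(\tau,X_F)-\tfrac{\im}{2}g(X_F,\tau)$. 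On the holomorphic side, differentiating (\ref{eq:hcoh}) gives $\left.\frac{\xd}{\xd t}\right|_{t=0}\cohh_{\tau+t\eta}=\tfrac12\{\eta,\xi\}\cohh_\tau$, and inserting this into (\ref{eq:hqlocoh2}) yields the companion identity
\begin{equation}
 \check{F}\cohh_\tau=\im\left.\frac{\xd}{\xd t}\right|_{t=0}\cohh_{\tau+t X_F}+a(\tau)\,\cohh_\tau
\end{equation}
with the same scalar $a(\tau)$. Checking that the two scalars coincide is the heart of the matter, and reduces to a short algebraic identity.

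To bridge the two formulas I use that $\bst$ intertwines label-differentiation. Indeed, for fixed $\eta$ the families $t\mapsto\cohs_{\tau+t\eta}$ and $t\mapsto\cohh_{\tau+t\eta}$ are norm-differentiable and satisfy $\bst\cohs_{\tau+t\eta}=\cohh_{\tau+t\eta}$ for all $t$; since $\bst$ is bounded it commutes with the difference-quotient limit, whence $\bst\bigl(\left.\frac{\xd}{\xd t}\right|_{t=0}\cohs_{\tau+t\eta}\bigr)=\left.\frac{\xd}{\xd t}\right|_{t=0}\cohh_{\tau+t\eta}$. Applying $\bst$ to the Schrödinger identity and using $\bst\cohs_\tau=\cohh_\tau$ then produces precisely the holomorphic identity, i.e.\ $\bst\check{F}\cohs_\tau=\check{F}\cohh_\tau=\check{F}\bst\cohs_\tau$, settling the single-operator case. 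In the infinite-dimensional setting one repeats the computation with the reduced wave functions $\cohnsr_\tau$ and the action (\ref{eq:silocoh}) in place of (\ref{eq:slocoh}); the algebra is identical. The main obstacle is functional-analytic rather than algebraic: one must justify norm-differentiability of the coherent-state families (so that the derivative vectors genuinely lie in $\cH^\sr$ resp.\ $\cH^\hr$ and $\check{F}\cohs_\tau$ remains in the domain) and justify pulling $\bst$ through the limit. Both follow from the explicit Gaussian and analytic dependence of the coherent states on $\tau$ together with the boundedness of $\bst$, but they are the points that require genuine care.
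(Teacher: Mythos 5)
Your argument is correct and takes a genuinely different route from the paper. The paper's proof is only sketched: it grades $\cH^{\srcp}$ and $\cH^{\hrcp}$ by the number of creation operators applied and verifies the intertwining degree by degree on basis elements, leaving the computation to the reader. You instead observe that in \emph{both} representations $\check{F}$ acts on coherent states as the same first-order differential operator in the coherent-state \emph{label}, $\check{F}\coh_\tau=\im\,\partial_{X_F}\coh_\tau+a(\tau)\coh_\tau$ with one universal scalar $a(\tau)=\omega(\tau,X_F)-\tfrac{\im}{2}g(X_F,\tau)$, and then use boundedness of $\bst$ together with $\bst\cohs_\sigma=\cohh_\sigma$ to pull $\bst$ through the label-derivative. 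I checked the scalar on both sides: on the holomorphic side it drops out of (\ref{eq:hqlocoh2}) immediately, and on the Schrödinger side the $\phi$-dependent terms cancel using the isotropy of $M$ and $JM$ and the $g$-orthogonality of $M$ and $JM$, leaving exactly the same $a(\tau)$. This is the heart of the matter and it holds. What your formulation buys is a coordinate-free reason \emph{why} the intertwining works (both operators are the same vector field plus the same multiplier on the coherent-state manifold), rather than a brute-force check on a graded basis; the paper's approach, by contrast, makes the closure of $\cH^{\srcp}$ and $\cH^{\hrcp}$ under $\check{F}$ completely explicit at each degree.

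Two points need slightly more care than you give them. First, your induction as stated only invokes the identity $\bst\check{F}\cohs_\tau=\check{F}\bst\cohs_\tau$ on coherent states, but at the second step $\check{F}_1$ acts on $\check{F}_2\cdots\check{F}_n\cohs_\tau$, which is no longer a coherent state; you should close the argument by differentiating your identity $\check{F}\cohs_\sigma=\im\,\partial_{X_F}\cohs_\sigma+a(\sigma)\cohs_\sigma$ repeatedly in $\sigma$ (legitimate since $\check{F}$ acts on the $\phi$-variable and $a$ is linear in $\sigma$), which shows that $\check{F}$ preserves the span of label-derivatives of $\cohs_\tau$ and acts there by a formula with the same universal coefficients on both sides; the intertwining on $\cH^{\srcp}$ then follows since $\bst$ maps label-derivatives to label-derivatives. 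Second, a small citation slip: (\ref{eq:swcohi}) is the \emph{reduced, normalized} wave function $\cohnsr_\tau$, whereas (\ref{eq:slocoh}) is written for the unnormalized $\cohs_\tau$ of (\ref{eq:defcohs}); the scalar $a(\tau)$ you quote is the one for the unnormalized convention on both sides (for normalized states it becomes $\omega(\tau,X_F)$), so you must fix one normalization consistently in the two representations.
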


The proof of this statement can be performed by grading the spaces $\cH^{\srcp}$ and $\cH^{\hrcp}$ according to the number of creation operators applied and verifying the statement explicitly on basis elements at each degree. We leave the details to the reader.

As in the finite-dimensional case the isomorphism $\bst:\cH^{\st}\to\cH^{\hr}$ becomes an isomorphism of representations of linear continuous observables in the same sense as described above. The only remark in order here is that the space of reduced wave functions $\cH^{\stcp}$ generated by application of linear observables to coherent states consists entirely of continuous almost translation invariant functions. Thus, there is no difficulty in defining these on all of $\hat{Q}$.

\begin{prop}
\label{prop:iloisom}
$\bst$ restricts to an isomorphism $\cH^{\stcp}\to\cH^{\hrcp}$ of representations of continuous linear observables. That is, let $F:L\to\R$ be continuous linear and $\widetilde{\psi}\in \cH^{\stcp}$. Then, $\bst\widetilde{\psi}\in\cH^{\hrcp}$ and $\bst \check{F} \widetilde{\psi}=\check{F}\bst \widetilde{\psi}$.
\end{prop}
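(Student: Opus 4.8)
The plan is to split the assertion into a combinatorial part and an analytic part. The combinatorial part is the observation that the whole statement follows formally once one knows the single \emph{monomial identity}
\[
 \bst\bigl(\check{F}_1\cdots\check{F}_n\,\cohnsr_\tau\bigr)=\check{F}_1\cdots\check{F}_n\,\cohnh_\tau
 \qquad(n\ge 0,\ \tau\in L,\ F_i\ \text{continuous linear}).
\]
Indeed, granting this in every degree, for any continuous linear $F$ and any generator $\chi=\check{F}_1\cdots\check{F}_n\,\cohnsr_\tau$ one reads off $\bst\check{F}\chi=\check{F}\check{F}_1\cdots\check{F}_n\,\cohnh_\tau=\check{F}\bst\chi$, the two equalities being the monomial identity in degrees $n+1$ and $n$; linearity over the spanning generators then yields both $\bst(\cH^{\stcp})=\cH^{\hrcp}$ and the intertwining relation on all of $\cH^{\stcp}$. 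I stress that the monomial identity must be proved on its own and \emph{not} via this bootstrap, on pain of circularity; the ``grading by number of creation operators'' serves only to organize the resulting bookkeeping.

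To prove the monomial identity I would exploit that every generator is \emph{cylindrical}. Multiplication by a continuous linear functional and the directional derivative $D_{X_{F_i}}$ of (\ref{eq:sderiv}) both preserve almost translation invariance, so, starting from the cylindrical reduced wave function $\cohnsr_\tau$ of (\ref{eq:swcohi}), each application of $\check{F}_i$ through (\ref{eq:silocoh}) keeps the result cylindrical. Consequently a single finite-dimensional quotient $Q_\alpha$ of $Q$—one containing the images of $q(\tau)$ and of all $q(X_{F_i})$ together with the relevant directions of $\Omega$—supports the whole computation, and likewise a single finite-dimensional subspace of $L$ suffices on the holomorphic side. On this finite-dimensional sector $\bst$ reduces to the finite-dimensional Segal--Bargmann transform, the operators $\check{F}_i$ act as the finite-dimensional Schrödinger observables (\ref{eq:squantlinobs}), and the monomial identity is exactly the content of Proposition~\ref{prop:floisom}. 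Cylindricity is also precisely what guarantees that each generator, and the multiplier $\phi\mapsto-[X_F,\phi]+\im\,\Omega(q(X_F),\phi)$ appearing in (\ref{eq:srquantlinobs}), extends canonically to $\hat{Q}$, so that the operators are genuinely well defined on $\cH^{\stcp}\subset\cH^{\st}$. As an illuminating cross-check of the base case one may instead verify $\bst\check{F}\cohnsr_\tau=\check{F}\cohnh_\tau$ directly, matching the affine multiplier of (\ref{eq:silocoh}) against $F^{+}+F^{-}(\tau)$ in (\ref{eq:hqlocoh}) by writing the creation part as a parameter-derivative $\frac{\xd}{\xd t}\big|_{0}\cohnh_{\tau+t\sigma}$ with $\sigma=-J X_F$ (up to an additive scalar multiple of the coherent state) and using that the bounded operator $\bst$ commutes with this derivative; the residual scalars are reconciled through the $J\leftrightarrow\Omega$ correspondence of Proposition~\ref{prop:idcorr}.

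The main obstacle is the compatibility claim underlying the reduction: that the infinite-dimensional $\bst$ of Theorem~\ref{thm:liisom}, defined merely as the completion of the coherent-state correspondence $\cohnsr_\tau\mapsto\cohnh_\tau$, really does restrict on the cylindrical subspace through $Q_\alpha$ to the finite-dimensional transform of Section~\ref{sec:pisom}. Establishing this requires identifying the projective-limit Gaussian measure $\nu_Q$ with its finite-dimensional marginal on $Q_\alpha$, reconciling the reduced-wave-function normalization of $\cH^{\st}$ with the full-wave-function normalization of the finite-dimensional $\cH^{\sr}$, and checking that the two finite-dimensional sectors—on $Q_\alpha$ and on the corresponding subspace of $L$—are matched by $\bst$; for all of this I would rely on the inductive-limit description of these Hilbert spaces in \cite{Oe:holomorphic}. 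Once this compatibility is in hand the analytic content is slight and, as the text indicates, the argument is essentially that of the finite-dimensional Proposition~\ref{prop:floisom}.
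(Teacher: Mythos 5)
Your proposal is correct and follows essentially the same route as the paper: the paper proves the finite-dimensional counterpart (Proposition~\ref{prop:floisom}) by grading by the number of creation operators and checking the identity on monomial generators, and then disposes of the infinite-dimensional case by the single observation that all generators of $\cH^{\stcp}$ are continuous almost translation invariant (cylindrical) functions, hence well defined on $\hat{Q}$ and reducible to finite-dimensional quotients — exactly your monomial identity plus cylindricity reduction. Your explicit flagging of the compatibility of the coherent-state-defined $\bst$ with the finite-dimensional transform on cylindrical sectors is a detail the paper leaves implicit, but it is an elaboration of the same argument rather than a different one.
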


We finally recall that geometric quantization for polarization preserving observables is designed to satisfy the correspondence (\ref{eq:crobs}) between the Poisson bracket of classical observables and the commutator of their quantization. In the present setting of continuous linear observables, this correspondence is rigorously satisfied as may be verified explicitly from (\ref{eq:squantlinobs}), (\ref{eq:srquantlinobs}) or (\ref{eq:hquantlinobs}).

\section{Affine field theory}
\label{sec:affine}

\subsection{Basic structures}
\label{sec:abasic}

We proceed in this Section to consider a more general class of field theory: \emph{affine field theory}.
We start by considering the basic structures in analogy to the linear case of Section~\ref{sec:linftstruc}. Firstly, we suppose now that the space of solutions $A$ is an affine space. That is, there exist a corresponding real vector space $L$ with a transitive and free abelian group action $L\times A\to A$, written as addition ``$+$''. This allows to identify canonically all the tangent spaces $T_\eta A$ with $L$. The symplectic potential may then be seen as a map $\theta:A\times L\to\R$, linear in the second argument. We shall switch from here onwards to the notation $\theta(\eta,\xi)$ instead of $\theta_\eta(\xi)$.
Our second key assumption is that the symplectic potential is equivariant with respect to the affine structure in the following sense: There exists a bilinear form $[\cdot,\cdot]:L\times L\to\R$ such that
\begin{equation}
 \theta(\eta+\xi,\tau)=\theta(\eta,\tau)+[\xi,\tau] \qquad \forall \eta\in A,\forall \xi,\tau\in L .
\end{equation}
This implies in turn that the symplectic form is independent of the base point and may be viewed as an anti-symmetric bilinear map $\omega:L\times L\to\R$ given in terms of the bilinear form $[\cdot,\cdot]$ by (\ref{eq:sptosf}). We shall assume moreover that the symplectic form is non-degenerate.

As becomes evident at this point, the structures appearing in linear field theory as discussed in Section~\ref{sec:linftstruc} resurface here in the context of the tangent space $L$. Indeed, we shall import all of these structures with their obvious modified meaning as well as related assumptions to the present affine case, without necessarily enumerating them explicitly. In the few cases where modified definitions are in order we will provide them in this section. We will also provide additional structures relevant only in the affine case.

For consideration of the Schrödinger representation we suppose that $\theta$ is the adapted symplectic potential. We define $M$, $N$ and $Q$ as in Section~\ref{sec:linftstruc}. Then, as in the linear case, $M$ as a subspace of $L$ defines the Schrödinger polarization. Recall that there is a special section $s$ of the prequantum bundle $B$ satisfying (\ref{eq:trivsec}) with respect to $\theta$ as well as (\ref{eq:normsec}). From here onwards we fix this choice of $s$. The sections defining the Schrödinger polarized Hilbert space then take the form $\psi s$, where $\psi$ is a complex function on the quotient space of ``field configurations on the hypersurface''
\begin{equation}
 C\defeq A/M .
\end{equation}
Note that $C$ is an affine space over $Q$. We denote the quotient map $A\to C$ by $c$. We remark that the symplectic potential $\theta$ may be viewed as a map $A\times Q\to\R$.

For the holomorphic representation, i.e., a Kähler polarization, we need as an additional ingredient apart from the classical data already described a \emph{complex structure} on the tangent spaces of $A$. Since these tangent spaces are all canonically identified with $L$ and the symplectic structure is independent of the base point it will suffice to consider a single complex structure $J$ on $L$ as described in Section~\ref{sec:linftstruc}.
In contrast to the linear case a natural choice for the Kähler potential is obtained only via a choice of base point $\eta\in A$,
\begin{equation}
 K^\eta(\zeta)\defeq\frac{1}{2}g(\zeta-\eta,\zeta-\eta) .
\end{equation}
The adapted symplectic potential $\Theta^\eta:A\times L\to\C$ from (\ref{eq:ksympot}) is then,
\begin{equation}
 \Theta^\eta(\zeta,\xi)=-\frac{\im}{2}\{\zeta-\eta,\xi\} .
\end{equation}
Define the complex function $\alpha^\eta$ on $A$ by
\begin{equation}
\alpha^\eta(\zeta)\defeq\exp\left(\frac{\im}{2} \theta(\eta,\zeta-\eta) + \frac{\im}{2} \theta(\zeta,\zeta-\eta)-\frac{1}{4} g(\zeta-\eta,\zeta-\eta)\right) .
\label{eq:alphagqa}
\end{equation}
As is easily verified this satisfies
\begin{equation}
 \xd\alpha^\eta=-\im\alpha^\eta(\Theta^\eta -\theta) .
\end{equation}
Thus, it follows from (\ref{eq:trivid}) that the section $u^\eta\defeq \alpha^\eta s$ of $B$ satisfies (\ref{eq:trivsec}) with respect to $\Theta^\eta$.

\subsection{Pairing and isomorphism in finite dimensions}
\label{sec:pisoma}

We start by considering the Schrödinger representation in the case that $C$ (or equivalently $A$) is finite-dimensional. In the language of Section~\ref{sec:shgquant} the polarized wave function $\psi$ multiplying the section $s$ is a function of the quotient space $C$ rather than of all of $A$. For the inner product (\ref{eq:trivip}) to be normalizable the integration has to be restricted to $C$. $\mu$ is then a Lebesgue measure on $C$ that we shall denote by $\mu_C$. It turns out that it is conveniently normalized by the condition,
\begin{equation}
\int \exp\left(-g(j(\varphi-\eta),j(\varphi-\eta))\right)\,\xd\mu_C(\varphi) =1 .
\end{equation}
Here $\eta\in A$ is an arbitrary base point. Due to the translation invariance of $\mu_C$, the normalization condition is the same for any choice of $\eta$. In the following we shall denote by $\cH^{\sr}$ the complex Hilbert space $\rL^2(C,\mu_C)$ of complex square-integrable functions on $C$ with respect to the measure $\mu_C$. We denote its inner product by $\langle\cdot,\cdot\rangle^{\sr}$.

The holomorphic representation for affine field theory is described in detail in \cite{Oe:affine}. We recall here some key elements. Firstly, since the section $u^\eta$ depends on the choice of a base point $\eta$ it is convenient to use instead the same section $s$ as in the Schrödinger representation as reference. This means that wave functions can be written in the form
\begin{equation}
\psi(\zeta)=\psi^{\hr,\eta}(\zeta-\eta)\alpha^\eta(\zeta),
\label{eq:decahwv}
\end{equation}
where $\psi^{\hr,\eta}$ is a holomorphic and square-integrable function on $L$ with respect to the measure $\nu_L$ given by (\ref{eq:fdhmeasure}) in the finite-dimensional case. Crucially, while the explicit decomposition (\ref{eq:decahwv}) depends on a base point $\eta$, the characterization of $\psi^{\hr,\eta}$ does not. That is, if $\psi^{\hr,\eta}$ for some fixed $\eta\in A$ is a holomorphic function in $\cL^2(L,\nu_L)$ then so is $\psi^{\hr,{\eta'}}$ for any other $\eta'\in A$.\footnote{Here and in the following we shall use the notation $\cL^2$ for the complete positive semi-definite space of square integrable functions. In contrast, $\rL^2$ denotes as before the Hilbert space of equivalence classes of these modulo functions with support of measure zero.} Moreover, the inner product is preserved under a change of base point. We thus obtain a well defined Hilbert space of wave functions which we denote by $\cH^{\ha}$.

By fixing a base-point $\eta\in A$ the coherent states introduced in Section~\ref{sec:pisom} can be readily imported into the affine setting. Concretely, the wave function for such a coherent state corresponding to the solution $\tau\in L$ is obtained by replacing $\psi^{\hr,\eta}$ in (\ref{eq:decahwv}) with $\cohh_\tau$ given by (\ref{eq:hcoh}). We denote the coherent state thus obtained by $K^\eta_\tau$. It is more convenient, however, to use a definition of coherent state intrinsic to the affine setting. Following \cite{Oe:affine} we define the \emph{affine coherent state} $\coha_\eta$ associated to $\zeta\in A$ by
\begin{equation}
\cohah_\eta(\zeta)\defeq \exp\left(\frac{\im}{2}\theta(\eta,\zeta-\eta)
 +\frac{\im}{2}\theta(\zeta,\zeta-\eta)-\frac{1}{4}g(\zeta-\eta,\zeta-\eta)\right) .
\label{eq:cohah}
\end{equation}
This definition is characterized by the reproducing property in $\cH^{\ha}$,
\begin{equation}
\langle\cohah_{\zeta},\psi\rangle^{\ha}=\psi(\zeta) .
\label{eq:coharp}
\end{equation}
Note also that the affine coherent states are already normalized. Indeed, $\coha_\zeta$ coincides, up to a phase, with the normalized version of the coherent state $K^\eta_\tau$ if $\zeta=\eta+\tau$. We denote the subspace of $\cH^{\ha}$ spanned by the coherent states by $\cH^{\hac}$.

Similarly to the linear case we define a continuous function $\bsfa:A\times C\to\C$,
\begin{equation}
\overline{\bsfa(\zeta,\varphi)}\defeq \exp\left(\im\, \theta(\zeta,\varphi-c(\zeta)) -\frac{1}{2}\Omega(\varphi-c(\zeta),\varphi-c(\zeta)) \right) .
\end{equation}
We use this to define an isometric isomorphism of $\cH^{\hac}$ to the subspace $\cH^{\src}$ of $\cH^{\sr}$. As in the linear case, this is precisely the subspace spanned by coherent states in the Schrödinger representation.
\begin{prop}
\label{prop:bstfca}
There is a subspace $\cH^{\src}\subseteq \cH^{\sr}$ and an isometric isomorphism $\bsta:\cH^{\src}\to \cH^{\hac}$ with the following properties:
\begin{align}
 (\bsta\psi)(\zeta) & = \int  \psi(\varphi) \bsfa(\zeta,\varphi)\,\xd\mu_C(\varphi) && \forall \psi\in\cH^{\src} , \label{eq:deffa} \\
 (\bsta^{-1}\eta)(\varphi) & = \int \eta(\zeta) \overline{\bsfa(\zeta,\varphi)}\,\xd\mu_A(\zeta) &&
\forall \eta\in \cH^{\hrc}, \label{eq:deffinva} \\
 \langle \eta, \bsta\psi \rangle^{\ha} & = \int \overline{\eta(\zeta)}\, \psi(c(\zeta))\, \xd\mu_A(\zeta) && \forall \eta\in \cH^{\hac}\, \forall \psi\in\cH^{\src}, \label{eq:ppropfa} \\
 \langle \psi, \bsta^{-1}\eta \rangle^{\sr} & = \int \overline{\psi(c(\zeta))}\, \eta(\zeta)\,\xd\mu_A(\zeta) && \forall \psi\in \cH^{\src}\, \forall \eta\in\cH^{\hac} . \label{eq:ppropfinva}
\end{align}
Also, the holomorphic coherent state wave function $\cohah_\zeta$ associated with $\zeta\in A$ is mapped under $\bsta^{-1}$ to the following wave function:
\begin{equation}
 \cohas_{\zeta}(\varphi)\defeq
 (\bsta^{-1} \cohah_{\zeta})(\varphi)  = \overline{\bsfa(\zeta,\varphi)} .
\label{eq:defcohsa}
\end{equation}
\end{prop}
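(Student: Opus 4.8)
The plan is to mirror the proof of Proposition~\ref{prop:bstfc}, establishing every assertion first on affine coherent states and then extending by linearity and density. First I would take (\ref{eq:deffinva}) as the \emph{definition} of $\bsta^{-1}$. Setting $\eta=\cohah_\zeta$, the Gaussian factor $\exp(-\tfrac{1}{2}\Omega(\varphi-c(\zeta),\varphi-c(\zeta)))$ appearing in $\overline{\bsfa}$ makes the integrand integrable over $C$, so the integral is well defined. The cleanest route to (\ref{eq:defcohsa}) is to recognize the integral as a (slightly extended) inner product $\langle\cohah_\zeta,\cdot\rangle^{\ha}$ in the affine holomorphic Hilbert space: writing both factors through the decomposition (\ref{eq:decahwv}) for a fixed base point, one matches $\overline{\bsfa(\zeta,\varphi)}\,\xd\mu_A$ against the standard form of the $\cH^{\ha}$ inner product, whose Gaussian weight $|\alpha^\eta|^2$ is carried implicitly by the wave-function values. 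Taking the complex conjugate of the reproducing property (\ref{eq:coharp}) then yields $\cohas_\zeta(\varphi)=\overline{\bsfa(\zeta,\varphi)}$ directly; alternatively the same identity follows by explicit Gaussian integration.

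Having produced the family $\cohas_\zeta$, I would define $\cH^{\src}\subseteq\cH^{\sr}$ to be their linear span, which is by construction the Schrödinger-side image of the coherent states. Next I would verify the forward formula (\ref{eq:deffa}) on an affine coherent state: substituting $\psi=\cohas_{\zeta'}=\overline{\bsfa(\zeta',\cdot)}$ into the right-hand side of (\ref{eq:deffa}) and performing the Gaussian integral over $C$ should reproduce $\cohah_\zeta$, showing that $\bsta$ and $\bsta^{-1}$ are mutually inverse on coherent states, and hence that $\bsta$ is a bijection of the relevant spanning sets. The isometry then follows by computing $\langle\cohas_\zeta,\cohas_{\zeta'}\rangle^{\sr}$ as a Gaussian integral over $C$ and comparing it with $\langle\cohah_\zeta,\cohah_{\zeta'}\rangle^{\ha}$, which by the reproducing property (\ref{eq:coharp}) equals $\cohah_{\zeta'}(\zeta)$; the two agree.

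Finally I would check the pairing identities (\ref{eq:ppropfa}) and (\ref{eq:ppropfinva}) on coherent states. Inserting the coherent-state wave functions and invoking base-point independence of the affine holomorphic inner product collapses the weighted integral over $A$ to the same value as the respective left-hand sides. Here it is worth noting that the absence of an explicit $\alpha$ factor, in contrast to the linear formula (\ref{eq:ppropf}), reflects the affine convention of writing holomorphic wave functions with respect to the section $s$ as in (\ref{eq:decahwv}), so that the Gaussian weight $|\alpha^\eta|^2$ is already incorporated into the value $\eta(\zeta)$ rather than appearing separately. With all four displayed formulas verified on a spanning set, they hold on $\cH^{\src}$ and $\cH^{\hac}$ by linearity.

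The main obstacle is the bookkeeping around the base point $\eta$: each identification of an integral against $\mu_A$ with the $\cH^{\ha}$ inner product passes through the decomposition (\ref{eq:decahwv}), and one must check that the base-point-dependent factors $\alpha^\eta$ cancel, so that the resulting formulas are manifestly $\eta$-independent (as the well-definedness of $\cH^{\ha}$ requires). Once this cancellation is controlled, the remaining work is routine affine Gaussian integration together with the reproducing property, in exact parallel with the linear case.
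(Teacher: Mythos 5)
Your proposal follows exactly the route the paper intends: the paper's own ``proof'' of this proposition is the single remark that it is very similar to that of Proposition~\ref{prop:bstfc} and is left to the reader, and your adaptation --- defining $\bsta^{-1}$ by (\ref{eq:deffinva}), extracting (\ref{eq:defcohsa}) from the conjugated reproducing property (\ref{eq:coharp}), taking $\cH^{\src}$ to be the span of the $\cohas_\zeta$, and verifying (\ref{eq:deffa}), the isometry, and the two pairing identities on coherent states, with the correct accounting of where the base-point factors $\alpha^\eta$ are absorbed --- is precisely that adaptation. One small slip worth fixing: the integral in (\ref{eq:deffinva}) is over $A$, not over $C$, and the Gaussian factor of $\overline{\bsfa}$ decays only in the configuration directions (it is constant along the fibres of $c$), so integrability there requires the Gaussian decay of the affine coherent state wave function $\cohah_\zeta$ itself in the transverse ($M$) directions, not the kernel alone.
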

The proof, very similar to that of Proposition~\ref{prop:bstfc}, is left to the reader. Proposition~\ref{prop:scohdensefd} also generalizes to the affine case.
\begin{prop}
\label{prop:scohdensefda}
The subspace $\cH^{\src}$ is dense in $\cH^{\sr}$.
\end{prop}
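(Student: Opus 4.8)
The plan is to reduce the statement to its linear counterpart, Proposition~\ref{prop:scohdensefd}, by fixing a base point and matching coherent states. First I would pick an arbitrary base point $\eta_0\in A$ and use it to identify the affine configuration space $C$ with the linear quotient $Q$ via $\varphi\mapsto\phi\defeq\varphi-c(\eta_0)$. Under this identification the normalization of $\mu_C$, namely $\int\exp(-g(j(\varphi-\eta),j(\varphi-\eta)))\,\xd\mu_C=1$ (which is base-point independent by translation invariance), becomes precisely the normalization (\ref{eq:muq}) of $\mu_Q$, since $\Re\Omega(\phi,\phi)=g(j(\phi),j(\phi))$ by (\ref{eq:jtoomega}). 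Hence $\rL^2(C,\mu_C)\cong\rL^2(Q,\mu_Q)$ isometrically, and it suffices to show that the affine coherent states, transported through this isomorphism, span the same subspace of $\rL^2(Q,\mu_Q)$ as the linear coherent states of Proposition~\ref{prop:bstfc}.

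Next I would substitute $\zeta=\eta_0+\sigma$, $\sigma\in L$, into the coherent state wave function (\ref{eq:defcohsa}). Using $\varphi-c(\zeta)=\phi-q(\sigma)$, the affine equivariance $\theta(\eta_0+\sigma,\cdot)=\theta(\eta_0,\cdot)+[\sigma,\cdot]$, and the symmetry of $\Omega$, a direct expansion shows that $\cohas_{\eta_0+\sigma}(\varphi)$ factorizes as a $\phi$-independent constant times the Gaussian $\cohs_0(\phi)=\exp(-\tfrac12\Omega(\phi,\phi))$ of (\ref{eq:swvaci}), times the exponential of the complex-linear functional $\phi\mapsto\Omega(q(\sigma),\phi)+\im[\sigma,\phi]+\im\,\theta(\eta_0,\phi)$. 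This is exactly the structure of the linear coherent state, whose reduced wave function (\ref{eq:swcohi}) produces the functional $\phi\mapsto\Omega(q(\tau),\phi)+\im[\tau,\phi]$; the only discrepancy is the fixed extra term $\im\,\theta(\eta_0,\cdot)$.

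The remaining step is to absorb that extra term. Since the setting is finite-dimensional, non-degeneracy of $[\cdot,\cdot]\colon M\times Q\to\R$ yields a unique $\xi_0\in M$ with $[\xi_0,\cdot]=\theta(\eta_0,\cdot)$ on $Q$; as $\xi_0\in M=\ker q$, the substitution $\tau=\sigma+\xi_0$ leaves $q(\sigma)$ unchanged while turning $\im[\sigma,\cdot]+\im\,\theta(\eta_0,\cdot)$ into $\im[\tau,\cdot]$. Thus $\cohas_{\eta_0+\sigma}$ equals a nonzero scalar multiple of the linear coherent state indexed by $\sigma+\xi_0$, and $\sigma\mapsto\sigma+\xi_0$ is a bijection of $L$. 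Consequently the two families of coherent states agree up to scalar factors and relabeling, so they span the identical subspace $\cH^{\src}\subseteq\rL^2(Q,\mu_Q)$. Density of $\cH^{\src}$ in $\cH^{\sr}$ then follows immediately from Proposition~\ref{prop:scohdensefd}.

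I expect the only genuine work to lie in the bookkeeping of the second step — confirming that the $\phi$-quadratic part of the expanded exponent collapses exactly to the linear vacuum Gaussian $\cohs_0$, and that the leftover $\phi$-independent terms are truly constant in $\phi$ — together with verifying the one honestly affine point, the matching of the normalizations of $\mu_C$ and $\mu_Q$. No analytic subtleties (continuity or admissibility of $\Omega$) enter, precisely because $C$ is finite-dimensional here, which is what makes the clean reduction to the linear case available.
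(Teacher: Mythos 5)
Your proposal is correct and follows exactly the route the paper intends: the paper's own ``proof'' is the one-line remark that the statement is ``essentially reducible to that of Proposition~\ref{prop:scohdensefd} by choice of a base point,'' and your argument is precisely the fleshed-out version of that reduction, including the two details genuinely worth checking (the matching of the normalizations of $\mu_C$ and $\mu_Q$, and the absorption of the residual term $\im\,\theta(\eta_0,\cdot)$ into the coherent-state label via an element $\xi_0\in M=\ker q$ supplied by the non-degeneracy of $[\cdot,\cdot]\colon M\times Q\to\R$).
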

The proof, essentially reducible to that of Proposition~\ref{prop:scohdensefd} by choice of a base point, is also left to the reader. Thus, $\bsta$ extends canonically to an isometric isomorphism $\cH^{\sr}\to \cH^{\ha}$ that we shall also denote by $\bsta$.

The emerging scenario is rather similar to that of the linear setting. On the one hand, there is a single Schrödinger representation naturally isomorphic to all holomorphic representations for different complex structures $J$. On the other hand distinct choices of complex structure $J$ induce distinct choices of families of coherent states. As in the linear case these are naturally parametrized by symmetric bilinear forms $\Omega$ with positive definite real part. Proposition~\ref{prop:fdcorr} continues to describe this correspondence. In crucial difference to the linear case, however, there is no distinguished coherent state analogous to the vacuum state. This means, there does not arise a canonical factorization analogous to expression (\ref{eq:factswv}).

\subsection{Schrödinger representation in infinite dimensions}
\label{sec:infinisa}

Before proceeding to construct the Schrödinger representation in the case when the space of solutions $A$ is infinite-dimensional, we recall relevant aspects of the holomorphic representation in that case.

As in the finite-dimensional case, given a base point $\eta\in A$, we decompose wave functions $\psi$ according to equation (\ref{eq:decahwv}). We then require $\psi^{\hr,\eta}$ to be the restriction to $L$ of a holomorphic function in $\cL^2(\hat{L},\nu_L)$. This yields a Hilbert space of wave functions which we call $\cH^{\ha}$. The independence of this Hilbert space structure under choice of base point is not only true in the finite-dimensional, but also in the infinite-dimensional case (Lemma~4.1 in \cite{Oe:affine}). We also remark specifically that the affine coherent state $\coha_\zeta$ associated to $\zeta\in A$ is represented by the wave function $\cohah_\zeta$ given by (\ref{eq:cohah}) also in the infinite-dimensional case.

Moving on to the Schrödinger representation
we suppose we are given an admissible symmetric bilinear form $\Omega:Q\times Q\to\C$ as defined in Section~\ref{sec:infinis}.  As in that section, to obtain a Hilbert space of square-integrable functions, it is necessary to use reduced wave functions. In contrast to the linear setting there is no canonical way to do this. Instead, any choice of base point $\eta\in A$ induces a canonical factorization,
\begin{equation}
\psi(\varphi)=\psi^{\sr,\eta}(\varphi-c(\eta))\cohas_\eta(\varphi) ,
\label{eq:factswva}
\end{equation}
generalizing the factorization (\ref{eq:factswv}). We shall refer to $\psi^{\sr,\eta}$ in (\ref{eq:factswva}) as the $\eta$-reduced wave function of the state $\psi$. Note that $\psi^{\sr,\eta}$ is a function on $Q$ rather than on $C$. We may then proceed similarly to the linear case by declaring the space of $\eta$-reduced wave functions to be the Hilbert space $\rL^2(\hat{Q},\nu_Q)$ if $Q$ is infinite-dimensional. It turns out that choosing any other base point $\eta'\in A$ leads to a canonically equivalent notion of Hilbert space of $\eta'$-reduced wave functions.

\begin{lem}
Let $\eta,\eta'\in A$. If $\psi^{\sr,\eta}\in \cL^2(\hat{Q},\nu_Q)$, then
\begin{equation}
\psi^{\sr,\eta'}(\phi)\defeq \psi^{\sr,\eta}(\phi+q(\eta'-\eta)) \frac{\cohas_\eta(\phi+c(\eta'))}{\cohas_{\eta'}(\phi+c(\eta'))}
\end{equation}
is well defined as a function on $\hat{Q}$ and is element of $\cL^2(\hat{Q},\nu_Q)$. Moreover, the map $\cL^2(\hat{Q},\nu_Q)\to\cL^2(\hat{Q},\nu_Q)$ defined in this way is an isometric isomorphism.
\end{lem}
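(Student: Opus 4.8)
The plan is to recognise the defining formula for $\psi^{\sr,\eta'}$ as nothing but the re-expression of one and the same full Schrödinger wave function in two different factorizations, and then to obtain the $\cL^2$-statement from the quasi-invariance of the Gaussian measure $\nu_Q$ under Cameron--Martin translations. First I would record the algebraic content. Since $C$ is affine over $Q$ we have $c(\eta')-c(\eta)=q(\eta'-\eta)=:\delta$, and a direct substitution shows that the stated formula is precisely the identity making
\begin{equation}
\psi^{\sr,\eta'}(\varphi-c(\eta'))\,\cohas_{\eta'}(\varphi)=\psi^{\sr,\eta}(\varphi-c(\eta))\,\cohas_\eta(\varphi)
\end{equation}
hold for all $\varphi$; that is, both sides are the same full wave function of (\ref{eq:factswva}) written relative to the two base points. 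This also makes it manifest that interchanging $\eta$ and $\eta'$ produces the inverse map, so once norm preservation is established the isometric isomorphism follows at once.

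Next I would compute the ratio entering the definition. Putting $\varphi=\phi+c(\eta')$ into (\ref{eq:defcohsa}), the two arguments of $\cohas$ become $\phi+\delta$ (for $\eta$) and $\phi$ (for $\eta'$), and using the equivariance of $\theta$ together with the symmetry and bilinearity of $\Omega$ one finds
\begin{equation}
\frac{\cohas_\eta(\phi+c(\eta'))}{\cohas_{\eta'}(\phi+c(\eta'))}=\exp\!\left(-\im\,[\eta'-\eta,\phi]-\Omega(\phi,\delta)+\im\,\theta(\eta,\delta)-\tfrac{1}{2}\Omega(\delta,\delta)\right) .
\end{equation}
The decisive point for well-definedness on $\hat{Q}$ is that the non-cylindrical quadratic terms $\Omega(\phi,\phi)$ in numerator and denominator cancel, leaving an exponent that is \emph{linear} in $\phi$ plus a constant. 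By the admissibility of $\Omega$ (Section~\ref{sec:infinis}), both $\phi\mapsto[\eta'-\eta,\phi]$ and $\phi\mapsto\Omega(\phi,\delta)$ are continuous linear functionals on $Q$, hence cylindrical, and so extend canonically to continuous functions on $\hat{Q}$. As $\delta\in Q\subseteq\hat{Q}$, the translate $\phi\mapsto\psi^{\sr,\eta}(\phi+\delta)$ is likewise a genuine function on $\hat{Q}$; therefore $\psi^{\sr,\eta'}$ is well defined on $\hat{Q}$.

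It then remains to prove membership in $\cL^2(\hat{Q},\nu_Q)$ together with norm preservation. Taking moduli and using that $\Re\Omega$ is exactly the inner product defining $\nu_Q$ gives
\begin{equation}
|\psi^{\sr,\eta'}(\phi)|^2=|\psi^{\sr,\eta}(\phi+\delta)|^2\,\exp\!\left(-2\Re\Omega(\phi,\delta)-\Re\Omega(\delta,\delta)\right) .
\end{equation}
Integrating against $\nu_Q$ and changing variables $\phi\mapsto\phi-\delta$ — admissible because $\delta$ lies in the Cameron--Martin space $Q$, so that $\nu_Q(\cdot-\delta)$ is equivalent to $\nu_Q$ with explicit Radon--Nikodym derivative $\exp\!\left(2\Re\Omega(\cdot,\delta)-\Re\Omega(\delta,\delta)\right)$ — the two exponential weights cancel identically, yielding $\|\psi^{\sr,\eta'}\|^2=\|\psi^{\sr,\eta}\|^2$. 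This simultaneously shows $\psi^{\sr,\eta'}\in\cL^2(\hat{Q},\nu_Q)$ and that the map is isometric; combined with the symmetric inverse from the first paragraph, it is an isometric isomorphism.

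I expect the main obstacle to be the rigorous justification of this change of variables in infinite dimensions, i.e.\ the Cameron--Martin quasi-invariance of $\nu_Q$ under shifts by $\delta\in Q$ with the stated density. The cleanest route seems to be to perform the cancellation on each finite-dimensional quotient $Q_\alpha$, where $\delta$ and the linear functionals $[\eta'-\eta,\cdot]$ and $\Omega(\cdot,\delta)$ all descend and the computation is an elementary finite-dimensional Gaussian change of variables, and then to pass to the projective limit exactly as in the construction of $\nu_Q$ in \cite{Oe:holomorphic}. Since the exact cancellation of the two weights is already visible at finite level, no uniformity issues arise in the limit.
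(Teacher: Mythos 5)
Your proposal is correct and follows essentially the same route as the paper: you compute the ratio of coherent-state wave functions, observe that its exponent is linear in $\phi$ plus a constant (hence cylindrical and canonically extendable to $\hat{Q}$), and then cancel the resulting Gaussian weight $\exp\left(-2\Re\Omega(\phi,\delta)-\Re\Omega(\delta,\delta)\right)$ against the Radon--Nikodym derivative of the translated measure, which is exactly the paper's appeal to Proposition~3.11 of \cite{Oe:holomorphic} (the Cameron--Martin translation formula for $\nu_Q$). Your extra remarks --- that the formula just re-expresses one full wave function in two factorizations, so swapping $\eta$ and $\eta'$ gives the inverse --- only make explicit what the paper leaves implicit.
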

\begin{proof}
Consider the quotient of coherent states,
\begin{multline}
\beta_{\eta,\eta'}(\phi)\defeq\frac{\cohas_\eta(\phi+c(\eta'))}{\cohas_{\eta'}(\phi+c(\eta'))}
 = \exp\bigg(\im\theta(\eta,\eta'-\eta)-\frac{1}{2}\Omega(q(\eta'-\eta),q(\eta'-\eta))\\
-\im [\eta'-\eta,\phi]-\Omega(q(\eta'-\eta),\phi)\bigg) .
\label{eq:cohsared}
\end{multline}
Crucially, the continuous function $\beta_{\eta,\eta'}:Q\to\C$ is almost translation invariant and hence extends canonically to a continuous function $\hat{Q}\to\C$ which we also denote by $\beta_{\eta,\eta'}$. This makes $\psi^{\sr,\eta'}$ well defined as a function on $\hat{Q}$. To see that the inner product is conserved when changing the base point consider two states $\psi,\chi$. Then,
\begin{align}
\langle\chi^{\sr,\eta'},\psi^{\sr,\eta'}\rangle^{\st}
& = \int_{\hat{Q}} \psi^{\sr,\eta'}(\phi) \overline{\chi^{\sr,\eta'}(\phi)}\,\xd\nu_{Q}(\phi)\\
& = \int_{\hat{Q}} \psi^{\sr,\eta}(\phi+q(\eta'-\eta)) \overline{\chi^{\sr,\eta}(\phi+q(\eta'-\eta))} \nonumber\\
&\qquad \exp\left(-\Re\Omega(2\phi+q(\eta'-\eta),q(\eta'-\eta))\right)
\xd\nu_{Q}(\phi) \label{eq:swvaequiv2}\\
& =\int_{\hat{Q}} \psi^{\sr,\eta}(\phi) \overline{\chi^{\sr,\eta}(\phi)}\,\xd\nu_{Q}(\phi) \label{eq:swvaequiv3}\\
& = \langle\chi^{\sr,\eta},\psi^{\sr,\eta}\rangle^{\st} .
\end{align}
Here we have used Proposition~3.11 of \cite{Oe:holomorphic} in the step from expression (\ref{eq:swvaequiv2}) to expression (\ref{eq:swvaequiv3}).
\end{proof}

As in the linear case we denote the Hilbert space of the infinite-dimensional Schrödinger representation by $\cH^{\st}$. However, we do this in the present case with the understanding that the concrete presentation of a state in $\cH^{\st}$ as an element of $\rL^2(\hat{Q},\nu_Q)$ requires the choice of a base point $\eta\in A$ and is in the form of an $\eta$-reduced wave function. Note that from (\ref{eq:factswva}) we can read off the $\eta'$-reduced wave function $\coha^{\sr,\eta'}_\eta$ of the affine coherent state $\coha_\eta$ associated with $\eta\in A$ as coinciding with expression (\ref{eq:cohsared}), $\coha^{\sr,\eta'}_\eta=\beta_{\eta,\eta'}$. We denote the subspace of coherent states by $\cH^{\stc}$. This is a dense subspace of $\cH^{\st}$. The proof is essentially reduced to that of Proposition~\ref{prop:scohdenseid} by choosing a base point.

\begin{prop}
The subspace $\cH^{\stc}$ is dense in $\cH^{\st}$.
\end{prop}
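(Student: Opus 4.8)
The plan is to reduce the claim to its linear counterpart, Proposition~\ref{prop:scohdenseid}, by fixing a base point. First I would pick an arbitrary $\eta_0\in A$. By the preceding Lemma, representing states of $\cH^{\st}$ through their $\eta_0$-reduced wave functions furnishes an isometric isomorphism $\cH^{\st}\to\rL^2(\hat{Q},\nu_Q)$, and the target is precisely the Hilbert space of the \emph{linear} Schrödinger representation attached to the same admissible form $\Omega$, since the Gaussian measure $\nu_Q$ is built from $\Omega$ exactly as in Section~\ref{sec:infinis}. Under this identification the affine coherent state $\coha_\zeta$ ($\zeta\in A$) is sent to its $\eta_0$-reduced wave function $\coha^{\sr,\eta_0}_\zeta=\beta_{\zeta,\eta_0}$, so $\cH^{\stc}$ goes to the span of $\{\beta_{\zeta,\eta_0}:\zeta\in A\}$. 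It therefore suffices to show that this span coincides with the span of the linear coherent state wave functions $\cohnsr_\tau$ of (\ref{eq:swcohi}).

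The key step is a direct comparison. Writing $\zeta=\eta_0+\tau$ with $\tau\in L$ (possible as $A$ is affine over $L$) and substituting into (\ref{eq:cohsared}), the bilinearity of $\Omega$ and of $[\cdot,\cdot]$ collapses the $\phi$-dependent exponent of $\beta_{\eta_0+\tau,\eta_0}$ to $\im[\tau,\phi]+\Omega(q(\tau),\phi)$, which is exactly the $\phi$-dependent exponent of $\cohnsr_\tau$ in (\ref{eq:swcohi}). The two remaining, $\phi$-independent, exponents differ only by $\im\theta(\eta_0+\tau,-\tau)+\tfrac{\im}{2}[\tau,\tau]$, whence
\begin{equation}
 \beta_{\eta_0+\tau,\eta_0}=c(\tau)\,\cohnsr_\tau,\qquad c(\tau)=\exp\left(-\im\theta(\eta_0,\tau)-\tfrac{\im}{2}[\tau,\tau]\right)\neq 0,
\end{equation}
where the explicit form of $c(\tau)$ follows from the equivariance of $\theta$ in its first argument.

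Since $\tau\mapsto\eta_0+\tau$ is a bijection $L\to A$ and each $\beta_{\eta_0+\tau,\eta_0}$ is a nonzero scalar multiple of $\cohnsr_\tau$, the span of the affine coherent state wave functions equals the span of $\{\cohnsr_\tau:\tau\in L\}$, which is dense in $\rL^2(\hat{Q},\nu_Q)$ by Proposition~\ref{prop:scohdenseid}. Transporting this back through the isometric base-point identification shows $\cH^{\stc}$ is dense in $\cH^{\st}$. There is no essential obstacle here: the genuine analytic content is carried entirely by the base-point Lemma and by Proposition~\ref{prop:scohdenseid} (ultimately the finite-dimensional denseness of Appendix~\ref{sec:proof}), and the only thing to watch is that the affine prefactor $\im\theta(\zeta,\eta_0-\zeta)$ contributes no $\phi$-dependence once $\zeta=\eta_0+\tau$ is inserted, which is immediate from the equivariance and linearity of $\theta$.
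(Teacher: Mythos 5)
Your proof is correct and follows exactly the route the paper indicates: it reduces the claim to Proposition~\ref{prop:scohdenseid} by choosing a base point, and your explicit verification that $\beta_{\eta_0+\tau,\eta_0}$ is a unimodular multiple of $\cohnsr_\tau$ (so the two coherent-state spans coincide) correctly fills in the computation the paper leaves to the reader.
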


\subsection{Relation to the holomorphic representation}
\label{sec:relholoma}

As we have seen, the quantization of an affine space of solutions remains determined by a complex structure $J:L\to L$ in the holomorphic case and an admissible bilinear form $\Omega:Q\times Q\to\C$ in the Schrödinger case. As for linear theories the correspondence between the two given by Proposition~\ref{prop:idcorr} translates to an isometric isomorphism $\bsta:\cH^{\st}\to\cH^{\ha}$ also for affine theories, generalizing the finite-dimensional case of Proposition~\ref{prop:bstfca}. To see this it is sufficient to observe that the inner product of coherent states takes the same form in the infinite-dimensional case as in the finite-dimensional case.

\begin{thm}
\label{thm:bsta}
There is an isometric isomorphism of Hilbert spaces $\bsta:\cH^\st\to\cH^\ha$. Moreover, given a base point $\eta\in A$, $\bsta$ maps the $\eta$-reduced Schrödinger wave function $\coha^{\sr,\eta}_\tau$ to the holomorphic wave function $\cohah_\tau$ for all $\tau\in L$.
\end{thm}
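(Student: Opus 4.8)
The plan is to follow the template of the linear Theorem~\ref{thm:liisom} essentially verbatim, with the genuinely new work concentrated in the base-point bookkeeping peculiar to the affine setting. By Proposition~\ref{prop:idcorr} the admissible form $\Omega$ defining the Schrödinger representation and the complex structure $J$ defining the holomorphic representation determine each other, so both Hilbert spaces $\cH^{\st}$ and $\cH^{\ha}$ are simultaneously available. I would then define $\bsta$ first on the dense subspaces spanned by coherent states and extend it by continuity, the whole construction resting on the fact — already isolated in the remark preceding the theorem — that inner products of coherent states coincide in the two representations.

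Concretely, I would fix an arbitrary base point $\eta\in A$ and define a linear map on $\cH^{\stc}$ sending the $\eta$-reduced coherent state wave function $\coha^{\sr,\eta}_\zeta=\beta_{\zeta,\eta}$ of \eqref{eq:cohsared} to the holomorphic coherent state wave function $\cohah_\zeta$ of \eqref{eq:cohah}. The crucial input, which makes this map simultaneously well-defined and isometric, is the identity
\begin{equation}
\langle \coha^{\sr,\eta}_\zeta,\coha^{\sr,\eta}_{\zeta'}\rangle^{\st}
=\langle \cohah_\zeta,\cohah_{\zeta'}\rangle^{\ha}
\qquad\forall\, \zeta,\zeta'\in A .
\end{equation}
I would establish this exactly as in the linear case: for any fixed pair $\zeta,\zeta'$ both inner products may be evaluated on a suitable finite-dimensional quotient of $A$ (respectively of $Q$ and of $L$), where Proposition~\ref{prop:bstfca} already supplies the isometric correspondence; since the two sides then agree in finite dimensions, they agree here. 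Well-definedness follows because a vanishing linear combination of Schrödinger coherent states has vanishing norm, hence by the displayed equality its image has vanishing norm in $\cH^{\ha}$; isometry is immediate from the same equality applied to all pairs.

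Having a linear isometry on $\cH^{\stc}$, I would pass to the completion. Because $\cH^{\stc}$ is dense in $\cH^{\st}$ (the density Proposition at the end of Section~\ref{sec:infinisa}) and $\cH^{\hac}$ is dense in $\cH^{\ha}$ (the analogue, for the holomorphic representation of \cite{Oe:affine}, of the linear statement used for $\cH^{\hrc}\subseteq\cH^{\hr}$), this extends to an isometric isomorphism $\bsta:\cH^{\st}\to\cH^{\ha}$ carrying coherent states to coherent states.

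The step I expect to require the most care is base-point independence, i.e.\ verifying that the single isomorphism $\bsta$ satisfies the stated coherent-state property for \emph{every} base point $\eta$, not merely the one used to set it up. Here I would invoke that $\cH^{\st}$ and $\cH^{\ha}$ are each intrinsically base-point independent: on the Schrödinger side through the preceding Lemma, which identifies $\eta$- and $\eta'$-reductions via the isometry implemented by $\beta_{\eta,\eta'}$ of \eqref{eq:cohsared}, and on the holomorphic side through Lemma~4.1 of \cite{Oe:affine}. Since the affine coherent states $\coha_\zeta$ are objects of both Hilbert spaces independent of any reduction, the prescription $\coha_\zeta\mapsto\coha_\zeta$ is manifestly the same for every base point; it then remains only to confirm that the two change-of-base-point isometries intertwine under $\bsta$, which again reduces to the coherent-state inner-product identity together with the explicit transformation \eqref{eq:cohsared}. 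This yields that $\bsta$ maps $\coha^{\sr,\eta}_\tau$ to $\cohah_\tau$ for every base point $\eta$, completing the argument.
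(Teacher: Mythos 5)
Your proposal is correct and follows essentially the same route as the paper, which disposes of the theorem in a single remark: the coherent-state inner products coincide in the two representations because they take the same form in infinite dimensions as in the finite-dimensional case already covered by Proposition~\ref{prop:bstfca}, so the map $\coha^{\sr,\eta}_\zeta\mapsto\cohah_\zeta$ on the dense coherent-state subspaces is a well-defined isometry and extends by completion. Your additional care over base-point independence (via the change-of-base-point Lemma and $\beta_{\eta,\eta'}$) is left implicit in the paper but is consistent with its treatment.
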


\subsection{Representation of affine observables}
\label{sec:aobs}

We consider in this section the representation of observables on the Hilbert spaces obtained in the quantization of affine field theory. A classical observable is thus a function $F:A\to\R$ that we shall require to be continuous.

We start by considering the Schrödinger representation. The first type of observable of interest is one where $F$ only depends on the quotient space $C$, corresponding to the fact that the associated Hamiltonian vector field $X_F$ is polarized, i.e., $(X_F)_\eta\in M$ for all $\eta\in A$. The operator $\check{F}$ that represents the quantization of $F$ is then given by multiplication according to (\ref{eq:gquantobs}). Specifically, for wave functions $\psi$, the action of $\check{F}$ is given by
\begin{equation}
 (\check{F} \psi)(\varphi) = F(\varphi) \psi(\varphi) .
\label{eq:saquantobsconf}
\end{equation}
This is true both, if $\psi\in\cH^{\sr}$ is a full wave function (and $C$ is finite-dimensional) or if $\psi\in\cH^{\st}$ is an $\eta$-reduced wave function for any $\eta\in A$. In the latter case and when $A$ is infinite-dimensional, however, the formula (\ref{eq:saquantobsconf}) has to be treated with care. Namely, $F$ really needs to be a function on the larger space $\hat{C}$ rather than on $C$. Recall that also that if $F$ is an almost translation invariant function on $C$ it canonically extends to $\hat{C}$. Compare similar remarks in Section~\ref{sec:linobs}.

The other type of observable which we shall be interested in, and to which we limit our attention in the following, is a continuous \emph{affine observable}. That is, the derivative $\xd F$ of $F$ is translation-invariant. In other words, there exists a continuous linear map $F_\lino:L\to\R$ such that
\begin{equation}
 F(\eta+\xi)=F(\eta)+F_\lino(\xi)\qquad\forall\eta\in A,\forall\xi\in L .
\end{equation}
The Hamiltonian vector field $X_F$ associated to $F$ is then translation-invariant and we can identify it with an element of $L$. Its defining property may then be written as
\begin{equation}
 2\omega(\xi,X_F)=F_\lino(\xi)\qquad\forall\xi\in L ,
\end{equation}
in analogy to (\ref{eq:hamiltobs}).

It turns out to be convenient to consider affine observables first in the holomorphic representation, before switching back to the Schrödinger representation. Since the wave functions in $\cH^{\ha}$ are trivialized with respect to the section of the prequantum bundle adapted to the symplectic potential $\theta$, suitably rewriting expression (\ref{eq:gquantobstriv}) yields the quantization rule
\begin{align}
 (\check{F} \psi)(\zeta) = \left(F(\zeta)- \theta(\zeta,X_F)\right)\, \psi(\zeta) -\im\, (D_{X_F} \psi)(\zeta) .
\label{eq:hquantlinobsa}
\end{align}
for $\psi\in\cH^{\ha}$ and $\zeta\in A$. This action is well defined in particular on the dense subspace of coherent states $\cH^{\hac}$ and leads there to elements of $\cH^{\ha}$. Concretely, the action on an affine coherent state $\coha_\eta$ for $\eta\in A$ takes the form,
\begin{align}
 (\check{F} \cohah_\eta)(\zeta)
 & = \left(F(\zeta)+\frac{\im}{2}\{\zeta-\eta,X_F\}\right) \cohah_\eta(\zeta)\\
 & = \left(F(\eta)+\frac{\im}{2}\{X_F,\zeta-\eta\}\right) \cohah_\eta(\zeta) .
\end{align}
This directly generalizes (\ref{eq:hqlocoh1}) and (\ref{eq:hqlocoh2}).

We return to the Schrödinger representation and consider first the special case of finite-dimensional $A$. Then, we can consider full wave functions $\psi\in\cH^{\sr}$. As these are trivialized also with respect to the same symplectic potential $\theta$, expression (\ref{eq:hquantlinobsa}) also applies. However, for wave functions $\psi$ that only depend on $C$ rather than on $A$ it is not manifest in this expression that the resulting wave function $\check{F} \psi$ also depends on $C$ only. This can be made explicit at the cost of choosing an arbitrary base point $\eta\in A$. Then,
\begin{align}
 (\check{F} \psi)(\varphi) = \left(F(\eta)- \theta(\eta,X_F)-[X_F,\varphi-c(\eta)]\right)\, \psi(\varphi) -\im\, (D_{X_F} \psi)(\varphi) .
\label{eq:squantlinobsa}
\end{align}
for $\varphi\in C$. In analogy to the linear case, the last term on the right hand side of (\ref{eq:squantlinobsa}) is absent precisely if  $X_F\in M$, which is equivalent to $F$ depending on $C$ only rather than on all of $A$. Then the other terms simplify to the multiplication formula (\ref{eq:saquantobsconf}). On the other hand, if $X_F\in N$, then the multiplicative part of the operator $\check{F}$ becomes a constant.
On the subspace of coherent states $\cH^{\src}$, the action of 
$\check{F}$ is well defined and leads to elements in $\cH^{\sr}$. Explicitly, for the affine coherent state associated to $\zeta\in A$ we obtain,
\begin{equation}
 (\check{F} \cohas_\zeta)(\varphi) = \left(F(\zeta)-[X_F,\varphi-c(\zeta)]+\im\Omega(q(X_F),\varphi-c(\zeta))\right) \cohas_\zeta(\varphi) .
\label{eq:salocoh}
\end{equation}

Moving on to the infinite-dimensional case we have to replace ordinary wave functions by $\eta$-reduced wave functions. For observables that only depend on $C$, the multiplication rule (\ref{eq:saquantobsconf}) remains unchanged. However, it is then essential that the observable be defined on (or canonically extensible to) $\hat{C}$. On the other hand, continuous affine observables $C\to\R$ are always canonically extensible to continuous affine maps $\hat{C}\to\R$. As is easily deduced from the above formulas, the action of $\check{F}$ on an $\eta$-reduced wave function takes the form,
\begin{equation}
 (\check{F} \psi^{\sr,\eta})(\phi) = \left(F(\eta) - [X_F,\phi] +\im\Omega(q(X_F),\phi)\right)\psi^{\sr,\eta}(\phi)
 -\im\, (D_{X_F} \psi^{\sr,\eta})(\phi) .
\label{eq:sarquantlinobs}
\end{equation}
For coherent states, the transition to $\eta$-reduced wave functions is particularly simple and leads to essentially the same expression (\ref{eq:salocoh}). In particular, $\check{F}$ is well defined on the space of coherent states $\cH^{\stc}$ and leads to states in $\cH^{\st}$.

Similarly to the case for linear theories, the isomorphism $\bsta:\cH^{\st}\to\cH^{\ha}$ of Theorem~\ref{thm:bsta} becomes an isomorphism of representations of affine continuous observables. To this end, we consider the subspaces $\cH^{\stcp}$ and $\cH^{\hacp}$ of $\cH^{\st}$ and $\cH^{\ha}$ respectively, generated by applying a finite number of quantized affine observables to $\cH^{\stc}$ and $\cH^{\hac}$ respectively. The statement concerning the isomorphism of representations is then the following.

\begin{prop}
$\bsta$ restricts to an isomorphism $\cH^{\stcp}\to\cH^{\hacp}$ of representations of continuous affine observables. That is, with $F:L\to\R$ continuous and affine $\bsta \check{F}=\check{F}\bsta$ on $\cH^{\stcp}$.
\end{prop}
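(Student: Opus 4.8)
The plan is to follow the same template as in the linear case (Propositions~\ref{prop:floisom} and~\ref{prop:iloisom}): grade the spaces $\cH^{\stcp}$ and $\cH^{\hacp}$ by the number of quantized affine observables applied to a coherent state, and verify the intertwining relation $\bsta\check{F}=\check{F}\bsta$ degree by degree. At degree zero the two spaces are the coherent-state spaces $\cH^{\stc}$ and $\cH^{\hac}$, on which $\bsta$ is pinned down completely by Theorem~\ref{thm:bsta}, namely $\bsta\,\coha^{\sr,\eta}_\tau=\cohah_\tau$. Since both $\cH^{\stcp}$ and $\cH^{\hacp}$ are by construction invariant under the $\check{F}$ and are spanned by iterated applications of such operators to coherent states, linearity together with induction on the degree reduces the whole statement to a single assertion: whenever $\psi\in\cH^{\stcp}$ is an element on which $\bsta$ is already known, one has $\bsta(\check{F}\psi)=\check{F}(\bsta\psi)$.

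The difficulty is that $\check{F}$ applied to a coherent state is not again a coherent state. By~(\ref{eq:salocoh}) the state $\check{F}\cohas_\zeta$ equals $\cohas_\zeta$ multiplied by the affine function $\varphi\mapsto F(\zeta)-[X_F,\varphi-c(\zeta)]+\im\Omega(q(X_F),\varphi-c(\zeta))$, and likewise $\check{F}\cohah_\eta$ is $\cohah_\eta$ times an affine function of $\zeta$ by~(\ref{eq:hquantlinobsa}). These ``first excited'' states lie in $\cH^{\stcp}$ and $\cH^{\hacp}$ but outside the pure coherent-state spaces, so Theorem~\ref{thm:bsta} does not directly tell us how $\bsta$ acts on them. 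The key device to overcome this is to differentiate the coherent-state correspondence with respect to its label: for $\zeta\in A$ and $\sigma\in L$, the family $t\mapsto \cohas_{\zeta+t\sigma}$ is differentiable in the $\cH^{\st}$-norm, and because $\bsta$ is isometric, hence bounded, the derivative passes through $\bsta$, giving
\begin{equation*}
 \bsta\left(\left.\frac{\xd}{\xd t}\right|_{t=0}\cohas_{\zeta+t\sigma}\right)
 =\left.\frac{\xd}{\xd t}\right|_{t=0}\cohah_{\zeta+t\sigma} .
\end{equation*}
Each side is the corresponding coherent state times a linear function of the label, so this determines $\bsta$ on all first-excited states, and iterating the differentiation determines it on excited states of every order.

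The remaining work is then a matching of affine prefactors. One writes the prefactor in~(\ref{eq:salocoh}) as a constant multiple of $\cohas_\zeta$ plus the label-derivative of $\cohas_{\zeta+t\sigma}$ at $t=0$ for the direction $\sigma$ dictated by $X_F$, carries out the analogous decomposition of~(\ref{eq:hquantlinobsa}) on the holomorphic side, and checks that the two decompositions are images of one another under the differentiated relation above. The correspondence between complex structures and bilinear forms (Proposition~\ref{prop:idcorr}), which is precisely what ties together $\Omega$, $[\cdot,\cdot]$ and $\{\cdot,\cdot\}$, is the input that identifies the Schrödinger prefactor with the holomorphic one. This establishes the degree-one intertwining; the inductive step repeats the argument with higher label-derivatives, so that $\bsta\check{F}=\check{F}\bsta$ holds on each graded piece and hence on all of $\cH^{\stcp}$, and the same computation run in reverse shows that $\bsta$ maps $\cH^{\stcp}$ onto $\cH^{\hacp}$.

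I expect the main obstacle to be the analytic justification in the infinite-dimensional setting rather than the algebra. Concretely, one must verify that the coherent-state families $t\mapsto\cohas_{\zeta+t\sigma}$ and $t\mapsto\cohah_{\zeta+t\sigma}$ are genuinely norm-differentiable, so that the bounded operator $\bsta$ may be commuted with $\frac{\xd}{\xd t}$; this rests on the fact that the coherent-state inner product has the same closed form in infinitely many dimensions as in finitely many (cf.\ Section~\ref{sec:relholoma}), from which differentiability follows by expanding the norm of a difference quotient. A secondary bookkeeping point, already anticipated in Section~\ref{sec:aobs}, is that in infinite dimensions the affine prefactors are functions on $\hat{Q}$ rather than on $Q$; being continuous and affine they are almost translation invariant and so extend canonically, while the admissibility of $\Omega$ guarantees that the terms $[X_F,\cdot]$ and $\Omega(q(X_F),\cdot)$ are continuous, so no genuine difficulty arises here.
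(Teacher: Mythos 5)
Your proposal is correct and follows essentially the same route as the paper, which proves this proposition (by reference to the linear case) by grading $\cH^{\stcp}$ and $\cH^{\hacp}$ by the number of applied observables and verifying the intertwining relation degree by degree on states built from coherent states. Your device of differentiating the coherent-state correspondence $\bsta\,\coha^{\sr,\eta}_\zeta=\cohah_\zeta$ with respect to the label $\zeta$ (justified by norm-differentiability and boundedness of $\bsta$) is simply a clean, explicit way of carrying out the basis-element verification that the paper leaves to the reader, and your attention to the extension of the affine prefactors to $\hat{Q}$ matches the remarks already made in Section~\ref{sec:aobs}.
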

The proof of this statement is analogous to the proof in the linear case.

Finally, the commutation relations (\ref{eq:crobs}) are rigorously valid for continuous affine observables $F,G$ as considered here.

\section{Further issues}

\subsection{Other real polarizations}
\label{sec:realpol}

Although we have framed our discussion entirely in terms of the Schrödinger polarization, it is really applicable to any real polarization. Indeed, the Schrödinger polarization is obtained if we chose the adapted symplectic potential to be (\ref{eq:sympot}). However, we may make any other convenient choice for the symplectic potential. We shall limit our discussion to the linear case here. (The affine case merely requires additional choices of base points etc.)

Let $L$ be a vector space with a non-degenerate symplectic form $\omega:L\times L\to\R$ and a suitable topology. Given a closed isotropic subspace $M$ of $L$ and a closed isotropic complement $N$ define $[\cdot,\cdot]:L\times L\to\R$ as
\begin{equation}
[m+n,m'+n']\defeq 2\omega(m,n')\quad\forall m,m'\in M,\forall n,n'\in N .
\end{equation}
It is then easy to see that equation (\ref{eq:sptosf}) as well as the relevant further assumptions of Section~\ref{sec:linftstruc} are satisfied. In particular, $M$ is Lagrangian and plays the role of the subspace defining the real polarization.

Thus, the correspondence results (Proposition~\ref{prop:bstfc}, Theorem~\ref{thm:liisom}, Proposition~\ref{prop:floisom}, Proposition~\ref{prop:iloisom}) are true with the Schrödinger polarization replaced by any real polarization with choice of complement. Also, it is thus clear that there is a one-to-one correspondence of vacua between any two real polarizations with choices of complements. Similarly, in the affine case.

\subsection{Minimizing ingredients}
\label{sec:minimizing}

We have parametrized coherent states in the linear case by elements of $L$, which is natural from the point of view of the holomorphic representation due to the reproducing property. However, there are other ways to parametrize them which may appear more natural from the point of view of the Schrödinger representation.

Consider the continuous map $i_N:N\to Q$ given by the restriction of $q$ to the subspace $N$. Since $N$ is a complement of $M$ this is an isomorphism of vector spaces. By the Open Mapping Theorem, $i_N$ is open and thus a (generally non-isometric) isomorphism of Hilbert spaces. Let $Q^{\star}$ be the dual Hilbert space of $Q$ and consider the continuous map $\lambda_M:M\to Q^{\star}$ given by $(\lambda_M(\tau))(\xi)=[\tau,\xi]$. It is clear that $\lambda_M$ is continuous and a vector space isomorphism. Thus, again by the Open Mapping Theorem it is a (generally non-isometric) isomorphism of Hilbert spaces. On the other hand $N\times M\to L$ given by $(n,m)\mapsto n+m$ is an isomorphism of Hilbert spaces (again non-isometric). Thus, we have an isomorphism between $L$ and $Q\times Q^{\star}$. In particular, we may use it to parametrize coherent states by elements of $Q\times Q^{\star}$. Concretely, the reduced Schrödinger wave function (\ref{eq:swcohi}) of the coherent state associated to $(\sigma,\lambda)\in Q\times Q^{\star}$ is given by
\begin{equation}
 \cohnsr_{(\sigma,\lambda)}(\phi) =\exp\left(\Omega(\sigma,\phi)+\im \lambda(\phi)
  -\frac{1}{2}\Omega(\sigma,\sigma)-\frac{\im}{2}\lambda(\sigma)\right) .
 \label{eq:swcohredr}
\end{equation}
In particular, it is possible to construct the Hilbert space $\cH^{\st}$ and its coherent states explicitly with only the ingredients $Q$ and $\Omega$.

In the affine case similar remarks apply and we may parametrize affine coherent states (\ref{eq:defcohsa}) by elements of $C\times Q^{\star}$. Thus, the Schrödinger wave function of the affine coherent state associated to $(\sigma,\lambda)\in C\times Q^{\star}$ is,
\begin{equation}
\cohas_{(\sigma,\lambda)}(\varphi)=\exp\left(\im\, \lambda(\varphi-\sigma) -\frac{1}{2}\Omega(\varphi-\sigma,\varphi-\sigma) \right) .
\end{equation}

\subsection{Comparison with previous results}
\label{sec:CCQ}

The explicit correspondence between complex structures and Schrödinger vacuum wave functions was investigated previously in the context of a specific class of linear theories in \cite{CCQ:schroefock,CCQ:schroecurv}. More precisely, the context was that of linear real scalar field theories in globally hyperbolic curved spacetime. In the following we shall outline how the present framework reproduces, clarifies and generalizes key results of those papers.

As a first step we recast aspects of the presentation of the complex structure and of the Schrödinger vacuum wave function in a form more suitable for a comparison. Our starting point is a vector space $L$ with the additional structures as introduced in Section~\ref{sec:linftstruc}. Recall in particular, that $L$ is the direct sum of its closed subspaces $M$ and $N$. Denote the projectors onto the subspaces $M$ and $N$ corresponding to this decomposition by $p_M$ and $p_N$ respectively. That is, for $m\in M$ and $n\in N$ we define $p_M(m+n)\defeq m$ and $p_N(m+n)\defeq n$. Note that these are orthogonal projectors only if $N=J M$. Nevertheless, they are continuous since $p_N=i_N^{-1}\circ q$ with $i_N$ as defined in Section~\ref{sec:minimizing}. Similarly, for $p_M$.
Define now the linear maps
\begin{align}
& A: N\to N, && A(\xi)\defeq p_N(J\xi),\\
& B: M\to N, && B(\xi)\defeq p_N(J\xi),\\
& C: M\to M, && C(\xi)\defeq p_M(J\xi),\\
& D: N\to M, && D(\xi)\defeq p_M(J\xi) .
\end{align}
The maps $A,B,C,D$ obviously carry the full information about the complex structure $J$. Being composites of continuous maps these are continuous. Moreover, the maps $B$ and $D$ are continuously invertible. To see this for $B$, it is enough to realize that $B$ is the composition of the continuously invertible maps $J|_M: M\to JM$, $q|_{J M}:J M\to Q$ (proof of continuous invertibility as above) and $i_N^{-1}:Q\to N$. The case of $D$ is analogous.
By using the map $i_N$ to identify $Q$ with $N$ we may view the symmetric bilinear form $\Omega:Q\times Q\to\C$ as a symmetric bilinear form $\tilde{\Omega}:N\times N\to\C$. Using (\ref{eq:jtoomega}) it is then straightforward to express $\tilde{\Omega}$ in terms of the maps $A,B,C,D$,
\begin{equation}
\tilde{\Omega}(\phi,\phi')=\Omega(i_N(\phi),i_N(\phi'))=
[B^{-1}(\phi),\phi']-\im [C (B^{-1}(\phi)),\phi'] .
\label{eq:vacabcd}
\end{equation}

We proceed to recall aspects of the treatment presented in \cite{CCQ:schroefock}. The authors consider a linear scalar field theory in a globally hyperbolic spacetime. Fixing a Cauchy hypersurface coordinatized by $x\in\R^3$ their space $L$ of solutions is presented in terms of initial value data on that hypersurface. Concretely, the subspace $N$ of $L$ is given by field configurations $\phi:\R^3\to\R$. The subspace $M$ of $L$ is given by the normal derivatives of field values parametrized by $\pi:\R^3\to\R$ given as,
\begin{equation}
 \pi(x)\defeq \sqrt{h(x)} n^a(x)(\partial_a\phi)(x).
\end{equation}
Here $h$ is the induced Riemannian metric on the hypersurface, $n(x)$ the normal vector at $x$ and $\partial_a$ a coordinate derivative. The symplectic potential as a map $[\cdot,\cdot]:M\times N\to\R$ is then,
\begin{equation}
 [\pi,\phi]=\int \pi(x)\phi(x)\,\xd^3 x .
\label{eq:sympotabcd}
\end{equation}
Of course, elements of $M$ and $N$ really should be equivalence classes of square-integrable functions in a suitable sense, so the above definitions have to be suitably adapted. However, these details are irrelevant for our purposes here and we shall ignore them.
A complex structure $J$ on $L$ is encoded in terms of the operators $A,B,C,D$ as defined above. This is equation (33) in \cite{CCQ:schroefock}. (The minus sign in that equation arises from differing conventions compared to ours.)

Schrödinger wave functions are then viewed as functions on the ``configuration space'' $N$. From our perspective this amounts to canonically identifying $N$ with $Q$ via $i_N$. The vacuum wave function (\ref{eq:swvac}) viewed thus is,
\begin{equation}
\phi\mapsto\exp\left(-\frac{1}{2}\tilde{\Omega}(\phi,\phi)\right) .
\end{equation}
With $\tilde{\Omega}$ given by (\ref{eq:vacabcd}) and the symplectic potential given by (\ref{eq:sympotabcd}) this is,
\begin{equation}
\phi\mapsto\exp\left(-\frac{1}{2}\int \phi(x) ((B^{-1}-\im C B^{-1}) \phi)(x)\,\xd^3 x\right),
\end{equation}
reproducing exactly (65) of \cite{CCQ:schroefock}. The Gaussian measure (\ref{eq:smeasure}) as a measure on $N$ is thus,
\begin{equation}
\xd\nu_N(\phi)=\exp\left(-\Re\tilde{\Omega}(\phi,\phi)\right)\xd\mu_N(\phi)
 =\exp\left(-\int \phi(x) (B^{-1}\phi)(x)\,\xd^3 x\right)\xd\mu_N(\phi),
\end{equation}
where $\mu_N$ is a translation-invariant measure. This yields (42) of \cite{CCQ:schroefock}.

In the infinite dimensional case, wave functions on $N$ have to be replaced by reduced wave functions on a larger space $\hat{N}$, which is called the ``quantum configuration space'' in \cite{CCQ:schroefock}, but is left unspecified there. From our perspective, it is clear that the natural choice for $\hat{N}$ is that of the projective limit of finite-dimensional quotients (recall Section~\ref{sec:infinis}), or equivalently, the algebraic dual of the topological dual of $N$.

The following types of linear observables are considered in \cite{CCQ:schroefock},
\begin{align}
 (\phi,\pi)\mapsto\phi[f]\defeq \int f(x)\phi(x)\,\xd^3 x, \label{eq:cobs} \\
 (\phi,\pi)\mapsto \pi[g]\defeq \int \pi(x) g(x)\,\xd^3 x , \label{eq:pobs}
\end{align}
where $f$ and $g$ are parametrizing real valued functions on the hypersurface (identified with $\R^3$).
Concerning the first type of observable, given by (\ref{eq:cobs}), we may conveniently view the parametrizing function $f$ as an element of $M$ due to (\ref{eq:sympotabcd}). Then, the associated Hamiltonian vector field which we shall call $X_f$ may be written as $X_f=-f$. The quantization (\ref{eq:squantlinobs}) of this observable, which we shall denote by $\check{\phi}[f]$, is simply given by multiplication,
\begin{equation}
(\check{\phi}[f]\psi)(\phi)=\phi[f]\psi(\phi) ,
\label{eq:cact}
\end{equation}
coinciding with equation (25) of \cite{CCQ:schroefock}. For the second type of observable, (\ref{eq:pobs}), we may identify the parametrizing function $g$ with an element of $N$. The associated Hamiltonian vector field, which we shall denote as $X_g$ is thus $X_g=g$. The quantization (\ref{eq:squantlinobs}) of this observable, which we shall denote by $\check{\pi}[g]$ is given by,
\begin{equation}
(\check{\pi}[g]\psi)(\phi)=-\im (D_g \psi)(\phi)=-\im \int g(x) \frac{\delta}{\delta\phi(x)}\,\xd^3 x\; \psi(\phi),
\label{eq:pact}
\end{equation}
where the right-hand expression is a heuristic way to write the derivative. This is (26) in \cite{CCQ:schroefock} (without the ``multiplicative term'').

The above quantization rules (\ref{eq:cact}) and (\ref{eq:pact}) are as expected for the Schrödinger representation. While the transition from full wave functions to reduced wave functions that is necessary in the infinite-dimensional case leaves the first quantization rule invariant, it does modify the second one in a obvious way, recall (\ref{eq:srquantlinobs}). Translated into the current notation this becomes,
\begin{multline}
(\check{\pi}[g]\psi)(\phi)=\im\tilde{\Omega}(g,\phi)\psi(\phi)-\im (D_g \psi)(\phi)\\
=-\im \int \left(g(x)\frac{\delta}{\delta\phi(x)}-\phi(x)((B^{-1}-\im CB^{-1})g)(x)\right)\,\xd^3 x\; \psi(\phi).
\end{multline}
This reproduces the corresponding result (44) of \cite{CCQ:schroefock} and gives it a straightforward explanation.

\section{Discussion and Outlook}
\label{sec:outlook}

While traditionally somewhat neglected in quantum field theory, the Schrödinger representation possesses certain attractive features. One, as noted by Jackiw \cite{Jac:schroedinger}, is the fact that certain aspects of a quantum field theory such as the representation of transformation groups, can be described in a way that is independent of the choice of vacuum. Another attractive feature is the compatibility with the Feynman path integral. In particular, transition amplitudes are obtained from the Feynman path integral when evaluated with states in the Schrödinger representation at the boundaries.

On the other hand, the holomorphic representation is analytically better behaved than the Schrödinger representation. (Recall from Sections~\ref{sec:relholom} and \ref{sec:infinisa} that even in the infinite-dimensional case no extension of the domain on which wave functions are defined is required.) It is also better adapted to the formalism of creation and annihilation operators (as exhibited in Sections~\ref{sec:linobs} and \ref{sec:aobs}). In order to be able to take advantage of the benefits of both representations it is desirable to be able to switch between them. Precisely this should be facilitated by the results presented here.

Linear field theories are the simplest ones that one may consider. Nevertheless, they form the basis of the S-matrix and perturbative treatments. Also, even for linear theories their quantization in curved spacetime is still an issue that may not be viewed as generally resolved. Thus, the presented construction of their Schrödinger representation is hoped to be of some intrinsic interest. Affine field theories present a first generalization of linear field theories. A natural source of affine theories is provided by linear theories on which non-trivial boundary conditions are imposed. Also, connections naturally form affine spaces. However, other aspects of gauge theories such as gauge transformations and a lack of translation-invariance of the symplectic form (in non-abelian gauge theories) are outside of the scope of the present paper.
Our generalization for affine field theories also exhibits certain new features. In particular, the absence of a preferred vacuum state is shown not to be an obstruction to the construction of the Schrödinger representation. The latter may thus also serve as an initial pointer for further generalizations towards more non-linear classes of theories.

Coming back to the two special features of the Schrödinger representation mentioned above, these are both exploited in the Schrödinger-Feynman quantization scheme for the general boundary formulation of quantum theory \cite{Oe:boundary,Oe:gbqft,Oe:kgtl}. In particular, field propagators, which are essentially integral kernels of transition amplitudes, are (heuristically) independent of a choice of vacuum. However, this quantization scheme, while quite successful in certain contexts (see for example \cite{Oe:timelike,Oe:kgtl,CoOe:spsmatrix,CoOe:smatrixgbf,CoOe:2deucl}) has been limited in others due to a lack of rigor. The present work should help improve this situation. For example, for linear and affine field theories rigorous quantization schemes targeting the general boundary formulation have been devised using the holomorphic representation \cite{Oe:holomorphic,Oe:affine}. These are in part (more or less) secretly based on the Feynman path integral. With the present work it will finally be possible to make this connection completely explicit.

\subsection*{Acknowledgments}

I would like to thank Daniele Colosi for many stimulating discussions that have influenced content and presentation, as well as for carefully reading draft versions and alerting me to several minor mistakes in those.

\appendix

\section{Proof of Proposition~\ref{prop:scohdensefd}}
\label{sec:proof}

We need the following adapted version of the Stone-Weierstrass Theorem. Its proof is similar to versions of the Stone-Weierstrass Theorem found in text books such as \cite{Lan:rfanalysis}. Here $\contvi$ denotes the algebra of continuous functions that vanish at infinity. $\contb$ denotes the algebra of continuous functions that are bounded.

\begin{thm}
\label{thm:stonewei}
Let $S$ be a locally compact Hausdorff space and $w\in\contvi(S,\R)$ such that $w(x)>0$ for all $x\in S$. Let $A\subseteq\contb(S,\C)$ be a subalgebra that separates points, vanishes nowhere and is closed under complex conjugation. Then, $w\cdot A$ is dense in $\contvi(S,\C)$ with the topology of uniform convergence.
\end{thm}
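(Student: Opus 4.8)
The plan is to imitate the lattice-based proof of the Stone--Weierstrass theorem (in the form given in \cite{Lan:rfanalysis}), the essential new feature being that $w\cdot A$ is \emph{not} a subalgebra, so the theorem cannot be applied to it directly. First I would reduce to the real case: since $A$ is closed under complex conjugation, the real-valued elements $A_{\R}\defeq\{a\in A: a(S)\subseteq\R\}$ form a real subalgebra of $\contb(S,\R)$ that still separates points and vanishes nowhere (because $\tfrac12(a+\bar a)$ and $\tfrac{1}{2\im}(a-\bar a)$ lie in $A_{\R}$), and $A=A_{\R}+\im A_{\R}$. Approximating the real and imaginary parts of a given $f\in\contvi(S,\C)$ separately, it then suffices to show that $w\cdot A_{\R}$ is dense in $\contvi(S,\R)$. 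Note that each $wa$ with $a\in A_{\R}$ indeed lies in $\contvi(S,\R)$ since $w\in\contvi$ and $a$ is bounded.

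Two ingredients need to be assembled. The first is \emph{two-point interpolation}: for distinct $x,y\in S$ the evaluation map $a\mapsto(a(x),a(y))$ sends $A_{\R}$ onto a subalgebra of $\R^2$ which, by separation of points and non-vanishing, must be all of $\R^2$; since $w(x),w(y)>0$ this lets me realize any prescribed pair of values of the form $wa$ at $x$ and $y$. The second is a \emph{lattice property}. Let $\overline{A_{\R}}$ be the uniform closure of $A_{\R}$ in $\contb(S,\R)$; being a closed subalgebra it is closed under $|\cdot|$ (approximate $|t|$ on a compact interval by polynomials with vanishing constant term, which preserve the algebra) and hence under finite $\max$ and $\min$. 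Because $w>0$, multiplication by $w$ is order-preserving, so $\max(wa_1,wa_2)=w\max(a_1,a_2)$ and likewise for $\min$; consequently the set $w\cdot\overline{A_{\R}}$ is stable under finite $\max$ and $\min$, and it is contained in $\overline{w\cdot A_{\R}}$ since $\|w(a_n-b)\|_\infty\le\|w\|_\infty\,\|a_n-b\|_\infty$. This is the device that replaces the missing algebra structure.

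With these in hand I would run the standard sandwiching argument, the only adaptation being to control behaviour at infinity by passing to the one-point compactification $S^+=S\cup\{\infty\}$, on which every element of $w\cdot\overline{A_{\R}}$ and $f$ extend continuously with value $0$ at $\infty$. Given $f$ and $\epsilon>0$: for fixed $x$, interpolation gives for each $y$ some $h_y\in w\cdot A_{\R}$ with $h_y(x)=f(x)$ and $h_y(y)=f(y)$; the open sets $W_y\defeq\{z\in S^+:(h_y-f)(z)<\epsilon\}$ all contain $\infty$ and cover $S^+$ (the point $z$ being covered by $h_z$), so a finite $\min$ yields $g_x\in w\cdot\overline{A_{\R}}$ with $g_x(x)=f(x)$ and $g_x<f+\epsilon$ everywhere. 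A second covering argument, taking a finite $\max$ of the $g_x$ and again using $S^+$, produces $g\in w\cdot\overline{A_{\R}}\subseteq\overline{w\cdot A_{\R}}$ with $f-\epsilon<g<f+\epsilon$, i.e.\ $\|g-f\|_\infty\le\epsilon$. As $\epsilon$ is arbitrary this gives $f\in\overline{w\cdot A_{\R}}$, and the complex case follows.

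I expect the main obstacle to be precisely the non-algebraic nature of $w\cdot A$: one must observe that positivity of $w$ still yields a lattice on $w\cdot\overline{A_{\R}}$ and, relatedly, resist the temptation to write a general element of $\overline{w\cdot A}$ as $wb$ with $b\in\overline{A_{\R}}$, which can fail where $w$ is small. Working throughout with $w\cdot\overline{A_{\R}}$ rather than with the full closure sidesteps this. The secondary point requiring care is the non-compactness of $S$, handled uniformly by carrying out both covering steps on $S^+$, so that the hypothesis $w\in\contvi$ (not merely $w\in\contb$) enters in an essential way.
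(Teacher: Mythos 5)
Your proof is correct. The paper does not actually write out a proof of this theorem---it only remarks that the argument is a routine adaptation of the textbook lattice-based proof of Stone--Weierstrass as in \cite{Lan:rfanalysis}---and your argument, resting on the observations that positivity of $w$ makes $w\cdot\overline{A_{\R}}$ stable under finite $\max$ and $\min$ (even though $w\cdot A$ is not an algebra) and that both covering steps can be run on the one-point compactification because every $w\cdot b$ with $b$ bounded lies in $\contvi(S,\R)$, is precisely such an adaptation.
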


\begin{proof}[Proof of Proposition~\ref{prop:scohdensefd}]
We first show that every element of $\rL^2(Q,\nu_Q)$ can be approximated arbitrarily by continuous functions with compact support. Since the subspace of $\rL^2(Q,\nu_Q)$ spanned by characteristic functions of measurable sets is dense, it is sufficient to approximate such a characteristic function. Say we consider the measurable set $X$ and fix $\epsilon>0$. Since $\nu_Q$ is a regular Borel measure, there is a compact set $K$ and an open set $U$ such that $K\subseteq X\subseteq U$ and $\nu_Q(K)+\epsilon > \nu_Q(X) > \nu_Q(U) - \epsilon$. By Urysohn's Lemma there is a continuous function $f:Q\to [0,1]$ such that $f(x)=1$ if $x\in K$ and $f(x)=0$ if $x\notin U$. Since $K$ is compact there is $r>0$ such that $K\subset B_r(0)$. Define the continuous function $g:Q\to [0,1]$ as $g(x)\defeq f(x)$ if $x\in B_r(0)$, $g(x)\defeq 0$ if $x\notin B_{r+1}(0)$ and $g(x)\defeq (r+1-\|x\|) f(x)$ if $x\in B_{r+1}(0)\setminus B_{r}(0)$. Then, $\chi_K(x)\le g(x)\le \chi_U(x)$ for all $x\in Q$. (We follow here the widespread custom to denote the characteristic function of a set $Z$ by $\chi_Z$.) So $(\chi_X-g)(x)=0$ if $x\in K\cup (Q\setminus U)$ while $|(\chi_X-g)(x)|\le 1$ if $x\in U\setminus K$. But $\nu_Q(U\setminus K)< 2\epsilon$. So $\|\chi_X-g\|_2< \sqrt{2\epsilon}$, where $g$ is continuous and has compact support.

Define a measure on $Q$ denoted $\nu'_Q$ and given by
\begin{equation}
\xd\nu'_Q\defeq \frac{1}{w^2} \xd\nu_Q .
\end{equation}
Here $w:Q\to \R^+$ is the continuous function given by
\begin{equation}
 w(\phi)\defeq 2^{n/4} \exp\left(-\frac{1}{4}g(j(\phi),j(\phi))\right) ,
\end{equation}
where $n$ is the dimension of $Q$. $\nu'_Q$ is easily verified to be a probability measure. Note also that $w$ is strictly positive and vanishes at infinity.

Define $A$ to be the vector subspace of $\rL^2(Q,\nu_Q)$ spanned by the functions $\cohnsr_{\tau}$ given by expression (\ref{eq:swvcoh}) for $\tau\in M$. It is easy to verify that $A$ is an algebra of bounded continuous functions that separates points, vanishes nowhere and is closed under complex conjugation.

Suppose finally that $\psi\in\cH^{\sr}$. We shall show that $\psi$ can be arbitrarily approximated in $\rL^2(Q,\mu_Q)$ by elements of $A\cdot \cohs_0$, which is the subspace of $\cH^{\src}$ generated by the coherent states associated to elements in the ``momentum'' subspace $M$ of the space $L$ of solutions. Using the decomposition (\ref{eq:factswv}) this is equivalent to showing that $\tilde{\psi}$ can be arbitrarily approximated in $\rL^2(Q,\nu_Q)$ by elements of $A$. In view of the first part of the proof we might assume without loss of generality that $\widetilde{\psi}$ is continuous and has compact support. We note that $\widetilde{\psi}\cdot w$ is thus also continuous and has compact support, so in particular, vanishes at infinity. Let $\epsilon>0$. By Theorem~\ref{thm:stonewei} there exists $a\in A$ such that for all $\phi\in Q$,
\begin{equation}
 |(\widetilde{\psi}\cdot w- a\cdot w)(\phi)|<\epsilon .
\end{equation}
This implies,
\begin{equation}
\|\widetilde{\psi}-a\|_{\nu_Q,2}^2 =
\int |\widetilde{\psi}-a|^2 \,\xd\nu_Q =
\int |\widetilde{\psi}\cdot w- a\cdot w|^2\,\xd\nu'_Q <\epsilon^2 .
\end{equation}
In particular, $\|\widetilde{\psi}-a\|_{\nu_Q,2}<\epsilon$. This completes the proof.
\end{proof}

\section{Relation to the Segal-Bargmann transform}
\label{sec:bst}

In the context of Section~\ref{sec:pisom} consider the special case $N=JM$. Choosing suitable coordinates, the integral transform $\bst$ given by (\ref{eq:deff}) then reproduces the Segal-Bargmann transform as introduced in \cite{Bar:hilbanalytic}, compare expression (2.3). We detail this in the following.

Choose an orthonormal basis $\{\phi_k\}_{k\in\{1,\dots,n\}}$ of the real Hilbert space $Q$. This gives rise to coordinate functions $q_k:Q\to\R$ defined via
\begin{equation}
 \phi=\sum_{k=1}^n q_k(\phi)\, \phi_k ,\qquad\forall \phi\in Q,
\end{equation}
thus identifying $Q$ with $\R^n$. On the other hand, $\{j(\phi_k)\}_{k\in\{1,\dots,n\}}$ is an orthonormal basis of the real Hilbert subspace $J M$ of $L$. We use this to define coordinate functions $x_k:J M\to \R$ via
\begin{equation}
 \xi=\sqrt{2} \sum_{k=1}^n x_k(\xi)\, j(\phi_k) ,\qquad\forall\xi\in J M,
\end{equation}
identifying $JM$ with $\R^n$. Similarly, $\{J j(\phi_k)\}_{k\in\{1,\dots,n\}}$ is an orthonormal basis of $M$ and we define coordinate functions $y_k:M\to \R$ via,
\begin{equation}
 \xi=\sqrt{2} \sum_{k=1}^n y_k(\xi)\, J j(\phi_k) ,\qquad\forall\xi\in M,
\end{equation}
identifying $M$ with $\R^n$. Combining both via $z_k:L\to\C$ defined by $z_k\defeq x_k+\im y_k$ coordinatizes $L$ as an $n$-dimensional complex vector space in agreement with its complex structure $J$.

The normalization (\ref{eq:muq}) means that in the chosen coordinates $\mu_Q$ becomes the measure $\pi^{-n/2}\xd^n q$ on $\R^n$, where $\xd^n q$ denotes the standard Lebesgue measure. This differs from the conventions in \cite{Bar:hilbanalytic} by the factor $\pi^{-n/2}$. On the other hand, $\nu_L$ determined by (\ref{eq:fdhmeasure}) leads in the chosen coordinates to the measure $\pi^{-n}\exp(-\sum_k |z_k|^2)\, \xd^n x\, \xd^n y$ on $\C^n$. This agrees exactly with the conventions in \cite{Bar:hilbanalytic}.
As is easily verified, the kernel $\bsf$ of the integral transform given by (\ref{eq:sbt}) takes the coordinate form,
\begin{equation}
 B(z,q)=\exp\left(\sqrt{2}\sum_{k=1}^n q_k z_k-\frac{1}{2}\sum_{k=1}^n q_k^2-\frac{1}{2}\sum_{k=1}^n z_k^2\right) .
\end{equation}
Up to the missing normalization factor $\pi^{-n/4}$, this coincides precisely with the corresponding expression (2.1) of \cite{Bar:hilbanalytic}. This relative normalization factor accounts precisely for the different normalization of square-integrable functions on $\R^n$ induced by the relative factor $\pi^{-n/2}$ in the measure, mentioned above. The inverse transform (\ref{eq:deffinv}), which is not everywhere defined, corresponds to expression (2.11) in \cite{Bar:hilbanalytic}.

\bibliographystyle{amsordx} 
\bibliography{stdrefs}
\end{document}